\renewcommand{\paragraph}[1]{\noindent{\bf #1.} }
\numberwithin{equation}{section}
\newcommand{\matr}[1]{\mathbf{#1}}
\renewcommand{\vec}[1]{\mathbf{#1}}
\newcommand{\reach}{\Rightarrow^*}
\theoremstyle{plain}}
    \newtheorem{conjecture}[theorem]{Conjecture}
    \newtheorem{conj}[theorem]{Conjecture}
    \newtheorem{cor}[theorem]{Corollary}
    \newtheorem{lem}[theorem]{Lemma}
   \newtheorem{thm}[theorem]{Theorem}
   \newtheorem{defn}[theorem]{Definition}
    \newtheorem{proposition}[theorem]{Proposition}
    \newtheorem{prop}[theorem]{Proposition}
    \newtheorem{problem}[theorem]{Problem}
    \newtheorem{prob}[theorem]{Problem}
    \newtheorem{notation}[theorem]{Notation}
    \newtheorem{nt}[theorem]{Notation}
    \newtheorem{ex}[theorem]{Example}
    \newtheorem{rem}[theorem]{Remark}
    \newtheorem{obs}[theorem]{Observation}
    \newtheorem{theorem}{Theorem}
    \numberwithin{equation}{section}
    \numberwithin{theorem}{section}
    {\bfseries}{\itshape}
    {\bfseries}{\itshape}
    \newtheorem{obs}[theorem]{Observation}{\bfseries}{\itshape}
    {\bfseries}{\itshape}
    \newtheorem{cor}[theorem]{Corollary}{\bfseries}{\itshape}
    {\bfseries}{\itshape}
    \newtheorem{prop}[theorem]{Proposition}{\bfseries}{\itshape}
    {\bfseries}{\itshape}
    \newtheorem{prob}[theorem]{Problem}{\bfseries}{\itshape}
    {\bfseries}{\itshape}
    \newtheorem{nt}[theorem]{Notation}{\bfseries}{\itshape}
     {\bfseries}{\itshape}
    \newtheorem{conj}[theorem]{Conjecture}{\bfseries}{\itshape}
    {\bfseries}{\itshape}
    \newtheorem{lem}[theorem]{Lemma}{\bfseries}{\itshape}
   {\bfseries}{\itshape}
   \newtheorem{defn}[theorem]{Definition}{\bfseries}{\itshape}
    \newtheorem{rem}[theorem]{Remark}{\bfseries}{\itshape}
    {\bfseries}{\itshape}
    {\bfseries}{\itshape}
    \newtheorem{example}[theorem]{Example}{\bfseries}{\itshape}
\newcommand{\N}{\mathbb{N}}
\newcommand{\Z}{\mathbb{Z}}
\newcommand{\R}{\mathbb{R}}
\newcommand{\Q}{\mathbb{Q}}
\newcommand{\calC}{{\mathcal{C}}}
\newcommand{\tme}{\tilde{\mathcal{E}}}
\newcommand{\E}{\mathcal{E}}
\newcommand{\rag}{\rangle}
\newcommand{\lag}{\langle}
\newcommand{\vx}{\vec{x}}
\newcommand{\vp}{\vec{p}}
\newcommand{\vr}{\vec{r}}
\newcommand{\vv}{\vec{v}}
\newcommand{\ve}{\vec{e}}
\newcommand{\NP}{\mathsf{NP}}
\newcommand{\Pspace}{\mathsf{PSPACE}}
\newcommand{\Complete}{\mathsf{complete}}
\newcommand{\Space}{\textbf{\textrm{SPACE}}}
\newcommand{\range}{\textrm{range}}
\newcommand{\SubsetFixedAtomic}{\textsc{Subset-Fixed-Atomic}}
\newcommand{\SubsetAtomic}{\textsc{Subset-Atomic}}
\newcommand{\ReachablyFixedAtomic}{\textsc{Reachably-Fixed-Atomic}}
\newcommand{\MonotoneOneThree}{\textsc{Monotone-1-In-3-Sat}}
\newcommand{\ReachablyAtomic}{\textsc{Reachably-Atomic}}
\newcommand{\RevReachablyAtomic}{\textsc{Reversibly-Reachably-Atomic}}
\newcommand{\ReachableReach}{\textsc{Reachable-Reach}}
\newcommand{\IP}{\textsc{IP}}
\def\longrightharpoonup{\relbar\joinrel\rightharpoonup}
\def\longleftharpoondown{\leftharpoondown\joinrel\relbar}
\def\longrightleftharpoons{\mathop{\vcenter{\hbox{\ooalign{\raise1pt\hbox{$\longrightharpoonup\joinrel$}\crcr\lower1pt\hbox{$\longleftharpoondown\joinrel$}}}}}}
\def\revrxn{\mathop{\rightleftharpoons}\limits}
\title{Computational Complexity of Atomic Chemical Reaction Networks\footnote{This work was supported by NSF grant 1619343.}}
\titlerunning{Computational Complexity of Atomic Chemical Reaction Networks} 
\author[1]{David Doty}
\author[1]{Shaopeng Zhu}
\affil[1]{University of California, Davis, Computer Science Department, CA, USA\\
  \texttt{doty@ucdavis.edu,spzhu@ucdavis.edu}}
\authorrunning{D. Doty and S. Zhu} 
\subjclass{F.1.1}
\keywords{chemical reaction network, atomic, computational complexity}
\title{Computational Complexity of Atomic Chemical Reaction Networks}
\author{David Doty \\ \texttt{doty@ucdavis.edu} \and Shaopeng Zhu \\ \texttt{spzhu@ucdavis.edu}}
\begin{document}

\opt{subn}{
\title{Computational Complexity of Atomic Chemical Reaction Networks\footnote{This work was supported by NSF grant 1619343.}}
\titlerunning{Computational Complexity of Atomic Chemical Reaction Networks} 
\toctitle{Computational Complexity of Atomic Chemical Reaction Networks}
\author{David Doty\inst{1} \and Shaopeng Zhu\inst{2}}

\authorrunning{D. Doty and S. Zhu} 
\tocauthor{D. Doty and S. Zhu}

\institute{University of California, Davis, Computer Science Department, \email{doty@ucdavis.edu}\and University of Maryland, College Park, Computer Science Department,
\email{szhu@terpmail.umd.edu}}

}

\opt{subn}{
    \spnewtheorem{conj}[theorem]{Conjecture}{\bfseries}{\itshape}
    \spnewtheorem{cor}[theorem]{Corollary}{\bfseries}{\itshape}
    \spnewtheorem{prop}[theorem]{Proposition}{\bfseries}{\itshape}
   \spnewtheorem{lem}[theorem]{Lemma}{\bfseries}{\itshape}
   \spnewtheorem{thm}[theorem]{Theorem}{\bfseries}{\itshape}
   \spnewtheorem{defn}[theorem]{Definition}{\bfseries}{\itshape}
    \spnewtheorem{prob}[theorem]{Problem}{\bfseries}{\itshape}
    \spnewtheorem{notation}[theorem]{Notation}{\bfseries}{\itshape}
    \spnewtheorem{nt}[theorem]{Notation}{\bfseries}{\itshape}
    \spnewtheorem{ex}[theorem]{Example}{\bfseries}{\itshape}
    \spnewtheorem{rem}[theorem]{Remark}{\bfseries}{\itshape}
    \spnewtheorem{obs}[theorem]{Observation}{\bfseries}{\itshape}
    }

\maketitle

\begin{abstract}

    Informally, a chemical reaction network is ``atomic'' if each reaction may be interpreted as the rearrangement of indivisible units of matter.
    There are several reasonable definitions formalizing this idea.
    We investigate the computational complexity of deciding whether a given network is atomic according to each of these definitions.
    
    \textbf{\emph{Primitive atomic}}, which requires each reaction to preserve the total number of atoms, 
    is shown to be equivalent to mass conservation. 
    Since it is known that it can be decided in polynomial time whether a given chemical reaction network is mass-conserving~\cite{mayr2014framework}, the equivalence we show gives an efficient algorithm to decide primitive atomicity.
   
   \textbf{\emph{Subset atomic}} further requires 
   all atoms 
   be species.
    We show that deciding 
    if a 
    network is subset atomic is in $\NP$, and 
    ``whether a network is subset atomic with respect to a given atom set'' is strongly $\NP$-$\Complete$.
   
    \textbf{\emph{Reachably atomic}},
    studied by Adleman, Gopalkrishnan \emph{et al.}~\cite{Adleman2014,ManojPrivateCommunication}, 
    further requires that each species has a sequence of reactions splitting it into its constituent atoms. 
    Using a combinatorial argument, we show that there is a \textbf{polynomial-time} algorithm to decide whether a given network is reachably atomic, 
    improving upon the result of Adleman \emph{et al.} that the problem is \textbf{decidable.} 
    We show that the reachability problem for reachably atomic networks is $\Pspace$-$\Complete$. 
    
    Finally, we demonstrate equivalence relationships between our definitions and some 
    cases of 
    an existing definition of atomicity due to Gnacadja~\cite{gnacadja2011reachability}.
 \end{abstract}

\section{Introduction}\label{introintro}
A \emph{chemical reaction network} is a set of reactions such as $A+B \revrxn C$ and $X \to 2Y$,
intended to model molecular \emph{species} that interact, possibly combining or splitting in the process.
For 150 years~\cite{guldbergwaage1864}, 
the model has been a popular language for describing natural chemicals that react in a well-mixed solution.
It is known that in theory \emph{any} set of reactions can be implemented by synthetic DNA complexes~\cite{SolSeeWin10}.
Syntactically equivalent to Petri nets~\cite{recrn, angeli2007petri, esparza2016verification}, chemical reaction networks are now equally appropriate as a \emph{programming} language that can be compiled into real chemicals.
With advances in synthetic biology heralding a new era of sophisticated biomolecular engineering~\cite{chen2013programmable, srinivas2015programming, PadiracFujiiRondelez2013, montagne2011programming, SeeSolZhaWin06, qian2011scaling, qian2011neural}, 
chemical reaction networks are expected to gain prominence as a natural high-level language for designing molecular control circuitry. 

There has been a flurry of recent progress in understanding the ability of chemical reaction networks to carry out computation: 
computing functions~\cite{CheDotSolNaCo, DotHajLDCRNNaCo, CheDotSol14, recrn, p1ccrnJournal, salehi2016chemical, salehi2015markov, esparza2016verification, DotyTCRN2014, SpeedFaultsDIST, SolCooWinBru08, AlistarhAEGR2017}, 
as well as other computational tasks such as 
space- and energy-efficient search~\cite{ThaCon12}, 
signal processing~\cite{salehi2014asynchronous, jiang2013discrete}, 
linear I/O systems~\cite{oishi2011biomolecular}, 
machine learning~\cite{napp2013message},
and even identifying function computation in existing biological chemical reaction networks~\cite{cardelli2012cell}.
These studies generally assume that any set of reactions is permissible, 
but not all are physically realistic.
Consider, for example, the reaction $X \to 2X$,
which appears to violate the law of conservation of mass.
Typically such a reaction is a shorthand for a more realistic reaction such as $F + X \to 2X$, 
where $F$ is an anonymous and plentiful source of ``fuel'' providing the necessary matter for the reaction to occur.
The behavior of the two is approximately equal only when the number of executions of $X \to 2X$ is far below the supplied amount of $F$, and if $F$ runs out then the two reactions behave completely differently.
Thus, although $X \to 2X$ may be implemented approximately, to truly understand the long-term behavior of the system requires studying its more realistic implementation $F+X \to 2X$.
A straightforward generalization of this ``realism'' constraint 
is that each chemical species $S$ may be assigned a \emph{mass} $m(S) \in \R^+$, 
where in each reaction the total mass of the reactants equals that of the products.
Indeed, \emph{conservative} Petri nets formalize this very idea~\cite{mayr2014framework, lien1976note},
and it is straightforward to decide algorithmically if a given network is conservative by reducing to a question of linear algebra.

The focus of this paper is a more stringent condition: 
that the network should be \emph{atomic}, 
i.e., each reaction rearranges discrete, indivisible units (atoms), 
which may be of different noninterchangeable types.\footnote{This usage of the term ``atomic'' is different from its usage in traditional areas like operating system or syntactic analysis, where an ``atomic'' execution is an uninterruptable unit of operation~\cite{silberschatz2013operating}.}
(In contrast, mass conservation requires each reaction to rearrange a conserved quantity of continuous, generic ``mass''.)
We emphasize that this is not intended as a study of the atoms appearing in the periodic table of the elements.
Instead, we aim to model chemical systems whose reactions rearrange certain units, 
but never split, create, or destroy those units.
For example, DNA strand displacement systems~\cite{SolSeeWin10, yurke2000dna} 
have individual DNA strands as indivisible components,
and each reaction merely rearranges the secondary structure among the strands 
(i.e., which bases on the strands are hybridized to others).

Contrary to the idea of mass conservation,
there is no ``obviously correct'' definition of what it means for a chemical reaction network to be atomic, 
as we will discuss.
Furthermore, at least two inequivalent definitions exist in the literature~\cite{Adleman2014, gnacadja2011reachability}.
It is not the goal of this paper to identify a single correct definition.
Instead, our goal is to evaluate the choices that must be made in formalizing a definition, 
to place existing and new definitions in this context to see how they relate to each other, 
and to study the computational complexity of deciding whether a given network is atomic.
This is a step towards a more broad study of the computational abilities of ``physically realistic'' chemical reaction networks.


\opt{normal}{
    Consider the following system with $2$ (real world) reactions for example:

    \begin{eqnarray*}
    \textrm{H}_2\textrm{SO}_4 + \textrm{Cu(OH)}_2&\rightarrow& 2\textrm{H}_2\textrm{O} + \textrm{CuSO}_4\\
    \textrm{Fe} + \textrm{CuSO}_4 &\rightarrow& \textrm{FeSO}_4 + \textrm{Cu}
    \end{eqnarray*}

    If we ignore the change of oxidation number in the second reaction (which Chemical Reaction Networks will not be modelling anyway), we may as well just let the ``atom'' set-- set of invariant component in this reaction system -- be 
    \begin{eqnarray*}
    \Delta_1 &=& \{X:=\textrm{H}, Y:=\textrm{OH}, Z:=\textrm{Cu}, U:= \textrm{Fe}, W:= \textrm{SO}_4\}
    \end{eqnarray*} where we define all \emph{indecomposable components} in the \emph{specific} reaction system as atoms; and the let the molecule\footnote{In this abstract model when we call  $\textrm{H}_2\textrm{SO}_4$ and $\textrm{CuSO}_4$ ``molecules'', we also ignore the caveat that substance like $\textrm{H}_2\textrm{SO}_4$ and $\textrm{CuSO}_4$ does not exist as an actual molecule (in the chemical sense) in water solution.} 
    set be: 
    \begin{eqnarray*}
    M&=&\{S:=\textrm{H}_2\textrm{SO}_4, T:= \textrm{Cu(OH)}_2, G:= \textrm{H}_2\textrm{O}, N:=\textrm{CuSO}_4, P:=\textrm{FeSO}_4\}
    \end{eqnarray*}
    and rewrite the reaction system as 
    \begin{eqnarray*}
    S+T&\rightarrow& 2G + N\\
    U + N &\rightarrow& P + Z
    \end{eqnarray*}
    and note that the collection of decompositions (i.e., atomic makeups): 
    \begin{eqnarray*}
    \{S&\mapsto& \{2X,1W\}, T\mapsto \{1Z,2Y\}, G\mapsto\{1X,1Y\}, \\N&\mapsto& \{1Z,1W\}, P\mapsto\{1U,1W\},A\mapsto\{1A\} (\forall A\in \Delta_1) \}
    \end{eqnarray*}
    will maintain all the decomposition information in the two-reaction system. Instead of using the entire periodic table 
    , \emph{we now shrink the size of atom set into $5$}. Note, also, that we may alternatively define $\Delta$ in a more ``canonical'' way---the subset of elements in the periodic table that are actually used in the system, which in this case will be $\Delta_2=\{\textrm{H,O,S,Cu,Fe}\}$, and the corresponding set of decompositions shall be:
    \begin{eqnarray*}
    \{S&\mapsto& \{2\textrm{H},4\textrm{O},1\textrm{S}\}, T\mapsto \{2\textrm{H}, 2\textrm{O},1\textrm{Cu}\}, G\mapsto\{2\textrm{H},1\textrm{O}\},\\ N&\mapsto& \{4\textrm{O},1\textrm{S},1\textrm{Cu}\}, P\mapsto\{4\textrm{O},1\textrm{S},1\textrm{Fe}\},A\mapsto\{1A\} (\forall A\in \Delta_2) \}
    \end{eqnarray*}
    which shows that \emph{we may have multiple choices of ``atoms'' set} and corresponding set of decompositions when modelling the chemical reactions.
}



\subsection{Summary of Results and Connection with Existing Work}\label{previous111}

The most directly related previous work is that of Adleman, Gopalkrishnan, Huang, Moisset, and Reishus~\cite{Adleman2014} and of Gnacadja~\cite{gnacadja2011reachability}, which we now discuss in conjunction with our results.

We identify two fundamental questions to be made in formalizing a definition of an ``atomic'' chemical reaction network:

\begin{enumerate}
  \item
  \label{ques:atoms-are-species} 
  Are atoms required to be species? 
  (for example, if the only reaction is $2\textrm{H}_2 + \textrm{O}_2 \revrxn 2\textrm{H}_2\textrm{O}$; 
  then H and O are atoms but not species that appear in a reaction)
  
  \item
  \label{ques:species-reachable-to-atoms} 
  Is each species required to be separable into its constituent atoms via reactions?
\end{enumerate}

A negative answer to~\eqref{ques:atoms-are-species} implies a negative answer to~\eqref{ques:species-reachable-to-atoms}.
(If some atom is not a species, then it cannot be the product of a reaction.)
Thus there are three non-degenerate answers to the above two questions:
no/no,
yes/no,
and
yes/yes.
We respectively call these 
\emph{primitive atomic},
\emph{subset atomic},
and 
\emph{reachably atomic},
defined formally in Section~\ref{newdefns}.
Intuitively, a network is \emph{primitive atomic} if each species may be interpreted as composed of one or more atoms, which themselves are not considered species (a species can be composed of just a single atom, but they will have different ``names'').
More formally, if $\Lambda$ is the set of species, there is a set $\Delta$ of atoms, 
such that each species $S \in \Lambda$ has an \emph{atomic decomposition} $\vec{d}_S \in \N^{\Delta} \setminus \{\vec{0}\}$ describing the atoms that constitute $S$, 
such that each reaction preserves the atoms.
A network is \emph{subset atomic} if it is primitive atomic and the atoms are themselves considered species; i.e., if $\Delta \subseteq \Lambda$.
A network is \emph{reachably atomic} if it is subset atomic, and furthermore, for each species $S \in \Lambda$, there is a sequence of reactions, starting with a single copy of $S$, resulting in a configuration consisting only of atoms. (If each reaction conserves the atomic count, then this configuration must be unique and equal to the atomic decomposition of $S$.)

A long-standing open problem in the theory of chemical reaction networks is the global attractor conjecture~\cite{craciun2009toric, horn1974}, of which even the following special case remains open: is every network satisfying detailed balance \emph{persistence}, i.e., if started with all positive concentrations, do concentrations stay bounded away from 0?
Adleman, Gopalkrishnan, Huang, Moisset, and Reishus~\cite{Adleman2014} defined reachably atomic chemical reaction networks and proved the global attractor conjecture holds for such networks.
Gnacadja~\cite{gnacadja2011reachability}, attacking similar goals, 
defined a notion of atomicity called ``species decomposition'' and showed a similar result.
We establish links between our definitions and those of both~\cite{gnacadja2011reachability} and~\cite{Adleman2014} in Section \ref{acc}. 
We discuss related complexity issues in Sections~\ref{52cva} and~\ref{acc}. 
In particular, Adleman \emph{et al.}~\cite{Adleman2014} showed that it is decidable whether a given network is reachably atomic. 
This is not obvious since the condition of a species being separable into its constituent atoms via reactions appears to require an unbounded search.
We improve this result, showing it is decidable in polynomial time.

Mayr and Weihmann~\cite{mayr2014framework} proved that configuration reachability graphs for mass conserving chemical reaction networks (i.e., conservative Petri nets) are at most exponentially large in the size of the binary representation of the network, 
implying via Savitch's theorem~\cite{savitch1970relationships} a polynomial-space algorithm for deciding reachability in mass-conserving networks.
We use these results in analyzing the complexity of reachability problems in reachably atomic chemical reaction networks in Section~\ref{52cva}.

It is clear that any reasonable definition of atomicity should imply mass conservation: simply assign all atoms to have mass 1, noting that any reaction preserving the atoms necessarily preserves their total count.
Perhaps surprisingly, the conditions of primitive atomic and mass-conserving are in fact equivalent, so it is decidable in polynomial time whether a network is primitive atomic and what is an atomic decomposition for each species.
A key technical tool is Chubanov's algorithm~\cite{chubanov2015polynomial} for finding exact rational solutions to systems of linear equations with a strict positivity constraint.

In their work on autocatalysis of reaction networks \cite{deshpande2013autocatalysis}, Abhishek and Manoj showed that a consistent reaction network is self-replicable if and only if it is critical. Since weak-reversibility implies consistency and our definition of 
reversibility implies weak-reversibility
, we obtain the following equivalence:
let a chemical reaction network $\calC$ be reversible. Then $\calC$ is mass conserving if and only if there does not exist $\vec{c}_1 < \vec{c}_2 \in\N^{\Lambda}$ 
    such that $\vec{c}_1 \Rightarrow^* \vec{c}_2$.

Lastly, we note that there have been other models addressing different aspects of atomicity (not necessarily using the term ``atomic'').
They focus on features of chemical reaction networks not modeled in this paper. \opt{sub,subn}{For discussions on these works, please see Section \ref{relatedbutnotsame} of the Appendix.}

\newcommand{\Egrelatedbutnotsameone}{For example,  Johnson, Dong and Winfree~\cite{johnson2016verifying} study, using concept called \emph{bisimulation}, 
the complexity of deciding whether one network $\mathcal{N}_1$ ``implements'' another $\mathcal{N}_2$.
The basic idea involves assigning species $S$ in $\mathcal{N}_1$ to represent sets of species $\{A_1,\ldots,A_k\}$ in $\mathcal{N}_2$. 
Thus $S$ may be intuitively thought of as being ``composed'' of $A_1,\ldots,A_k$; 
however, an allowed decomposition is $\emptyset$, 
to account for the fact that some species in $\mathcal{N}_1$ have the goal of mediating interactions between species in $\mathcal{N}_2$  without ``representing'' any of them.
Molecules, on the other hand, are always composed of at least one atom.

}

\opt{normal}{
    \Egrelatedbutnotsameone
    \footnote {One may tend to think at first glance that, for a given chemical reaction network $\calC = (\Lambda, R)$ atomic with respect to $\Delta$ via decomposition matrix $\matr{D} = (\vec{d}_S)_{S\in\Lambda}$, the mapping \begin{eqnarray*}\vec{d}:\Lambda&\rightarrow& \N^\Lambda\\S&\mapsto& \vec{d}_S,\end{eqnarray*} as defined in our future chapters, gives an interpretation between $\calC = (\Lambda, R)$ and $\calC' = (\Delta, R')$ where $R'\subseteq\N^{\Delta}$ is the natural linear extension of $R\subseteq\N^{\Delta}$ -- that is, for all $(\vec{r},\vec{p})\in R$, construct $(\vec{r}',\vec{p}')\in R'$ s.t. $\vec{r}'=\sum_{S\in [\vec{r}]}\vec{d}_S, \vec{p}'=\sum_{S\in [\vec{p}]}\vec{d}_S$. For example, when $\calC = (\{S_1,S_2\}, \{S_1\rightarrow 2S_2\})$, $\calC'=(\{S_2\}, \{2S_2\rightarrow 2S_2\})$, and the interpretation would be $m=\{(S_1,2S_2),(S_2,S_2)\}$. (To avoid confusion one may also relabel the species in $\calC'$: for example, in the previous case, $\calC' = (\{S_2^{'}\},\{2S_2^{'}\rightarrow 2S_2^{'}\})$ and $m=\{(S_1,2S_2^{'}),(S_2,S_2^{'})\}$.

    But this reduction does not necessarily work, as the correct interpretation defined in \cite{johnson2016verifying} requires ``equivalence of trajectories''. Consider a slightly modified example from \cite{johnson2016verifying}: let $\calC = \{\Lambda, R\}$ where $\Lambda = \{A,B_1,B_2,C,D\}$ and $R=$
    \begin{eqnarray*}A&\rightarrow& B_1\\A&\rightarrow&B_2\\B_1&\rightarrow& C\\C&\rightarrow& A\\\forall S\in\Lambda \textrm{ where }S\neq D:\ S&\rightarrow & 3D
    \end{eqnarray*} 
    
    By our conjecture above, this would give an interpretation from $\calC$ to $\calC'$, where $\calC'=(\{D'\},\{3D'\rightarrow 3D'\})$ and $m=\{(A,3D'),(B_1,3D'),(B_2,3D'),(C,3D'),(D,D')\}$. Yet this is NOT a correct interpretation, for, as \cite{johnson2016verifying} pointed out, $\calC$ is subject to deadlock yet $\calC'$ is not.}
}
\newcommand{\Egrelatedbutnotsametwo}{While studying catalysis in reaction networks, Manoj \cite{gopalkrishnan2010catalysis} developed another definition of ``atomic'' requiring that each non-isolated \emph{complex} ``decompose'' uniquely and explicitly to its atomic decomposition in a weakly reversible network (variables in isolated complexes are atoms). Our definition of ``reversibly reachably atomic'' is stronger than this.
\footnote{Consider the network consisting of a single reaction $X+Y\rightleftharpoons 4Z$, which is ``atomic'' by the definition in \cite{gopalkrishnan2010catalysis} but not reversibly reachably atomic. On the other hand, for reversibly reachably atomic networks one may explicitly decompose each molecular species and obtain the unique atomic decomposition for each complex, so reversibly reachably atomic implies ``atomic'' in \cite{gopalkrishnan2010catalysis}.} 
Manoj \cite{gopalkrishnan2010catalysis,ManojPrivateCommunication} went ahead to prove that the ideal $(\mathcal{E}_G)$ generated by the event binomials is prime for ``atomic'' networks.\footnote{For example, $(\mathcal{E}_G) = (XY-Z^4)$ for the single-reaction network $X+Y\rightleftharpoons 4Z$.}

Mann, Nahar, Schnorr, Backofen, Stadler, and Flamm~\cite{mann2014atom} studied a notion of atomicity based on the idea that shared electrons between atoms must be conserved.
This model is more chemically detailed than ours, but also more limited, being unable to model higher-level notions of ``shared subcomponents'' such as DNA strands whose secondary structure, but not their primary structure, is altered by strand displacement reactions~\cite{SolSeeWin10, yurke2000dna}. Mercedes and Alicia \cite{millan2016structure} explored ``MESSI'' networks, a biological network where all reactions are mono or bi-molecular, and the set of non-intermediate species can be partitioned into isomeric classes. On contrary, isomers are not allowed for atoms in our model, while we do not restrict reactions to be mono or bimolecular.
Tapia and Faed~\cite{tapia2013atomizer} developed software for finding atomic decompositions in a similarly detailed setting motivated by specific biochemical experiments.

}

\opt{normal}{
    \Egrelatedbutnotsametwo
    \subsection{Structure}\label{subsection02}
    With the expectation to fully study the computational power as well as other computationally interesting properties of atomic chemical reaction networks in future, we decided to focus this paper on \emph{different definitions of atomicity} and the \emph{computational complexity for deciding atomicity} under each definition.

    Section \ref{prelim123} provides some definitions and notations as background for subsequent sections.

    Section \ref{newdefns} introduces the core concepts of this paper: \emph{primitive atomicity, subset atomicity, reachable atomicity and reversibly-reachable atomicity}. Intuitively and informally, \emph{primitive atomicity} requires that 
    
    \begin{enumerate}
        \item There exists a decomposition for each species into indecomposable components such that each reaction preserves the total count of each type of indecomposable components.    
        \item Each \emph{composite species} can be decomposed into the linear combination of indecomposable components (and thus its decomposition should contain at least $2$ units of indecomposable components), while each \emph{elementary species} --- which are indecomposable components that are themselves species, should be its own ``decomposition''. And, 
        \item Each elementary species appears in the decomposition of at least one composite species; thus no elementary species is ``redundant''.
    \end{enumerate}

    It is called ``primitive'' because it captures our primitive sense of what properties a chemical reaction network with indecomposable components (``atoms'') should have. \emph{Subset atomicity}, on the other hand, requires all the above as well as an additional condition: all indecomposable components are themselves elementary species. As mentioned in previous sections, \emph{reachable-atomicity} further requires that each composite species can be actually decomposed into elementary species via sequences of reactions. Lastly, \emph{reversibly-reachable atomicity} requires in addition to all that are true for \emph{reachably atomic} networks, plus that each composite species can be constructed from its atomic decompositions via sequences of reactions. It will be shown in subsequent sections that the problems of deciding different atomicity type belong to different computation complexities.  

    Section \ref{MCPAEquiv} proves the equivalence between the concept of mass conservation and primitive atomicity. Intuitively, when the set of atoms might be defined independently from the set of species, one may translate between ``a mass assignment to each species that preserves the total mass'' and ``an assignment of number of \emph{same type of atoms} that preserves the total count of atoms'' using some algebraic manipulation of the given assignment.

    Section \ref{CSFA} attends to the complexity of deciding whether a chemical reaction network is subset atomic. We define two languages, $\SubsetFixedAtomic$ and $\SubsetAtomic$, the first describing chemical reaction networks paired with a \emph{given} atom set with respect to which it is indeed subset atomic, and the second describing subset atomic chemical reaction networks without a given atom set. We proved both languages in $\NP$, with the first one $\NP$-\textsf{hard}. We conjecture that the second is also $\NP$-\textsf{hard}. 

    Proof that both languages are in $\NP$ is based on a reduction to the Integer-Programming problem, which is itself in $\NP$. The reduction is necessary because no obvious short witness for $\SubsetFixedAtomic$ and $\SubsetAtomic$ exists due to potentially large size of decomposition. Proof of the strong $\NP$-hardness of $\SubsetFixedAtomic$ is based on a reduction from the ``Monotone $1$-in-$3$'' problem.

    Section \ref{52cva} looks into the complexity of deciding reachable-atomicity as well as the configuration reachability problem under the reachably atomic constraint. By exhibiting a polynomial time bottom-up algorithm based on the special structure of configuration reachability graphs of reachably atomic chemical reaction networks, we solved the complexity of deciding reachable-atomicity; By first applying  Savitch's Theorem and then modifying \cite{mayr2014framework}'s result into a simulation of polynomial space Turing machines with reachably atomic chemical reaction networks, we proved that reachability problem for  reachably atomic chemical reaction networks are $\Pspace$-$\Complete$.

    Section \ref{acc}, as mentioned above, mainly discusses about the relationships between our model and some models established in \cite{gnacadja2011reachability}. By defining the corresponding mappings and then treading through the definitions, we proved that a chemical reaction network is subset atomic if and only if it admits a `core composition', disallows isomerism of elementary species and contains no ``redundant'' elementary species. We then gave two proofs, one by defining an auxilliary subspace of $\R^{n}$ ($n$ being the number of types of elementary species) and arguing its set-containment relationship with other relevant subspaces, and another one by directly applying a Theorem in \cite{gnacadja2011reachability}, that reachable-atomicity implies constructivity of chemical reaction networks. Lastly, we gave a combinatorics-based proof that reversibly-reachable atomcity is equivalent to explicitly-reversibly constructivity plus the restriction that no isomerism is allowed among elementary species.

    Section \ref{openopen} mentioned some open problems left for future works.
}

\section{Preliminaries}\label{prelim123}

Let $\Z,\N,\R$ respectively denote the set of integers, nonnegative integers, and reals.
Let $\Lambda$ be a finite set.
We write $\N^{\Lambda}$ to denote $\{f:\Lambda\rightarrow \N\}$. 
Equivalently, by assuming a ``canonical'' ordering on $\Lambda$, 
an element $\vec{c}\in\N^\Lambda$ can also be viewed as a $|\Lambda|$-dimensional vector of natural numbers, 
with each coordinate labeled by $S\in\Lambda$ interpreted as the count of $S$. 
$\vec{c}\in\N^\Lambda$ interpreted this way is called a \emph{configuration}.
We sometimes use multiset notation, e.g., $\{3A, 2B\}$ to denote the configuration with 3 copies of $A$, 2 of $B$, and 0 of all other species.
$\Z^\Lambda,\R^{\Lambda},\N^{\Lambda\times\Delta},\N^\Delta$ (where $\Delta$ is also a finite set) are defined analogously.

We write $\vec{c} \leq \vec{c}'$
to
denote that $(\forall X\in\Lambda)\ \vec{c}(X) \leq \vec{c}'
(X)$, and $\vec{c} <\vec{c}'$
if $\vec{c}\leq  \vec{c}'$ and $\vec{c}\neq  \vec{c}'$.
We say $\vec{c}$ and $\vec{c}'$ are \emph{incomparable} if $\vec{c} \not\leq \vec{c}'$ and $\vec{c} \not\geq \vec{c}'$.


\begin{defn}\label{CRN}
    Given a finite set of chemical species $\Lambda$, 
    a \emph{reaction}  over $\Lambda$ is a pair $\alpha = (\vec{r},\vec{p}) \in \N^\Lambda \times \N^\Lambda$,
    specifying the stoichiometry of the reactants and products respectively.\footnote{There is typically a positive real-valued \emph{rate constant} associated to each reaction, but we ignore reaction rates in this paper and consequently simplify the definition.}
    
    A \emph{chemical reaction network} is a pair $\calC = (\Lambda, R)$, 
    where $\Lambda$ is a finite set of chemical species, 
    and $R$ is a finite set of reactions over $\Lambda$.
\end{defn}




A chemical reaction network is \emph{reversible} if $(\forall(\vec{r},\vec{p})\in R)\ (\vec{p},\vec{r})\in R$.

For configurations $\vec{c}_1,\vec{c}_2\in \N^\Lambda$, 
we write $\vec{c}_1\Rightarrow^{*}_{\calC} \vec{c}_2$ (read ``$\calC$ reaches $\vec{c}_2$ from $\vec{c}_1$'') if there exists a finite reaction sequence (including the empty sequence) that starts with $\vec{c}_1$ and ends with $\vec{c}_2$. For 
simplicity, 
write $\vec{c}_1\Rightarrow^{*} \vec{c}_2$ (read ``$\vec{c}_2$ is reachable from $\vec{c}_1$'') when $\calC$ is clear. 

\opt{normal}{
\begin{rem}\normalfont{}
    Note that $\vec{r}\Rightarrow^* \vec{r}$ is vacuously true for any $(\vec{r},\vec{p})\in R$, by setting the reaction sequence to be empty (length-$0$). There may also be positive-length reaction sequences for certain networks and certain $(\vec{r},\vec{p})\in R$ to reach $\vec{r}$ from $\vec{r}$: for example, $\calC = (\{S_1,S_2,S_3,S_4\},\{S_1\rightarrow S_2 + S_3, S_2+S_3\rightarrow S_4, S_4\rightarrow S_1 \})$. Consider the reaction $(\{1S_1\},\{1S_2,1S_3\})\in R$, we have $\{1S_1\}\Rightarrow^*\{1S_1\}$ by executing the three reactions in order. 

    Note also that this does not contradict our previous assumption that $(\forall(\vec{r},\vec{p})\in R)\vec{r}\neq\vec{p}$., for $\vec{r}\Rightarrow^*\vec{r}$ does not require that $(\vec{r},\vec{r})\in R$.
\end{rem}
}

\begin{defn}
    Given $\vec{c}\in\N^{\Lambda}$ (or $\Z^\Lambda, \R^\Lambda$ etc. analogously), the \emph{support} of $\vec{c}$, written as $[\vec{c}]$, is the set $\{S\in\Lambda\mid \vec{c}(S)\neq 0\}$.
\end{defn}

A few more notational conventions are listed here: write $\ve_A\in\N^\Lambda$ as the unit vector that has count $1$ on $A\in\Lambda$ and $0$ on everything else. Given a vector $\vx \in \N^\Lambda$, write 
$ \|\vx\| = \|\vx\|_1 = \sum_{S \in \Lambda} \vx(S)$. 
When $\cdot$ is any data, write $\lag\cdot\rag$ for its binary representation as a string, 
so $|\lag\cdot\rag|$ is the length of the binary representation of $\cdot$. Given $f:A\rightarrow B$ and $C\subseteq A$, $f\restriction C$ 
is the function 
$C\rightarrow B$, $c\mapsto f(c)$ ($\forall c\in C$). Lastly, when $\matr{M}$ is a matrix, write $\matr{M}^T$ as its transposition.

\section{Definitions of ``Atomic''}\label{newdefns}

This section addresses definitions of several classes of networks, some computational complexity result of which will be exhibited later.

    
    
    Intuitively, $\calC = (\Lambda, R)$ is primitive atomic if all species can be decomposed into combinations of some atoms.
    Atoms are not required to be 
    species. 
    Each reaction conserves the total count of each type of atom in the species involved 
    (i.e., the reaction can only rearrange atoms but not create or destroy them). 
    
     Note that the purpose of studying the primitive-atomic model (as well as all other types of atomic later) is not to analyze ``real-world'' atoms. Instead, we are trying to study how molecules can be interpreted as decomposable into \emph{exchangeable parts}. In particular, if we know only the reactions but not those exchangeable parts, we are interested in whether the reactions can tell us how the molecules are composed from parts. Proposition \ref{primitivemcequiv} below, for example, shows that this information can be retrieved by finding a mass distribution vector.
    
    
    \begin{defn}[primitive atomic]\label{primitiveatomic}
        Let $\Delta$ be a nonempty finite set and $\calC=(\Lambda,R)$ a chemical reaction network.
        $\calC$ is \emph{primitive atomic with respect to $\Delta$} if 
        for all $S \in \Lambda$,
        there is $\vec{d}_S \in \N^{\Delta} \setminus \{\vec{0}\}$ 
        such that
        
        \begin{enumerate}
            \item
            \label{primitive1} 
            $(\forall (\vec{r},\vec{p}) \in R) (\forall A \in \Delta)
            \sum\limits_{S \in \Lambda} \vec{r}(S) \cdot \vec{d}_{S}(A) = 
            \sum\limits_{S \in \Lambda} \vec{p}(S) \cdot \vec{d}_{S}(A)$
            (reactions preserve atoms), and
            
            
            \item 
            \label{primitive3} 
            $(\forall A\in\Delta)(\exists S\in \Lambda)\ \vec{d}_S(A)\neq 0$.
            (each atom appears in the decomposition of some species)
        \end{enumerate}     
    
        For $S\in\Lambda$, call $\vec{d}_S$ in Condition (\ref{primitive1}) the (atomic) \emph{decomposition} of $S$. 
        We say $\calC$ is \emph{primitive atomic} if there is a nonempty finite set $\Delta$ such that $\calC$ is primitive atomic with respect to $\Delta$. In the cases above, $\Delta$ is called the \emph{set of atoms}.
    \end{defn}

    \newcommand{\RemarkWhyConditionTwo}{\begin{rem}\label{Whyjustwhy}\normalfont{}We included Condition $(2)$ to assist the proof of Lemma \ref{EConservative} in the Appendix. It appears for that proof (and other proofs relevant) to work, it doesn't matter if some atom appears only in the decomposition of a catalyst (a species that explicitly showed up on both sides of the reaction and has the same stochiometric coefficient).\end{rem}}

    Condition~\eqref{primitive1} embodies the intuition above. 
    Condition~\eqref{primitive3} prescribes that each atom appears in the decomposition of at least one species.
    \opt{normal}{\footnote{\RemarkWhyConditionTwo}}\opt{sub,subn}{(See Remark \ref{Whyjustwhy} for some comment on condition 2.)} Consider the network $\mathcal{C}=(\{X,Y,W,Z\},$ $\{((2,1,0,1)^T,(0,0,2,1)^T),((1,2,1,1)^T,(0,1,1,2)^T)\})$. One may write $\calC$ as: $$\{2X + Y + Z\rightarrow 2W + Z,\ X + 2Y + W +Z \rightarrow  Y+W + 2Z.\}$$
         %

$\calC$ is primitive-atomic with respect to, say, $\Delta = \{H,O\}$, via the decomposition vector $\vec{d}_X=(2,0)^T, \vec{d}_Y=(0,2)^T$, $\vec{d}_W=(2,1)^T, \vec{d}_Z=(2,2)^T$. Here $\vec{d}_X = (2,0)^T$ means the species $X$ is composed of $2$ units of atom $H$ and $0$ unit of atom $O$, and $\vec{d}_Y,\vec{d}_W,\vec{d}_Z$ can be interpreted likewise. Observe that each of the two reactions in $\calC$ preserves the total count of each type of atom on both sides of reactions.
    
   Next, we introduce the definitions of stoichiometric matrix and decomposition matrix. 
   In particular, $\matr{A}$ encodes the net change of species caused by execution of one reaction, 
   and $\matr{D}$ compiles all decomposition vectors into one data structure.
  
   \begin{defn}[Stoichiometric Matrix]\label{stoi}
The stoichiometric matrix $\matr{A}$ for a chemical reaction network $\mathcal{C}=(R,\Lambda)$ is the $|R|\times |\Lambda|$ matrix where the entry $\matr{A}_{(\vec{r},\vec{p}),S} = \vec{p}(S)-\vec{r}(S)$ for each $(\vec{r},\vec{p})\in R$ and $S\in \Lambda$. 
\end{defn}

Notation-wise, $\matr{A}_{(\vec{r},\vec{p}),S}$ is the entry whose row is labeled by the reaction $(\vec{r},\vec{p})$ and column by the species $S$. Each row of the stoichiometric matrix represents the change of count of each species via execution of $1$ unit of $(\vec{r},\vec{p})$. 
    \opt{sub,subn}{For more illustration
    , see Example \ref{H20}.} 

\newcommand{\ExampleOfStoichiometricMatrix}{
    \begin{example}\label{H20}\normalfont{} 
        In the network $\mathcal{C}$ mentioned above: 
        \begin{equation*}\left\{\begin{array}{rl}
            2X + Y + Z\rightarrow& 2W + Z    \\
            X + 2Y + W + Z\rightarrow& Y + W + 2Z,
            \end{array}
            \right.
        \end{equation*}
        the stoichiometric matrix $\matr{A}=\begin{bmatrix} -2 & -1 & 2 & 0\\-1  & -1 & 0 & 1 \end{bmatrix}$.
    \end{example}
}

    \opt{normal}{\ExampleOfStoichiometricMatrix}

\begin{defn}[Decomposition Matrix] \label{DecMat}
    Let $\calC=(\Lambda, R)$ be primitive atomic with respect to $\Delta$. 
    The decomposition matrix, denoted as $\matr{D}_{\Delta}$ \emph{for $\calC$ with respect to $\Delta$} 
    is the $|\Lambda|\times |\Delta|$ matrix whose row vectors are $(\vec{d}_S)^T \ (S\in \Lambda)$.
\end{defn}

 Note that the set of decomposition vectors $\{\vec{d}_S\}_{S\in\Lambda}$ is in general not unique for primitive atomic chemical reaction networks -- for example, $A+B\to C$ is primitive atomic with respect to $\Delta = \{D\}$ via $(k,k,2k) (\forall k\in \mathbb{N}_{>0})$. Correspondingly, $\matr{D}_{\Delta}$'s are defined with respect to each set 
 $\{\vec{d}_S\}_{S\in\Lambda}$. 
  \opt{sub,subn}{See Example \ref{H20Revisited} and Remark in \ref{AD} for some more discussion of decomposition matrices.}
  
\newcommand{\EgAndRemarkOnDecompositionMatrix}{
    \begin{rem}\label{AD}\normalfont{} We note that a decomposition matrix has the following properties: 
    \begin{enumerate} 
    \item Let $\vec{c} \in\N^{\Lambda}$ 
    be a configuration for primitive atomic $\calC=(\Lambda, R)$, then the vector $\matr{D}_{\Delta}^{T}\cdot \vec{c}\in \N^\Delta$ illustrates the current count of each atom in the system. That is, $(\matr{D}_{\Delta}^{T}\cdot \vec{c})(A)$, the entry of $\matr{D}_{\Delta}^{T}\cdot \vec{c}$ labeled by $A\in\Delta$, represents the current count of atom $A$ in the system. In fact,  
    for each $A\in\Delta$, 
    \begin{eqnarray}
        \nonumber
        \textrm{current count of $A$ in system} &=& \sum\limits_{S\in\Lambda}\vec{d}_{S}(A)\cdot \vec{c}(S)\\\nonumber
         &=& (\matr{D}^{T}_{\Delta})_{(A,:)}\cdot \vec{c},\nonumber
    \end{eqnarray}
    
    where $(\matr{D}^{T}_{\Delta})_{(A,:)}$ stands for the 
    row vector of $\matr{D}^{T}_{\Delta}$ labeled by $A$. 
    
    \item If $\calC=(\Lambda,R)$ is primitive atomic with respect to $\Delta$, then $\matr{A}\cdot\matr{D}_{\Delta} = \matr{0}$, where $\matr{A}$ is the stoichiometric matrix in Definition \ref{stoi} above, and $\matr{0}$ is the $|R|\times |\Delta|$ zero matrix. Intuitively, this illustrates the fact that the number of each type of atom is preserved throughout all reactions. Indeed, 
    for each $(\vec{r},\vec{p})\in R$, $A\in\Delta$, the entry
    \begin{eqnarray}
            (\nonumber\matr{A}\cdot\matr{D}_{\Delta})_{((\vec{r},\vec{p}), A)} & = & \sum\limits_{S\in\Lambda} (\vec{p}(S)-\vec{r}(S))\cdot \vec{d}_S(A)\\\nonumber
            & = & \sum\limits_{S\in\Lambda} 0\quad\quad\textrm{(by -Atomicity)} \\\nonumber
            & = & 0
    \end{eqnarray}
    
    Conversely, let $\Delta$ be a set and $\matr{D}_{\Delta}$ be a $|\Lambda|\times |\Delta|$ matrix with rows labeled by each $S\in\Lambda$ and columns labeled by each $A\in\Delta$. If $\matr{A}\cdot \matr{D}_{\Delta}=\matr{0}$, then by the same arithmetics above and by Definition \ref{primitiveatomic}, $\calC$ is primitive atomic with respect to $\Delta$. It follows that one may rewrite Condition (\ref{primitive1}) of Definition \ref{primitiveatomic} as:
    \begin{enumerate}
        \item $(\exists 
                \Delta\mid 0< |\Delta| < \infty) (\exists \matr{D}_{\Delta}\in\N^{\Lambda\times\Delta}\mid \forall S\in \Lambda, (\matr{D}_{\Delta})_{(S,:)}\neq (0^\Delta)^T)$ 
            such that $\matr{A}\cdot\matr{D}_{\Delta}=\matr{0}$. 
    \end{enumerate}
    
    We use these two expressions for condition $(1)$ interchangeably in future parts of this paper. Furthermore, when $\calC,\Delta$ is clear from the context, we use $\matr{D}$ as shorthand for $\matr{D}_{\Delta}$, and we say \emph{$\calC$ is primitive atomic with respect to $\Delta$ via $\matr{D}$}.
    \end{enumerate}
    \end{rem}
    
    \begin{example}\label{H20Revisited}\normalfont{}
        For the network $\calC$ used in Example \ref{H20}:
        \begin{equation*}\left\{\begin{array}{rl}
            2X + Y + Z\rightarrow& 2W + Z    \\
            X + 2Y + W + Z\rightarrow& Y + W + 2Z,
            \end{array}
            \right.
        \end{equation*}
        
        and the set of atoms $\Delta = \{H,O\}$, a valid decomposition matrix could be:\begin{equation*}
            \matr{D} = \begin{bmatrix}2 & 0\\ 0 & 2\\ 2 & 1\\ 2 & 2\end{bmatrix},
        \end{equation*} 
    
    With respect to the set of decomposition vectors: $\{\vec{d}_X=(2,0)^T, \vec{d}_Y=(0,2)^T,\vec{d}_Z =(2,1)^T,\vec{d}_W = (2,2)^T\}$. 
    It can also be verfied that \begin{align*}
            \matr{A}\cdot\matr{D} =\begin{bmatrix} -2 & -1 & 2 & 0\\-1  & -1 & 0 & 1 \end{bmatrix}\cdot \begin{bmatrix}2 & 0\\ 0 & 2\\ 2 & 1\\ 2 & 2\end{bmatrix}\\
             = \begin{bmatrix}0 & 0\\0 & 0\end{bmatrix}
        \end{align*} 
        
        So $\calC$ is primitive atomic with respect to $\Delta =\{H,O\}$ via $\matr{D}$.
    \end{example}}

    \opt{normal}{\EgAndRemarkOnDecompositionMatrix}



The next definition requires all atoms to be species. 

\begin{defn}[subset atomic]\label{subatom} 
	Let $\calC=(\Lambda,R)$ be a chemical reaction network and let $\Delta \subseteq \Lambda$ be nonempty.	
	We say that $\calC$ is \emph{subset-$\Delta$-atomic} if $\calC$ is primitive atomic with respect to $\Delta$ and,
	for each $S \in \Lambda$:
	\begin{enumerate}
	    \item $S \in \Lambda\cap \Delta = \Delta \implies \vec{d}_S = \{S\}$, and
	    \item $S \in \Lambda\setminus \Delta \implies \|\vec{d}_S\| \geq 2$.
	\end{enumerate}
	We say $\calC$ is \emph{subset atomic} if 
	$\exists \emptyset\neq \Delta \subseteq \Lambda$ such that $\calC$ is subset-$\Delta$-atomic.
\end{defn}

            %

By definition \ref{subatom}, no two \emph{atoms} can have the same atomic decomposition, 
but it is allowed that two distinct \emph{molecular} (i.e. non-atom) species to have the same decomposition.
In this case we say the two species are \emph{isomers} (reminiscent of isomers in nature that are composed of the same atoms in different geometrical arrangements). 
As for the requirement that each non-atom species decompose to a vector of size at least $2$, that is to incorporate the idea that generally a molecule should be composed of at least $2$ atoms.

For example, the network $\calC = \{2X+Y+Z\to 2W+Z,X+2Y+W+Z\to Y+W+2Z\}$ mentioned above is subset-atomic: just redefine $\Delta = \{X\}$ and $\vec{d}_X= (1),\vec{d}_Y= (2),\vec{d}_Z= (3),\vec{d}_W= (2)$. One may verify that in the first reaction, each side has $7$ atoms $X$, while in the second each side has $10$.

The next definition 
further requires that decomposition of each molecular species $S_i$ can be ``\emph{realized}'' via a sequence of reactions, given $\{1S_i\}$ as initial state. As discussed in Subsection \ref{previous111}, this definition was originally developed in \cite{Adleman2014} to help their approach to the Global Attractor Conjecture in the field of mass action kinetics. Considering the convention for most networks, we relax their requirement of reversibility for each reaction.

        
 

\begin{defn}[reachably atomic]
\label{visat} 
A chemical reaction network $\calC=(\Lambda,R)$ is \emph{reachably atomic} if 
    \begin{enumerate}
        \item\label{vis111} $\calC$ is subset atomic with respect to some $\Delta \subseteq \Lambda$, and
        
        \item\label{vis222} for each $S\in\Lambda\setminus\Delta$, $\{1S\} \reach \vec{d}_{S}$. 
    \end{enumerate}
\end{defn}
 
Here and wherever necessary, with slight abuse of notation, $\vec{d}_{S}$, which represents the atomic decomposition of $S$, simultaneously represents a configuration in 
$\N^{\Delta}$ reachable from $\{1S\}$. Observe that $\calC =  \{2X+Y+Z\to 2W+Z,X+2Y+W+Z\to Y+W+2Z\}$ is not reachably-atomic unless we add the following reactions: $Y\to 2X,Z\to 3X,W\to 2X$.

Condition \ref{vis222} is a strong restriction ensuring some nice properties. For example, the atom set of a reachably atomic network is unique:
\begin{lem}\label{uniquereachable}
    If $\calC=(\Lambda, R)$ is reachably atomic, then the choice of $\Delta$ with respect to which $\calC$ is reachably atomic is unique. 
    Moreover, for each $S\in\Lambda$, $\vec{d}_S$ is unique, 
    i.e.,
    if $\{1S\}\Rightarrow^* \vec{c}\in\N^\Delta$, then $\vec{c} = \vec{d}_S$.
\end{lem}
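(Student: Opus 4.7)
The lemma has two parts: uniqueness of the atom set $\Delta$, and uniqueness of the decomposition $\vec{d}_S$ (in the strong sense that \emph{every} reachable configuration in $\N^\Delta$ equals $\vec{d}_S$). I plan to handle the second part first, since it is the cleaner consequence of atom conservation, and then bootstrap it into the first part.

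First, fix a witnessing $\Delta$ and suppose $\{1S\} \Rightarrow^* \vec{c} \in \N^\Delta$. By Condition~\eqref{primitive1} of Definition~\ref{primitiveatomic}, the total count of each atom $A$ is invariant along the reaction trajectory, so
\[
\vec{d}_S(A) \;=\; \sum_{S'\in\Lambda} \vec{d}_{S'}(A)\,\vec{c}(S').
\]
Because $\vec{c}\in\N^\Delta$, the sum ranges only over $S'\in\Delta$, and for such $S'$ the subset atomic condition gives $\vec{d}_{S'} = \{S'\}$, i.e.\ $\vec{d}_{S'}(A) = 1$ if $A=S'$ and $0$ otherwise. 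Hence the right-hand side collapses to $\vec{c}(A)$, yielding $\vec{c}(A) = \vec{d}_S(A)$ for every $A\in\Delta$, so $\vec{c}=\vec{d}_S$. (When $S\in\Delta$ this is trivially consistent with $\vec{d}_S=\{S\}$.)

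Next, for uniqueness of $\Delta$, suppose $\calC$ were reachably atomic with respect to two candidate atom sets $\Delta_1$ and $\Delta_2$, with decomposition maps $\vec{d}^{(1)}, \vec{d}^{(2)}$. The key observation I will prove is: \emph{if $S\in\Delta_i$, then every configuration $\vec{c}$ reachable from $\{1S\}$ consists of a single copy of a single species from $\Delta_i$.} This is because atom conservation with respect to $\Delta_i$ forces
\[
\sum_{S'\in\Lambda} \|\vec{d}^{(i)}_{S'}\|\,\vec{c}(S') \;=\; \|\vec{d}^{(i)}_S\| \;=\; 1,
\]
and since $\|\vec{d}^{(i)}_{S'}\|\geq 1$ for every species, with equality precisely when $S'\in\Delta_i$, the only way the left-hand side equals $1$ is if $\vec{c} = \{1S'\}$ for some $S'\in\Delta_i$.

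Now pick any $S\in\Delta_1$, aiming for $S\in\Delta_2$. By reachable atomicity with respect to $\Delta_2$, either $S\in\Delta_2$ (and we are done), or else $S\in\Lambda\setminus\Delta_2$ and $\{1S\}\Rightarrow^* \vec{d}^{(2)}_S$ with $\|\vec{d}^{(2)}_S\|\geq 2$; but the observation above, applied with $i=1$, says every state reachable from $\{1S\}$ has total size $1$, contradicting $\|\vec{d}^{(2)}_S\|\geq 2$. Hence $\Delta_1\subseteq \Delta_2$, and by symmetry $\Delta_1=\Delta_2$. The main obstacle is being disciplined about the two simultaneous uses of $\|\cdot\|$ (as the $\ell_1$-size of a decomposition vector in $\N^\Delta$ and as the total species count of a configuration), but these coincide on $\N^\Delta$, which is exactly what makes the argument go through.
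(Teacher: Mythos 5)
Your proof is correct, and it follows the same underlying strategy as the paper's: both parts rest on conservation of atom counts along reaction sequences, and uniqueness of $\Delta$ is obtained by taking a species in the symmetric difference of two candidate atom sets and playing its reachable decomposition (guaranteed by reachable atomicity) against conservation with respect to the other set. The difference is in the bookkeeping for the $\Delta$-uniqueness part: the paper fixes $A\in\Delta_1\setminus\Delta_2$ and asserts that the configuration reached from $\{1A\}$ (supported in $\Delta_2$) contains no atom $A$ with respect to $\Delta_1$, which is not immediate, since a species of $\Delta_2\setminus\Delta_1$ could a priori have $A$ in its $\Delta_1$-decomposition; you instead compare \emph{total} atom counts, $\sum_{S'}\|\vec{d}^{(1)}_{S'}\|_1\,\vec{c}(S')$, which is $1$ initially and at least $2$ at the end (using $\|\vec{d}_{S'}\|_1\geq 2$ for non-atoms), so your accounting is tighter exactly where the paper's write-up is loose, and it also gives the pleasant invariant that a lone atom can only ever reach a lone atom. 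Likewise, your second part is a direct computation (atom counts are conserved, and on configurations supported in $\Delta$ the atom-count vector \emph{is} the configuration), which is equivalent to but cleaner than the paper's contradiction phrasing about which of two sequences ``fails to preserve atoms'' — in fact every reaction preserves atoms, so no case distinction is needed. In short: same route, with a more careful counting argument.
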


\newcommand{\DetailedProofOfUniquenessOfDecompositionInReachablyAtomic}{
    Assume for the sake of contradiction that for some reachably atomic network $\calC$, there exist $\Delta_1\neq\Delta_2$ with respect to both of which $\calC$ is reachably atomic, respectively via decomposition matrices $\matr{D}_1$ and $\matr{D}_2$. Note that $(\Delta_1\setminus\Delta_2)\cup (\Delta_2\setminus\Delta_1)\neq\emptyset$. Take $A\in (\Delta_1\setminus\Delta_2)\cup (\Delta_2\setminus\Delta_1)$:
    \begin{enumerate}
         \item If $A\in \Delta_1\setminus\Delta_2$, then $\{1A\}$ is decomposible into some $\vec{c}\in\N^\Lambda\mid [\vec{c}]\subseteq \Delta_{2}$ via a sequence of reactions, with $\|\vec{c}\|_1\geq 2$. There is no way for this sequence of reactions to preserve atoms with respect to $\Delta_1$, for the initial configuration has a single atom $A\in\Delta_1$ while the final configuration has no atom $A$.
         \item Similarly, if $A\in \Delta_2\setminus\Delta_1$, there will be a sequence of reactions violating preservation of atoms with respect to $\Delta_2$.
    \end{enumerate} 
 
    We next prove the uniqueness of decomposition vectors $\vec{d}_S$ for all $S\in\Lambda$, or equivalently, the uniqueness of decomposition matrix $\matr{D}$. Suppose not, then 
    there exists $S\in\Lambda\setminus\Delta$ s.t. $\{1S\}\reach \vec{y}_1,\vec{y}_2\in\N^\Lambda$, $\vec{y}_1\neq \vec{y}_2$ and $[\vec{y}_1],[\vec{y}_2]\subseteq \Delta$, via reaction sequences $\vec{s}_1,\vec{s}_2$ 
    respectively. Assume without loss of generality that $\vec{s}_1$ preserves the number of atoms in each reaction, which means $\vec{y}_1=\vec{d}_S$. Then there must be one or more actions in $\vec{s}_2$ that does (do) not preserve the number of atoms, for $\vec{s}_1,\vec{s}_2$ share the same initial configuration $\{1S_i\}$ yet reach different final count of atoms, while no atoms are allowed to be isomeric to each other.
}

\begin{proof}
The intuition 
is to show that 
should there exist $\Delta_1\neq\Delta_2$ and without loss of generality, assume $\exists A\in \Delta_1\setminus\Delta_2$, then the decomposition of $A$ with respect to $\Delta_1$ violates the preservation of atoms in $\Delta_2$. 
    \opt{sub,subn}{For details, see Section~\ref{AppAA}.}
    \opt{normal}{\DetailedProofOfUniquenessOfDecompositionInReachablyAtomic}
\end{proof}

Conservation laws in ``-atomic''
networks reminds us of a more familiar type of conservation law, which is mass conservation. The next section exhibits some observations on the relationship between these two types of conservation laws.

\section{Mass-Conservation and primitive atomicity}\label{MCPAEquiv}

This section shows that ``primitive atomic'' and ``mass conserving'' are equivalent concepts.
We first formalize what it means for a network to conserve mass:

\begin{defn}[Mass Conserving]\label{MCCCCCC}
A chemical reaction network $\calC = (\Lambda, R)$ is \emph{mass conserving} if
    $$
        (\exists \vec{m}\in\R_{>0}^{\Lambda})(\forall (\vec{r},\vec{p}) \in R)
        \sum\limits_{S \in \Lambda} \vec{r}(S) \cdot \vec{m}(S) = 
        \sum\limits_{S \in \Lambda} \vec{p}(S) \cdot \vec{m}(S)
    $$ 
Equivalently, if $\matr{A}$ is the stoichiometric matrix in Definition~\ref{stoi}, 
then $\calC$ is mass conserving if  $(\exists \vec{m}\in\R_{>0}^{\Lambda})\ \matr{A}\cdot \vec{m} = \vec{0}.$ 
We call $\vec{m}$ a \emph{mass distribution vector}.
\end{defn}

Using our familiar example, $\calC =  \{2X+Y+Z\to 2W+Z,X+2Y+W+Z\to Y+W+2Z\}$ is mass conserving with respect to $\vec{m}=(0.5,1,1.5,1)^T$. ``Mass Conserving'' captures the feature that for every reaction in $C$, the total mass of reactants are equal to the total mass of products. Difference between the definitions of Mass Conserving and Primitive Atomic (as well as all ``-atomic'' definitions descended therefrom) become clear if we compare the matrix form of their respective conservation laws: mass conservation requires a single conservation relation ($\matr{A}\cdot \vec{m}=0^{|R|}$), while primitive atomicity requires $|\Delta|$ of them ($\matr{A}\cdot\matr{D}=\matr{0}$ where $\matr{D}$ is a $|\Lambda|\times |\Delta|)$ matrix.

However, 
apparently these two conservation laws are closely related. In fact, the freedom of defining $\Delta$ independent of $\Lambda$ provides us a choice for making $\Delta$ a singleton, which enables us to prove the following equivalence:

\newcommand{\StatementofMCPAEquivalence}{For any network $\calC$, $\calC$ is primitive atomic $\Leftrightarrow$ $\calC$ is mass conserving. Further, there exists an $O(|\lag \matr{A}\rag|^5)$ algorithm to decide if $\calC$ is primitive atomic, with $\matr{A}$ the stoichiometric matrix of $\calC$.
}

\begin{prop}\label{primitivemcequiv}
\StatementofMCPAEquivalence
\end{prop}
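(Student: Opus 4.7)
The plan is to establish the equivalence as a double implication and then reduce deciding primitive atomicity to a strict positive feasibility problem for a homogeneous linear system.

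For the forward direction (primitive atomic $\Rightarrow$ mass conserving), suppose $\calC$ is primitive atomic with respect to some $\Delta$ via decomposition matrix $\matr{D}$. Condition~\eqref{primitive1} of Definition~\ref{primitiveatomic} is equivalent to $\matr{A}\matr{D} = \matr{0}$ (as noted after Definition~\ref{DecMat}), so every column of $\matr{D}$ lies in $\ker(\matr{A})$. I would define $\vec{m}\in\R^{\Lambda}$ by $\vec{m}(S) := \|\vec{d}_S\|_1$, i.e.\ the row-sum of $\matr{D}$ at row $S$. Since $\vec{d}_S\in\N^{\Delta}\setminus\{\vec{0}\}$, every coordinate satisfies $\vec{m}(S)\geq 1 > 0$, so $\vec{m}\in\R_{>0}^{\Lambda}$; and $\matr{A}\vec{m} = \sum_{A\in\Delta} \matr{A}\matr{D}_{(:,A)} = \vec{0}$. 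Thus $\calC$ is mass conserving.

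For the reverse direction (mass conserving $\Rightarrow$ primitive atomic), suppose $\matr{A}\vec{m} = \vec{0}$ for some $\vec{m}\in\R_{>0}^{\Lambda}$. I would take $\Delta$ to be a singleton $\{A_0\}$, so a valid decomposition matrix is a single column vector $\vec{d}\in\N^{\Lambda}$ all of whose entries are positive. It therefore suffices to produce an integer solution $\vec{d}>\vec{0}$ to $\matr{A}\vec{d}=\vec{0}$. Because $\matr{A}$ has integer entries, $\ker(\matr{A})$ has a basis of rational vectors, so $\ker(\matr{A})\cap\Q^{\Lambda}$ is dense in $\ker(\matr{A})$; since the positive orthant is open in $\R^{\Lambda}$ and contains $\vec{m}$, there is a rational $\vec{q}\in\ker(\matr{A})$ close enough to $\vec{m}$ that $\vec{q}>\vec{0}$. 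Clearing a common denominator yields the desired $\vec{d}\in\N_{>0}^{\Lambda}$. Condition~\eqref{primitive1} holds by construction, and condition~\eqref{primitive3} is trivial because the single atom $A_0$ appears (in positive count) in every $\vec{d}_S$.

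For the algorithmic claim, the equivalence reduces deciding primitive atomicity to deciding whether the homogeneous system $\matr{A}\vec{x}=\vec{0}$ admits a strictly positive rational solution. This is exactly the problem solved by Chubanov's algorithm~\cite{chubanov2015polynomial}, which runs in time polynomial in $|\lag\matr{A}\rag|$ and, when feasible, returns an explicit rational witness; the bound $O(|\lag\matr{A}\rag|^5)$ is obtained by plugging the input size into the complexity stated there. The combinatorial/number-theoretic content of the proposition is essentially packaged in the forward direction's row-sum trick and the rational-density argument; the main obstacle is the latter, but it is elementary once one observes that $\ker(\matr{A})$ is rationally spanned, so the real difficulty of the theorem is entirely delegated to the correct invocation of Chubanov's algorithm.
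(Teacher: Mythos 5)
Your proof is correct and follows essentially the same route as the paper: the forward direction assigns each atom mass $1$ so that $\vec{m}(S)=\|\vec{d}_S\|_1$ (row sums of $\matr{D}$), the converse uses a singleton atom set together with a strictly positive integer kernel vector of $\matr{A}$, and the decision procedure is exactly Chubanov's strict-positivity solver with the stated $O(|\lag\matr{A}\rag|^5)$ bound. The only immaterial difference is that you obtain the rational positive kernel vector by a rational-density argument before clearing denominators, whereas the paper extracts it directly from a run of Chubanov's algorithm.
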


\newcommand{\RemarkOnMCPAEquivalence}{\begin{rem}\label{EquivalenceNOTTrueForSubsetAtomic}\normalfont{}The equivalence with Mass Conserving is \textbf{not true} for \emph{subset atomic} networks: Consider $\calC = (\{X,Y\},\{((1,0),(0,1)\})$. This network involves a single reaction: $X\rightarrow Y$. There are only three choices for the set of atoms $\Delta$, which is now required to satisfy $|\Delta|>0, \Delta\subseteq \Lambda$: $\Delta = \{X\}$, $\Delta = \{Y\}$ or $\Delta = \{X,Y\}$, respectively. 

$\calC$ is Mass Conserving, the mass distribution vector being $\vec{m}(X)=\vec{m}(Y)=1$; Nonetheless, no choice of $\Delta$ makes $\calC$ subset atomic. 
Take $\Delta = \{X,Y\}$ for example, and consider the atom $X$: 
the reactants have a single atom of $X$ but the products have no $X$ (since $Y$ is itself an atom it cannot be composed of $X$).
If $\Delta=\{X\}$, then $Y$ is not an atom, but the definition of atomic decomposition requires that non-atomic species have a decomposition with at least two atoms, which means that since $X$ is the only atom, $\vec{d}_{Y}$ contains at least two $X$'s, while the reactants have only one $X$.
A similar argument applies if $\Delta=\{Y\}.$
\end{rem}
}

\newcommand{\DetailedProofOfMCPAEquivalence}{\begin{enumerate}
  \item  primitive atomic $\Leftrightarrow$ mass conserving:

\begin{enumerate}
    \item  mass Conserving $\Rightarrow$ primitive Atomic:

   If a chemical reaction network $\calC$ is mass conserving, then there exists a mass distribution vector $\vec{m}\in\R_{>0}^{\Lambda}$ s.t. $\matr{A}\cdot \vec{m}=0^{|R|}$, with $\matr{A}$ the stoichiometric matrix of $\calC$. We shall 
    exhibit a constructive way of finding a \emph{rational} solution to the linear system $\matr{A}\cdot \vec{m} = 0^{|R|}$, thereby enabling further 
    manipulation to finally yield an integral solution that could be used to construct a desired decomposition matrix $\matr{D}$. \begin{enumerate}
         \item \label{sergeiss} Knowing the existence of the distribution vector $\vec{m}\in\R_{>0}^\Lambda$
        \ s.t. $\matr{A}\cdot\matr{m}=0^{|R|}$, we  
         run Sergei's ``strictly positive solution finder for linear system'' algorithm\cite{chubanov2015polynomial, chubanovprivatecommunication} to find $\vec{m}'\in\Q^{\Lambda}_{>0}\subseteq \R^{\Lambda}_{>0}$, s.t. 
       $\matr{A}\cdot \vec{m}' = 0^{|R|}$. The algorithm has complexity $O(\max\{|\Lambda|^4,|R|^4\}\cdot |\lag\matr{A}\rag|) = O(|\lag\matr{A}\rag|^5)$\cite{chubanov2015polynomial, chubanovprivatecommunication}.

        \item \label{sergeiss2} Now that we have obtained a solution vector $\vec{m'}\in\Q_{>0}^{\Lambda}$, for each $S\in\Lambda$ we may write $\vec{m'}(S) = \frac{a_S}{b_S}\mid a_S\in\Z,b_S\in \Z_{\neq 0}$. Compute $c =  \textrm{lcm} \{b_S\}_{S\in\Lambda}$ 
        , and let $\vec{m}'' = \vec{m}'\cdot c$. Since $\matr{A}\cdot\vec{m''}=c\cdot (\matr{A}\cdot\vec{m}')=0^{|R|}$ 
        , by definition $\vec{m}''\in \N^{|\Lambda|}_{>0}$ is also a valid mass distribution vector.

        Complexitywise, one may apply the 
        binary Euclidean Algorithm (Let us denote the algorithm as $\gcd(a,b)$) for $|\Lambda|-1$ times, finding $\gcd(b_{S_1},b_{S_2}), \gcd(b_{S_1},b_{S_2},b_{S_3}) = \gcd(\gcd(b_{S_1},b_{S_2}),b_{S_3})$, $\cdots$, $\gcd(b_{S_1},b_{S_2},\cdots, b_{S_{|\Lambda|}})$ dynamically, and then compute $c=$lcm$\{b_S\}_{S\in\Lambda} = \dfrac{\prod_{S\in\Lambda}b_S}{\gcd\{b_S\}_{S\in\Lambda}}$. 
        Since the complexity of 
        Euclidean Algorithm $\gcd(a,b)$ is 
        $O((\max\{|\lag a\rag|,|\lag b\rag|\})^3)$\cite{knuth1997seminumerical}, 
        complexity for computing $\gcd(b_1,\cdots,b_{S_{|\Lambda|}})$ as well as $c=$lcm$(b_1,\cdots,b_{S_{|\Lambda|}})$ is $O(|\Lambda|\cdot ( \max\{|\lag\vec{m}'(S)\rag|\})^3)$. 
        
        Lastly, it is shown in the footnote that computing $\vec{m}''=\vec{m}'\cdot c$ expands the binary size of $\vec{m}'$ by at most a factor of $|\Lambda|$,\footnote{To make the notations a little simpler, let the $i$-th entry of $\vec{m}'$ be written as $\frac{c_i}{d_i}$, so the the numerator and denominator are $\log(|c_i|)$ and $\log(|d_i|)$ bits long, respectively ($\forall i\in [1,|\Lambda|]$). Define $$t:=\underbrace{\sum_{j=1}^{|\Lambda|}(\log(c_j)+\log(d_j))}_{\textrm{binary size of }\vec{m'}}$$ 
        In the worst case where $d_i,d_j(\forall i,j)$ are pairwise co-prime, multiplying each entry with $\gcd(d_1,d_2,\cdots d_{|\Lambda|})$ is equivalent to first setting all $d_i$'s to $1$, then expending each $c_i$ by $(\sum_{j=1}^{|\Lambda|}\log(d_j)) - \log(d_i)$ bits. The net effect is expanding the size of the input by \begin{eqnarray*}\sum_{i = 1}^{|\Lambda|}(((\sum_{j=1}^{|\Lambda|}\log(d_j)) - \log(d_i)) - (\log{d_i}-1)) &=&\sum_{i=1}^{|\Lambda|}((\sum_{j=1}^{|\Lambda|} \log(d_j))-2\log(d_i)+1)\\&= & (|\Lambda|-2)(\sum_{j=1}^{|\Lambda|}\log(d_i)) + |\Lambda|\\ &<& |\Lambda|\cdot t + |\Lambda|\\ &=& O(|\Lambda|\cdot t) \end{eqnarray*} bits, as desired.} so altogether, the complexity for this Step (\ref{sergeiss2}) is $O(|\Lambda|^2\cdot ( \max\{|\lag\vec{m}'(S)\rag|\})^3)$. Since 
        $\vec{m}'$ is obtained from Step (\ref{sergeiss}) with size-bound $|\lag\matr{A}\rag|^5$, we have \begin{equation}\label{boundofconv}O(|\Lambda|^2\cdot (\max\{|\lag\vec{m}'(S)\rag|\})^3) =O(|\lag A\rag|^{17}) 
        \end{equation}
         
 \item\label{sergeiss3} Output $\Delta = \{A_1\}$ where $A_1\not\in\Lambda$, and $\matr{D} = 2\vec{m}''$.

 Note that $\matr{D}$ is a length-$|\Lambda|$ column vector when $|\Delta| = 1$, and so is $2\vec{m}''$, hence their dimensions match. 
    \end{enumerate}

    Intuitively, the last step is assigning mass $1$ to the single atom $A_1$, and then decomposing each $S\in\Lambda$ into $2\vec{m}''(S)$ of $A_1$'s. We formally verify the correctness of this output: \begin{enumerate}
        \item $\matr{A}\cdot \matr{D} = 2(\matr{A}\cdot\vec{m}'') = 0^{|R|}$, so Condition (\ref{primitive1}) for primitive atomicity is satisfied;
        \item  $A_1$ as the single atom in $\Delta$ is used in the decomposition of each molecular species, verifying Condition (\ref{primitive3}) as well.
    
    \end{enumerate} 

Complexitywise, Step (\ref{sergeiss2}) dominates, so (\ref{boundofconv}) is the complexity of the whole algorithm.
    \item primitive atomic $\Rightarrow$ mass conserving:
    
    We shall prove that given any chemical reaction network $\calC$ primitive atomic with respect to $\Delta$ via decomposition matrix $\matr{D}$, there exists a polynomial time algorithm to compute a valid mass distribution vector $\vec{m}$. For each $S\in\Lambda$, let $\vec{m}(S) = \sum\limits_{A\in\Delta} \vec{d}_S(A) = \|\matr{D}_{(S,:)}\|_1$; that is, we assign mass $1$ to all atoms, making the mass of a molecular species equal to the total count of atoms in its atomic decomposition. 
    Then for each $(\vec{r},\vec{p})\in R$, the entry 
    \begin{eqnarray*}
    (\matr{A}\cdot \vec{m})_{(\vec{r},\vec{p})} &=& \sum\limits_{S\in\Lambda}(\matr{A}_{((\vec{r},\vec{p}),S)}\cdot \sum\limits_{A\in\Delta}\vec{d}_S(A))\\
    &=&\sum\limits_{S\in\Lambda}\sum\limits_{A\in\Delta}(\matr{A}_{((\vec{r},\vec{p}),S)}\cdot \matr{D}_{(S,A)})\\&=&\sum\limits_{A\in\Delta}\sum\limits_{S\in\Lambda}(\matr{A}_{((\vec{r},\vec{p}),S)}\cdot \matr{D}_{(S,A)})\\
    &=& \|(\matr{A}\cdot\matr{D})_{((\vec{r},\vec{p}),:)}\|_1\\
    &=&0,
    \end{eqnarray*}
as desired. Note that $\|\lag\vec{m}\rag\|_1 = O(|\lag \matr{D}\rag|) 
$, which gives the upper bound of complexity.
\end{enumerate}

\item There exists an $O(|\lag A\rag|^5)$ algorithm to decide if $\calC$ is primitive atomic. 
    
    By the equivalence relationship shown right above, deciding primitive atomicity is equivalent to deciding mass conservation, which is in turn equivalent to deciding if there exists a strictly positive solution to the linear system $\matr{A}\cdot \vec{m} = 0^{|R|}$ with $\matr{A}$ the stoichiometric matrix of $\calC$. The latter problem is decidable by an $O(|\lag\matr{A}\rag|^5)$ algorithm, as mentioned in Step \ref{sergeiss} \cite{chubanov2015polynomial, chubanovprivatecommunication}.\qed
\end{enumerate}
}

\begin{proof}
Intuitively, the ``$\implies$'' direction is shown by assigning mass $1$ to each atom, as ``homogenizing'' the atoms preserves the original conservation law; for the ``$\impliedby$'' direction, one may 
essentially 
create a $\Delta$ of cardinality $1$ with respect to which the network is primitive atomic. The proof also reflects the difference in number of conservation relations addressed two paragraphs above. 
    \opt{sub,subn}{See Section \ref{AppAA} in the Appendix for details and some more remarks.}
    \opt{normal}{\DetailedProofOfMCPAEquivalence
    \RemarkOnMCPAEquivalence}
\end{proof}

Recall that subset atomicity imposes the restriction that $\Delta\subseteq\Lambda$. As we'll show in the following section, this single restriction increases the computational complexity of the decision problem ``is a network `(\emph{prefix})-atomic' ''. 
\section{Complexity of Subset Atomic}\label{CSFA}
 We shall determine in this chapter the computational complexity for deciding the subset atomicity of networks. First, we define the relevant languages: 
 
\begin{defn}\label{SubAtom}
 We define the following languages:
 \begin{eqnarray*}
 \SubsetAtomic &=& \{\lag\Lambda,R\rag\mid (\exists\Delta\subseteq \Lambda)((\Lambda, R)\textrm{ is subset atomic with respect to }\Delta)\}\\
 \SubsetFixedAtomic &=& \{\lag\Lambda,R,\Delta\rag\mid (\Lambda, R)\textrm{ is subset atomic with respect to }\Delta\}
 \end{eqnarray*}
 \end{defn}

 By definition, $\SubsetAtomic$ is the language whose elements are the encoding of a \emph{subset atomic} chemical reaction network.  \SubsetFixedAtomic, on the other hand, is the language consisting of the encoding of a (network, atom set) pair where the network is \emph{subset atomic} with respect to the given atom set. In this section we determine the complexity classes of these languages.
\subsection{$\SubsetFixedAtomic$ and $\SubsetAtomic$ are in NP}
 
It is not 
immediately obvious that there exists a short witness for either language (which if true would 
imply that both languages are in $\NP$ immediately), so we reduce $\SubsetFixedAtomic$ to $\textsc{Integer-Programming}$, which is 
in $\NP$~\cite{papadimitriou1981complexity}.

\newcommand{\StatementOfLemmaSSFixedAReducibleToIP}{
\SubsetFixedAtomic 
$\leq_{m}^{p}\textsc{Integer-Programming}$ (hereinafter, ``\IP'').
}

\begin{prop}\label{SFANP}\StatementOfLemmaSSFixedAReducibleToIP
\end{prop}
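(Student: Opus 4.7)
The plan is to reduce \SubsetFixedAtomic{} to \IP{} by encoding the existence of a valid decomposition matrix as a system of linear constraints over nonnegative integer variables. Given an input $\langle \Lambda, R, \Delta\rangle$, the witness we seek is a decomposition matrix $\matr{D} \in \N^{\Lambda \times \Delta}$ satisfying conditions (1)--(2) of Definition~\ref{primitiveatomic} together with the two clauses of Definition~\ref{subatom}. I would introduce one integer variable $d_{S,A} \geq 0$ for each $S \in \Lambda \setminus \Delta$ and each $A \in \Delta$; for $S \in \Delta$, the entries $d_{S,A}$ are not variables but fixed constants (namely $1$ if $A = S$, and $0$ otherwise), which directly enforces clause~(1) of Definition~\ref{subatom}.

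Next I would add three groups of linear constraints. \emph{(i) Atom preservation:} for each reaction $(\vec{r},\vec{p}) \in R$ and each $A \in \Delta$, the equality $\sum_{S \in \Lambda} \vec{r}(S)\, d_{S,A} = \sum_{S \in \Lambda} \vec{p}(S)\, d_{S,A}$, which encodes Condition~\eqref{primitive1} of Definition~\ref{primitiveatomic}; these become two $\leq$-inequalities in the standard \IP{} formulation. \emph{(ii) Minimum size for non-atoms:} for each $S \in \Lambda \setminus \Delta$, the inequality $\sum_{A \in \Delta} d_{S,A} \geq 2$, enforcing clause~(2) of Definition~\ref{subatom}. \emph{(iii) Nonnegativity:} $d_{S,A} \geq 0$ for every variable. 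Conditions that are \emph{not} explicitly added are in fact automatic: the requirement $\vec{d}_S \neq \vec{0}$ holds for $S \in \Delta$ since $d_{S,S} = 1$ and for $S \notin \Delta$ by group~(ii); and Condition~\eqref{primitive3} (each atom appears in some species' decomposition) is automatic since for each $A \in \Delta$, $d_{A,A} = 1$.

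It is immediate that the instance constructed has size polynomial in $|\langle \Lambda, R, \Delta\rangle|$: the number of variables is at most $|\Lambda|\cdot|\Delta|$, the number of inequalities is $O(|R|\cdot|\Delta| + |\Lambda|)$, and all coefficients come directly from the entries of $\vec{r}, \vec{p}$, so their binary encodings have length at most $|\langle R\rangle|$. The construction is clearly computable in polynomial time.

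The correctness argument is a direct unpacking of definitions: any integer solution $(d_{S,A})$ to the above system, extended by the fixed entries for $S \in \Delta$, yields a decomposition matrix $\matr{D}$ satisfying all requirements of Definitions~\ref{primitiveatomic} and~\ref{subatom}, and conversely any valid $\matr{D}$ witnessing that $(\Lambda,R)$ is subset-$\Delta$-atomic yields a feasible integer point. The only mild subtlety -- and the step I would be most careful to check -- is verifying that the constraints derived from fixing $d_{A,A} = 1$ and $d_{A,A'} = 0$ for $A,A' \in \Delta$ are correctly propagated as constants (and not left as variables) on both sides of the atom-preservation equalities, since otherwise one could accidentally allow decompositions with $\vec{d}_S \neq \vec{e}_S$ for some $S \in \Delta$. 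Since \IP{} is in \NP{}~\cite{papadimitriou1981complexity}, membership of \SubsetFixedAtomic{} in \NP{} follows.
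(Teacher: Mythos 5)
Your reduction is essentially the paper's: both encode the existence of a valid decomposition matrix as a feasibility question for a linear system over nonnegative integers, with one equality per (reaction, atom) pair for atom preservation, the constraint $\|\vec{d}_S\|_1\geq 2$ for each non-atom, and the atoms' rows pinned to unit vectors (you hard-wire those entries as constants, while the paper keeps them as variables constrained by equalities and uses slack variables to turn the $\geq$-constraints into equalities --- cosmetic differences). Your size and polynomial-time analysis also matches the paper's.

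The one substantive divergence is your claim that Condition~(\ref{primitive3}) of Definition~\ref{primitiveatomic} is automatic because $d_{A,A}=1$. That is true under the literal wording ($\exists S\in\Lambda$ with $\vec{d}_S(A)\neq 0$, and the atom itself qualifies once $\vec{d}_A=\vec{e}_A$), but it is not the reading the paper uses for subset atomicity: the paper's system contains, for every $A\in\Delta$, the additional constraint $\sum_{S\in\Lambda\setminus\Delta}\vec{d}_S(A)\geq 1$, i.e., every atom must occur in the decomposition of at least one \emph{molecular} species, and this ``no redundant atom'' reading is the one invoked in the informal description of the model and again in the proof of Lemma~\ref{EConservative}. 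The two systems are genuinely inequivalent: take $\Lambda=\{A,B,C\}$, $\Delta=\{A,B\}$, $R=\{C\to 2A\}$; atom preservation forces $\vec{d}_C=\{2A\}$, so your system is feasible, while the paper's is not, because $B$ appears in no molecule's decomposition. Under the paper's intended semantics your reduction therefore accepts instances it should reject. The fix is immediate and preserves everything else in your argument: add the constraints $\sum_{S\in\Lambda\setminus\Delta} d_{S,A}\geq 1$ for every $A\in\Delta$, which adds only $|\Delta|$ inequalities and keeps the instance polynomial-size.
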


\newcommand{\DetailedProofSSFixedAReducibleToIP}{
    We transform the input into the encoding of the following equivalent linear system: 
    In the following, $b_1,b_2,\cdots b_{m-n},c_1,c_2,\cdots,c_n\in \N$ are slack variables so that we may express an inequality as an equality together with a nonnegativity constraint on $b_s$, $c_a$ ($1\leq s\leq m-n,1\leq a\leq n$).

    \begin{equation}\label{series}\begin{array}{cccccc}
    \textrm{for }a \in [1,n]: & & & & &\\
    & \textrm{for }r \in [1,k]: & & & & \\
     & & & \sum\limits_{s=1}^{m} \vv_r(S_s)\cdot x_{sa}   & = & 0 \\
    \textrm{for }s \in [1, m-n]: & & & & & \\
    & & & \left( \sum\limits_{a=1}^{n}x_{s,a} \right) - b_{s} &=& 2\\
    & & & b_s &\geq & 0\\
    \textrm{for }a \in [1, n]: & & & & & \\
    & & & \left( \sum\limits_{s=1}^{m-n}x_{s,a} \right) - c_{a} &=& 1\\
    & & & c_a &\geq & 0\\
    \textrm{for }a \in [1, n]: & & & & & \\
    & \textrm{for }a' \in [1,n]: & & & & \\
    & & \textrm{if }a=a'& & & \\
    & & & x_{a'+m-n,a} &=& 1\\
    & & \textrm{else} & & &\\
    & & & x_{a'+m-n,a} &=& 0\\
    \textrm{for }s \in [1,m]: & & & & & \\
    &\textrm{for }a \in [1,n]: & & & & \\
    & & & x_{sa}& \geq& 0\\
    \end{array}\end{equation}

    The equivalence between the linear system and the subset atomicity of $(\Lambda,R,\Delta)$ follows from Definitions \ref{SubAtom} and \ref{subatom}. In fact, the first equation of (\ref{series}) asks the number of each atom to be preserved across each reaction; the second equation and third inequality prescribes that for all non-atomic species $S\in\Lambda\setminus\Delta$, $||\vec{d}_S||_1\geq 2$; the fourth equation and fifth inequality prescribes that for all atomic species $A\in\Delta$, $\sum_{S\in\Lambda\setminus\Delta}\vec{d}_S(A)\geq 1$, which is equivalent to saying that each $A$ appears in at least one atomic decomposition for some molecules; the sixth and seventh equation translates to $\forall A\in\Delta$, $\vec{d}_A =\vec{e}_A$ (recall $\ve_A$ is the unit vector that is 1 on $A$ and 0 on everything else); and the last inequality restricts atomic counts in species to be nonnegative integers.

    Note, also, that $\vec{d}_S\neq 0^{\Delta}$ has been ensured by the respective restrictions on decomposition for non-atom species and atoms.

    Let us analyse the complexity of the reduction. To construct a constraint system 
    \begin{equation*}\label{IntMat}
         \matr{A}\vec{x}=\vec{b},\ 
         \vec{x}\in\N^n
    \end{equation*}
 
    we first observe that 
    \begin{equation*}
        |\vec{x}| = |\{x_{s,a}\mid s\in[1,m], a\in[1,n]\}| + |\{d_s\mid s\in [1,m-n]\}| + |c_a\mid a\in [1,n]| = mn + m
    \end{equation*}

    and that 
    \begin{eqnarray*}
         \nonumber |\vec{b}| &=& \textbf{number of equations in (\ref{series})}\\\nonumber
         & = & nk + m + n^2
    \end{eqnarray*}

    which means the matrix $\matr{A}$ is $(nk+m+n^2)\times (mn+m)$, a polynomial in $m,n,k$. Further, each entry of $\vec{b}$ is an integer in $[0,2]$, while the absolute value of each integral entry of $\matr{A}$ is bounded by $\max\{\max\limits_{r\in R, s\in[1,m]}\{\vec{v}_{r}(S_s)\}, 1\}$. This shows that the linear system is of size polynomial in $m,n,k$ and the binary size of $\lag\Lambda,R,\Delta\rag$, and the process reducing $\lag\Lambda,R,\Delta\rag$ to the linear system is also of time polynomial to the same parameters.

    Since $\lag\Lambda, R,\Delta\rag \in$ 
    $\SubsetFixedAtomic$ if and only if $\Delta\subseteq \Lambda$ and the linear system (\ref{series}) has a solution,
    and since constructing the encoding of the linear system (\ref{series}) takes polynomial time, 
    we conclude that 
    $\SubsetFixedAtomic\leq_{m}^{p}\IP$. 
}
 
\newcommand{\DetailedProofOFSAInNP}{From Corollary \ref{cor:SFANPNP} we know that
 there exists a polynomial time verifier $V'$ for the language \SubsetFixedAtomic, who takes an instance $\langle \Lambda, R,\Delta\rag$ and a witness $\lag\matr{D}\rag$, the latter being encoding of a decomposition matrix $\matr{D}$, and \textsc{accepts} (resp. \textsc{rejects}) if and only if $\langle \Lambda, R,\Delta\rag\in$ \SubsetFixedAtomic\ (resp. $\langle \Lambda, R,\Delta\rag\not\in$ \SubsetFixedAtomic) via $\matr{D}$.

 We 
 exhibit a polynomial time verifier $V$ for \SubsetAtomic. On instance $c = \lag\Lambda, R\rag$ and witness $w = \lag\Delta,\matr{D}\rag$, the algorithm $V$:
 
 \begin{algorithm}[H]

\emph{Parses $\lag c,w\rag$ into $\lag c',w'\rag$ where $c'=(\Lambda, R,\Delta)$ and $w' = \matr{D}$}\;

\emph{Runs $V'$ on $\lag c',w'\rag$ and echos its output}\;
 \caption{Verifier $V$ for \SubsetAtomic}\label{VSFA} 
\end{algorithm}

A valid witness $w=\lag\Delta,\matr{D}\rag$\footnote{In fact the witness could even be a single $\lag\matr{D}\rag$, as one may read each row of $\matr{D}$ and decide if the species represented by that row is a molecule (sum of entries in the row is at least $2$) or an atom (the row would be a unit vector) immediately, thereby determining $\Delta$.} has a size polynomial of the input size since a valid $\Delta\subseteq \Lambda$, while Corollary \ref{cor:SFANPNP} ensures that a valid $\matr{D}$ has size polynomial of the input as well. Step $1$ therefore takes linear time and by Corollary \ref{cor:SFANPNP} again, step $2$ takes polynomial time too. \qed
}

\begin{proof}
    The proof is done by exhibiting a polynomial time algorithm to transition the 
    conditions in Definition \ref{subatom} 
    into a linear system. Note that the atom set $\Delta$ is given as input. 
    \opt{sub,subn}{For details, see Section \ref{AppAA}.}
    \opt{normal}{\DetailedProofSSFixedAReducibleToIP}
\end{proof}

\begin{cor}\label{cor:SFANPNP}
    \SubsetFixedAtomic,\ \SubsetAtomic\ $\in \NP$.
\end{cor}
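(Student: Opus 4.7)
The plan is to derive both membership results from the reduction established in Proposition \ref{SFANP} together with the standard fact that \textsc{Integer-Programming} is in $\NP$ \cite{papadimitriou1981complexity}. For \SubsetFixedAtomic\ this is essentially immediate: composing the polynomial-time reduction with a nondeterministic verifier for \IP\ yields a nondeterministic polynomial-time verifier for \SubsetFixedAtomic. Concretely, on input $\lag\Lambda,R,\Delta\rag$ the machine first constructs the linear system~\eqref{series}, then guesses an integer solution $\vec{x}$ of polynomial bit-length, and checks the linear constraints in polynomial time. The short-witness property is inherited from \IP, where known bounds on the size of minimal integer solutions guarantee that if the system is feasible it has a solution of size polynomial in the binary encoding of the system.

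For \SubsetAtomic\ the plan is to guess $\Delta$ in addition to the decomposition data. More precisely, on input $\lag\Lambda,R\rag$ the nondeterministic verifier first guesses a nonempty subset $\Delta\subseteq\Lambda$, which takes at most $|\Lambda|$ bits, and then guesses a witness $\lag\matr{D}\rag$ of the type produced by the \SubsetFixedAtomic\ verifier. It then simulates the \SubsetFixedAtomic\ verifier on $\lag\Lambda,R,\Delta\rag$ with witness $\lag\matr{D}\rag$ and echoes its answer. Correctness is immediate from Definition~\ref{subatom}: $\lag\Lambda,R\rag\in\SubsetAtomic$ iff there exists some $\Delta\subseteq\Lambda$ with $\lag\Lambda,R,\Delta\rag\in\SubsetFixedAtomic$. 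Complexity is straightforward: guessing $\Delta$ is linear, and invoking the \SubsetFixedAtomic\ verifier takes polynomial time.

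The only real obstacle is confirming that a polynomial-size witness $\matr{D}$ always exists when the answer is yes. This does not follow from the reduction alone; one must appeal to the fact that a feasible integer program admits a solution whose entries are bounded by a polynomial in the binary size of the input matrix and right-hand side. Since the linear system~\eqref{series} has entries and right-hand sides of size polynomial in $|\lag\Lambda,R,\Delta\rag|$, standard bounds on minimal solutions of integer linear programs yield a feasible $\vec{x}$, and hence a decomposition matrix $\matr{D}$, of polynomial bit-length. Once this is noted, the verifier runs in polynomial time and both languages lie in $\NP$.
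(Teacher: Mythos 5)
Your proposal is correct and follows essentially the same route as the paper: membership of \SubsetFixedAtomic\ in $\NP$ via the reduction of Proposition~\ref{SFANP} together with $\IP\in\NP$, and membership of \SubsetAtomic\ via a verifier that guesses both $\Delta\subseteq\Lambda$ and the decomposition matrix $\matr{D}$ and then invokes the \SubsetFixedAtomic\ verifier. Your explicit appeal to polynomial bounds on minimal integer solutions is just the ingredient the paper implicitly uses when citing $\IP\in\NP$, so nothing substantively new or missing.
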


\begin{proof} 
    It is proved (e.g., in \cite{papadimitriou1981complexity}) that $\IP\in\NP$, hence so is 
   $\SubsetFixedAtomic$.


The proof that $\SubsetAtomic\in\NP$ is given by an polynomial time verification algorithm using the polynomial-time verifier of $\SubsetFixedAtomic$ as an oracle and taking as witness both the atom set and decomposition matrix.     \opt{sub,subn}{For details, see Section \ref{AppAA}.}
    \opt{normal}{\DetailedProofOFSAInNP}
\end{proof}

\subsection{Subset-Fixed-Atomic is NP-hard}

Our proof shall be based on reduction from $\MonotoneOneThree$. 
Recall that a monotone $3$-CNF $C$ is a conjunctive normal form with no negations, 
and a $1$-in-$3$ satisfying assignment for $C$ is an assignment of Boolean values to all variables such that for each clause in $C$, exactly one variable is assigned true.

As a well-established result, the following language is $\NP$-complete~\cite{GareyJohnson79}.
\begin{eqnarray}
     \nonumber\MonotoneOneThree &= &\{\lag V,C \rag\mid 
     C\textrm{ is a monotone }3\textrm{-CNF over }V=\{v_1,v_2,\cdots,v_n\}\textrm{,}\\\nonumber
     && \textrm{ and there exists a }1\textrm{-in-}3 \textrm{ satisfying assignment for C}\}
\end{eqnarray}
     
     
\begin{prop}\label{PartitionSFA}
     $\MonotoneOneThree
     \leq_{m}^{p}\SubsetFixedAtomic$.
\end{prop}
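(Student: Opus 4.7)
The plan is to reduce \MonotoneOneThree\ to \SubsetFixedAtomic\ by encoding truth values of variables as the atomic ``weights'' of corresponding species, exploiting the $\|\vec{d}_S\| \geq 2$ constraint to get a tight numerical window. Concretely, given an instance $\langle V, C\rangle$ with $V = \{v_1, \ldots, v_n\}$ and clauses $c_1, \ldots, c_k$, I construct a single atom $A$, let $\Lambda = \{A\} \cup \{V_i : 1 \leq i \leq n\}$, set $\Delta = \{A\}$, and for each clause $c_j = (v_a \vee v_b \vee v_c)$ add the reaction $V_a + V_b + V_c \to 7A$ to $R$. The intended semantics is that $\vec{d}_{V_i}(A) = 3$ encodes ``$v_i$ is \textbf{true}'' while $\vec{d}_{V_i}(A) = 2$ encodes ``$v_i$ is \textbf{false}''; the shift by $2$ is forced by the subset-atomic requirement that each non-atom have decomposition of $\ell_1$-norm at least $2$.

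For the forward direction, given a 1-in-3 satisfying assignment $x \in \{0,1\}^n$, I set $\vec{d}_A = \vec{e}_A$ and $\vec{d}_{V_i}(A) = 2 + x_i \in \{2,3\}$. Each reaction $V_a + V_b + V_c \to 7A$ then has reactant atom-count $\sum_{i \in \{a,b,c\}}(2 + x_i) = 6 + 1 = 7$, matching the product, so $(\Lambda, R)$ is subset atomic with respect to $\Delta$ via this decomposition.

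The backward direction is the main obstacle, since a priori the solver of the atomic decomposition problem has no reason to pick values in $\{2,3\}$. The key combinatorial observation is: if $y_a, y_b, y_c \in \N$ with $y_a, y_b, y_c \geq 2$ and $y_a + y_b + y_c = 7$, then \emph{exactly one} of the three equals $3$ and the other two equal $2$. Thus, given any valid decomposition $\vec{d}$ witnessing subset atomicity, writing $y_i := \vec{d}_{V_i}(A)$, every $y_i \geq 2$ (non-atom constraint) and the reaction for clause $c_j$ forces exactly one of the three involved $y_i$'s to equal $3$. Setting $x_i := y_i - 2$ for variables appearing in at least one clause (and $x_i := 0$ for any isolated variable, which cannot affect satisfiability of $C$) yields a 1-in-3 satisfying assignment. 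I should also verify robustness under repeated variables within a clause: e.g., $c_j = (v_a \vee v_a \vee v_b)$ yields $2V_a + V_b \to 7A$, forcing $(y_a,y_b) = (2,3)$, which matches the unique 1-in-3 satisfaction $x_a = 0, x_b = 1$; and $c_j = (v_a \vee v_a \vee v_a)$ gives $3y_a = 7$, unsatisfiable in $\N$, matching the fact that such a clause has no 1-in-3 satisfying assignment.

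Finally, the reduction is clearly polynomial-time computable: $|\Lambda| = n + 1$, $|\Delta| = 1$, $|R| = k$, and each reaction has constant support and stoichiometric coefficients bounded by $7$, so $|\langle \Lambda, R, \Delta\rangle| = O(|\langle V, C\rangle|)$. Combined with the correctness above, this establishes $\MonotoneOneThree \leq_m^p \SubsetFixedAtomic$.
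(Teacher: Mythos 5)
Your reduction is correct, but it is genuinely different from (and leaner than) the paper's. The paper uses four atoms $\Delta=\{T,F,P,Q\}$, two species $S_i,X_i$ per variable, clause reactions $3P+2F+T\to S_{m1}+S_{m2}+S_{m3}$ and $3Q+2F+T\to X_{m1}+X_{m2}+X_{m3}$, plus transfer reactions $S_i+Q\to X_i+P$; the counting argument there forces every molecule to consist of exactly two atoms (one of $P$ or $Q$, one of $T$ or $F$), and the $T$/$F$ choice encodes the assignment. You instead use a single atom, $\Delta=\{A\}$, and exploit exactly the same numerical tightness in a purer form: with $y_i=\vec{d}_{V_i}(A)\ge 2$ forced by the non-atom constraint and $y_a+y_b+y_c=7$ forced by conservation in $V_a+V_b+V_c\to 7A$, the only integer solutions are permutations of $(2,2,3)$, so ``$y_i=3$'' is exactly ``$v_i$ true''; both directions then go through, including your checks for clauses with repeated variables (which the paper's gadget handles the same way, by forcing the repeated variable to be false) and the degenerate $3y_a=7$ case. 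Your construction is smaller ($n+1$ species, $k$ reactions versus $2n+4$ species and $2k+n$ reactions), still has $O(1)$ coefficients, so it yields the same strong $\NP$-hardness corollary, and it additionally shows that hardness persists even when the candidate atom set is a \emph{singleton} — nicely isolating the source of hardness as integrality plus the $\|\vec{d}_S\|_1\ge 2$ lower bound, in contrast to primitive atomicity (i.e.\ mass conservation over positive rationals), which is polynomial-time decidable. The paper's multi-atom gadget buys something complementary: its yes-instances force every molecule to be a ``dimer'' of exactly two atoms, so hardness holds even under that structural restriction. One small point to make explicit: condition (2) of primitive atomicity (each atom occurs in some decomposition) — read, as in the paper's integer-programming encoding, as requiring each atom to occur in some \emph{molecule}'s decomposition — is satisfied in your construction since every $V_i$ contains at least two copies of $A$, provided $V\ne\emptyset$; the empty instance should be special-cased, which is routine.
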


\newcommand{\DetailedProofMonotoneOneThreePolyRedSSFA}{For each instance $\lag V,C\rag$ of $\MonotoneOneThree$, 
   
     let $\Delta = \{T,F,P,Q\}$, $\Lambda = \{S_1,S_2,\cdots, S_n, X_1, X_2,\cdots, X_n\}\cup\Delta$. 
     
     To construct $R$, we denote $C=c_1\wedge c_2\wedge \cdots \wedge c_k$. For the $i$-th clause $c_i\in C$, let $v_{ij}$ denote the $j$-th literal of $c$. Same indexing convention applies for $\{S_i\}_{i=1}^n$ and $\{X_i\}_{i=1}^n$, hence each $S_{ij}$ (resp. $X_{ij}$) denotes an element in $\{S_i\}_{i=1}^{n}$ (resp. $\{X_i\}_{i=1}^{n}$)
     .\footnote{For example, for $V = \{v_1,v_2,\cdots, v_5\}$, $C = (v_1\vee v_3\vee v_4)\wedge(v_3\vee v_2\vee v_5)$, $v_{11}=v_1, v_{12}=v_3, v_{13}=v_4$, $\cdots$, $v_{23} = v_5$. Correspondingly, $S_{11}=S_1,X_{11}=X_1$, $\cdots$, $S_{23}=S_5, X_{23}=X_5$.} The set $R$ contains the following reactions\footnote{To continue the example in the previous footnote, the set of reactions shall be:
     \begin{eqnarray}
             \nonumber 3P+2F+T&\rightarrow& S_1+ S_3 + S_4\\\nonumber
 3P+2F+T&\rightarrow& S_3 + S_2 + S_5\\\nonumber
 3Q+2F+T&\rightarrow& X_1 + X_3 + X_4\\\nonumber
 3Q+2F+T&\rightarrow& X_3 + X_2 + X_5\\\nonumber
 S_i + Q &\rightarrow& X_i + P\ (i = 1,2,\cdots, 5)
     \end{eqnarray}}:
         \begin{eqnarray}
        3P+2F +T 
          & \rightarrow &S_{m1}+S_{m2}+S_{m3}\ (\forall m\in[1,k])\label{r1} \\
        3Q+2F +T 
          & \rightarrow& X_{m1}+X_{m2}+X_{m3}\ (\forall m\in[1,k])\label{r2} \\
        S_i + Q &\rightarrow & X_i + P\ (\forall i\in[1,n])\label{r3} 
        \end{eqnarray}

     Note that we only need to construct $4+2n$ species and $2k+n$ reactions whose coefficients are constant, so this transformation is polynomial in time and space. 
     We argue that $\lag V,C \rag\in\MonotoneOneThree$ if and only if $\lag\Lambda, R,\Delta\rag\in \SubsetFixedAtomic$.
     
     $\Rightarrow$: If $\lag V,C \rag\in\MonotoneOneThree$, there exists a $\phi:V\rightarrow \{0,1\}$ under which $\exists (n_1,n_2,\cdots, n_q)\subsetneq (1,2,\cdots, n)$ s.t. 
     $\phi(v_{n_i}) = 1\ (\forall i\in[1,q])$, $\phi(v_{j})=0 \ (j\in([1,n]\setminus(n_1,n_2,\cdots, n_q)))$, and for each $c_i\in C$, exactly one in three of its literals evaluates to $1$. Let: \begin{eqnarray}
             \nonumber
             \Lambda_{TP} &=& \{S_{n_j}\mid j\in[1,q]\}\nonumber\\\nonumber
           \Lambda_{FP} &=& \Lambda\setminus (\Delta\cup \Lambda_{TP}\cup \{X_i\mid i\in[1,n]\})\nonumber\\\nonumber  \Lambda_{TQ} &=& \{X_{n_j}\mid j\in[1,q]\}\nonumber\\\nonumber
           \Lambda_{FQ} &=& \Lambda\setminus (\Delta\cup \Lambda_{TQ}\cup \Lambda_{TP}\cup\Lambda_{FP}) 
     \end{eqnarray}
     
     Then $\lag\Lambda, R, \Delta\rag\in\SubsetFixedAtomic$ because with all atoms listed in the order: 
  $\{T,F,P,Q\}$, one may make the following decomposition:
     \begin{eqnarray}\nonumber
     \vec{d}_{U}
     &=&(1,0,1,0)^T,\ \forall U\in\Lambda_{TP}\nonumber\\
     \vec{d}_{V}
     &=&(0,1,1,0)^T,\ \forall V\in\Lambda_{FP}\nonumber\\
     \vec{d}_{W}
     &=&(1,0,0,1)^T,\ \forall W\in\Lambda_{TQ}\nonumber\\
     \vec{d}_{Z}
     &=&(0,1,0,1)^T,\ \forall Z\in\Lambda_{FQ}\nonumber
     \end{eqnarray}
     
     Because of the way $\{n_j\}_{j=1}^{q}$ was taken, for each reaction in (\ref{r1}), exactly one of the product species decompose to $1T$ and $1P$, with the other two decomposing to $1F$ and $1P$. Similar argument applies for reactions in (\ref{r2}), substituting $X_i$ for $S_i$ and $Q$ for $P$. Arithmetics show that all three reactions 
     (\ref{r1}) through (\ref{r3}) preserve the number of atoms, each atom appears in the decomposition of at least one molecular species, and the number of atoms in the decomposition of each species complies with the Definition \ref{SubAtom}. Therefore $\lag \Lambda, R,\Delta\rag \in \SubsetFixedAtomic$.
     
     $\Leftarrow$: If $\lag \Lambda, R,\Delta\rag \in \SubsetFixedAtomic$, ($\ref{r1}$) ensures that each molecular species contains exactly two atoms. That is because for each $i\in [1,n]$, \begin{equation}\label{defnsum}\vec{d}_{S_i}(T) + \vec{d}_{S_i}(F)+\vec{d}_{S_i}(P) + \vec{d}_{S_i}(Q) \geq 2\end{equation}
     
     by Definition \ref{SubAtom}, so for each $m\in[1,k]$, \begin{equation}
         \label{sumdsS}
 \sum\limits_{j=1}^{3}(\vec{d}_{S_{mj}}(T) + \vec{d}_{S_{mj}}(F)+\vec{d}_{S_{mj}}(P) + \vec{d}_{S_{mj}}(Q)) \geq 3\times 2 = 6
     \end{equation}
     
     However the total number of atoms on the left hand side of \ref{r1} is exactly $6$, meaning the equal sign has to be taken everywhere in (\ref{sumdsS}) for any $m\in[1,k]$, forcing (\ref{defnsum}) to take equal sign as well. 
     
     Similarly, $(\ref{r2})$ ensures $||\vec{d}_{X_i}||_1 = 2$ for each $i\in[1,n]$.
     
     The reaction series (\ref{r3}) implies that each $S_i$ has at least one $P$ and each $X_i$ has at least one $Q$. Furthermore, 
     \begin{enumerate}
         \item if any $S_i$ decomposes to $2P$, its corresponding $X_i$ shall be composed of $PQ$, contradicting 
     (\ref{r2}) which says that no $X_i$ contains any $P$;
         \item if any $S_i$ decomposes to $PQ$, 
         it contradicts with (\ref{r1}) which says that no $S_i$ contains any $Q$.
     \end{enumerate} 
     
     Therefore all $S_i$ shall decompose to 
     either $\{1F,1P\}$ ($(0,1,1,0)^T$) or $\{1T, 1P\}$ ($(1,0,1,0)^T$). Correspondingly, $X_i$ decompose to either $(0,1,0,1)^T$ or $(1,0,0,1)^T$.
     
     Construct $\phi$ such that $\phi(v_j) = 1$ for all $v_j\in \{v_j\mid \vec{d}_{S_j}
     = (1,0,1,0)^T\}$, and $\phi(v_m) = 0$ for all $v_m\in V\setminus \{v_j\mid \vec{d}_{S_j}
     = (1,0,1,0)^T\}$. Exactly one in three of the products in the right hand side of (\ref{r1}) decomposes to $(1,0,1,0)$, so exactly one in three of the variables (literals) in each clause of $C$ evaluates to $1$. It follows that $\lag V,C\rag\in\MonotoneOneThree$.\qed
     }

\begin{proof}
    Given an instance $\lag V,C\rag$, we design a chemical reaction network $\calC$ where 
    \begin{enumerate}
        \item Each molecular species consists of $2$ atoms $T$ and $F$ (representing ``True'' and ``False'' respectively), and
        \item reactions guarantees the equivalence: 
    $\calC$ is subset-$\Delta$-atomic if and only if the 
    $\lag V,C\rag\in \MonotoneOneThree$. 
    \end{enumerate}
    \opt{sub,subn}{
    
    For details, see Section \ref{AppAA} in the Appendix.}
    \opt{normal}{\DetailedProofMonotoneOneThreePolyRedSSFA}
\end{proof}

\opt{normal}{
    We notice that the coefficients of all species in all the reactions $(\ref{r1})\sim(\ref{r3})$ are constants, so the numerical parameters -- entries of each $(\vec{r},\vec{p})\in R$ -- of the instances $\lag \Lambda, R, \Delta\rag$ constructed above are bounded by the constant $3$, which is again bounded by a polynomial of the length of $\lag V,C\rag$, presuming the encoding scheme is ``reasonable and concise''~\cite{garey1978strong}.
    We therefore conclude that:
}

\opt{sub,subn}{
    The full proof of Proposition~\ref{PartitionSFA} uses only coefficients of size $O(1)$ with respect to $|\lag V,C\rag|$, 
    which combined with Corollary \ref{cor:SFANPNP} establishes the following: 
}

\begin{cor}\label{strong}
$\SubsetFixedAtomic$ is strongly $\NP$-hard (and hence strongly $\NP$-complete).
\end{cor}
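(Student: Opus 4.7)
The plan is to observe that the polynomial-time reduction constructed in the proof of Proposition~\ref{PartitionSFA} is already a strong NP-hardness reduction, so the corollary will follow immediately by combining it with Corollary~\ref{cor:SFANPNP}. Recall (from~\cite{garey1978strong}) that a problem is strongly NP-hard if it remains NP-hard on the restriction to instances in which the largest numerical parameter is bounded by a polynomial in the length of the input, i.e., if it stays NP-hard even when all numbers are presented in unary.

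The first and only substantive step is to audit the reaction set produced by the reduction from $\MonotoneOneThree$. Every stoichiometric coefficient appearing in the constructed reactions is bounded by the constant $3$; the atom set $\Delta = \{T,F,P,Q\}$ has constant size; and the species set $\Lambda$ grows only linearly in $|V|$, while $|R|$ grows linearly in $|V| + |C|$. Consequently, every numerical parameter appearing in the encoding $\lag \Lambda, R, \Delta\rag$ of the output instance is bounded by a constant, hence certainly by a polynomial in $|\lag V,C\rag|$. Under any reasonable encoding of chemical reaction networks (e.g., listing coefficients in unary), this means the reduction of Proposition~\ref{PartitionSFA} directly witnesses strong NP-hardness of $\SubsetFixedAtomic$.

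Combining this with Corollary~\ref{cor:SFANPNP}, which places $\SubsetFixedAtomic$ in NP, immediately yields strong NP-completeness. There is no genuine obstacle here: the entire content of the corollary is the observation that the reduction of Proposition~\ref{PartitionSFA} happens to use only $O(1)$-sized coefficients, so the result comes essentially for free. If that reduction had been forced to use large numerical values, the plan would have to fall back on constructing a different reduction from a problem already known to be strongly NP-complete (such as 3-Partition), rather than from $\MonotoneOneThree$, in order to preserve the polynomial bound on numerical parameters.
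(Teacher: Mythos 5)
Your proposal is correct and follows essentially the same route as the paper: the paper likewise observes that the reduction of Proposition~\ref{PartitionSFA} produces instances whose stoichiometric coefficients are bounded by the constant $3$, so (under a reasonable encoding) the numerical parameters are polynomially bounded in $|\lag V,C\rag|$, and strong NP-hardness follows, with Corollary~\ref{cor:SFANPNP} supplying membership in NP for strong NP-completeness. No gaps to report.
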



\newcommand{\LongCommentSSFARemainsNPCWhenBiMolecular}{
In fact, one may convert, for each $m\in[1,k]$ (recall that $k$ is the number of clauses in $C$), any reaction in the series (\ref{r1}) ($3P+2F +T 
           \rightarrow S_{m1}+S_{m2}+S_{m3}\ (\forall m\in[1,k])$) into the following series:

\begin{eqnarray} \nonumber T+F&\leftrightharpoons& M_{m1}\\
\nonumber M_{m1}+F&\leftrightharpoons& M_{m2}\\
\nonumber M_{m2}+P&\leftrightharpoons& M_{m3}\\
\nonumber M_{m3}+P&\leftrightharpoons& M_{m4}\\
\nonumber M_{m4}+P&\rightarrow& M_{m5} + S_{m1}\\
\nonumber M_{m5}&\rightarrow& S_{m2}+S_{m3}
\end{eqnarray}

And apply similar methods to the $X_i$ species. Such conversion creates $2\times 10k=20k$ extra reactions and $2\times 5k= 10k$ intermediate species, which is polynomial in both time and space. 
}

\begin{rem}\label{SSFARemainNPCWhenBiMolecular}
$\SubsetFixedAtomic$ remains $\NP$-complete even restricted to instances where $R$ contains only unimolecular and bimolecular reactions. 
    \opt{sub,subn}{To see 
    details on this, see Section \ref{AppAA} in the Appendix.}

    \opt{normal}{\LongCommentSSFARemainsNPCWhenBiMolecular}
\end{rem}

The lower bound of the complexity of $\SubsetAtomic$ therefore remains open, but we conjecture that \SubsetAtomic\ is $\NP$-\text{hard} (hence $\NP$-$\Complete$).


\section{Complexity of reachably atomic}\label{52cva}

Without 
repeating the intuition of the definition of reachably 
atomic which has been explained in Subsection \ref{previous111} and Section \ref{newdefns}, we proceed with the corresponding definition of languages for deciding reachable atomicity and the reachability problem in reachably atomic networks. 

\begin{defn}\label{VisAtom}
 We define the following languages:
 \begin{eqnarray*}
 \ReachablyAtomic &=& \{\lag\Lambda,R\rag\mid (\exists\Delta\subseteq \Lambda)((\Lambda, R)\textrm{ is reachably atomic with respect to }\Delta)\}
 \\
 \ReachablyFixedAtomic &=& \{\lag\Lambda,R,\Delta\rag\mid (\Lambda, R)\textrm{ is reachably atomic with respect to }\Delta\}
  \end{eqnarray*}
 \end{defn}

 Distinction between $\ReachablyFixedAtomic$ and $\ReachablyAtomic$ is analogous to 
 ``$\SubsetFixedAtomic$ v.s. $\SubsetAtomic$''. However, 
 by Lemma \ref{uniquereachable} there is no semantic 
 reason to distinguish between ``$\ReachablyFixedAtomic$'' and $\ReachablyAtomic$. Hence, 
 we shall only consider $\ReachablyAtomic$.
 
\subsection{$\ReachablyAtomic$ is in P}

As mentioned before, the requirement that $\{1S\}\Rightarrow^* \vec{d}_S\ (\forall S\in\Lambda)$ 
ensures some interesting results. The complexity results in this subsection confirm this.

\newcommand{\DetailedProofLemmaShowingExistenceOfOneStepDecomposibleElement}{Suppose not, then for all reactions with $\vec{r}=\{1S\}$ for some $S\in\Lambda\setminus\Delta$, either $[\vec{p}]\cap (\Lambda\setminus\Delta)\neq \emptyset$, or $[\vec{p}]\subseteq\Delta$ but $\vec{p}\neq\vec{d}_S$. The last case cannot happen, due to the uniqueness of atomic decomposition for reachably atomic networks (Recall Lemma \ref{uniquereachable}). Hence for all $(\vec{r},\vec{p})$ with $\vec{r}=\{1S\}$ for some $S\in\Lambda\setminus\Delta$, $[\vec{p}]\cap (\Lambda\setminus\Delta)\neq \emptyset$ [*]. 
 
 [*], together with the reachable-atomicity, implies that for each $S\in\Lambda\setminus\Delta$ one may find a $S'$ s.t. $\|\vec{d}_S\|_1>\|\vec{d}_{S'}\|_1$ $[**]$. To see this, consider an arbitrary $S_i\in\Lambda\setminus \Delta$: any $(\vec{r},\vec{p})$ with $\vec{r}=\{1S_i\}$ has either $\|\vec{p}\|_1=1$, or $\|\vec{p}\|_1\geq 2$. In the second case we are done, for any $S_j\in[\vec{p}]\cap(\Lambda\setminus\Delta)$ satisfies $\|\vec{d}_{S_j}\|_1<\|\vec{d}_{S_i}\|_1$; in the first case, we have found some $S_{i+1}$ s.t. $\vec{d}_{S_{i+1}}=\vec{d}_{S_i}$ (and we call such $(\vec{r},\vec{p})$ an \emph{isomerization reaction}), so we recursively inspect into all $(\vec{r}_1,\vec{p}_1)$ with $\vec{r}_1=\{1S_{i+1}\}$. Such a recursion must finally terminate with some $S_{i+m}$
that satisfies $(\exists(\vec{r}_m,\vec{p}_m)\mid \vec{r}_m=\{1S_{i+m}\})\|\vec{p}_m\|_1\geq 2$, for otherwise $\vec{d}_{S_i}$ would not be reachably decomposible into $\vec{d}_{S_i}$ via any reaction sequence. It follows that any $S_{i+m+1}\in [\vec{p}_m]\cap (\Lambda\cap\Delta)$ satisfies $\|\vec{d}_{S_i}\|_1 > \|\vec{d}_{S_{i+m+1}}\|_1$.
 
 We have argued that our initial assumption (for the sake of contradiction) implies $[**]$. But $[**]$ would imply that there exists no molecular species with minimal size, contradicting the finiteness of $\Lambda$.
}

\newcommand{\StatementLemmaShowingExistenceOfOneStepDecomposibleElement}{
    If a network $\calC=(\Lambda, R)$ is reachably atomic with respect to $\Delta$ via decompositin matrix $\matr{D}$ (or equivalently, via the set of decomposition vectors $\{\vec{d}_S\}_{S\in\Lambda}$), then $\exists S\in \Lambda\setminus\Delta$ and $(\vec{r},\vec{p})\in R$ s.t. $\vec{r}=\{1S\}$ and $\vec{p}=\vec{d}_{S}$.
}

\begin{lem}\label{viswithdirect}
\StatementLemmaShowingExistenceOfOneStepDecomposibleElement
\end{lem}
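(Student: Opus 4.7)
The plan is to prove the lemma by a minimality argument on the total atomic size $\|\vec{d}_S\|_1$, combined with Lemma~\ref{uniquereachable} which guarantees uniqueness of the atomic decomposition reached from $\{1S\}$.

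First, I would set $M = \{ S \in \Lambda \setminus \Delta : \|\vec{d}_S\|_1 \text{ is minimum among all molecular species}\}$. This set is nonempty because $\Lambda \setminus \Delta$ is finite and nonempty (if $\Lambda = \Delta$ there is nothing to prove, but then the hypothesis of reachable-atomicity is vacuously satisfied; one could either assume $\Lambda \setminus \Delta \neq \emptyset$ or note the lemma is trivially vacuous otherwise). Every $S \in M$ satisfies $\|\vec{d}_S\|_1 \geq 2$ by the subset-atomic condition.

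Next, fix any $S \in M$. By reachable atomicity, $\{1S\} \reach \vec{d}_S$, so there is a reaction sequence starting at $\{1S\}$ and ending at $\vec{d}_S \in \N^\Delta$. Consider the first reaction $(\vec{r},\vec{p})$ applied; since the only molecules present are a single copy of $S$, we must have $\vec{r} = \{1S\}$. By atom conservation (condition~\eqref{primitive1} of Definition~\ref{primitiveatomic}), the products carry exactly $\|\vec{d}_S\|_1$ atoms in total when counted through their decompositions. I would then split into cases on the structure of $\vec{p}$:
\begin{enumerate}
    \item If $[\vec{p}] \subseteq \Delta$, then $\vec{p}$ is already a configuration of atoms reachable from $\{1S\}$, so by Lemma~\ref{uniquereachable} $\vec{p} = \vec{d}_S$, which gives the desired reaction immediately.
    \item If $[\vec{p}] \cap (\Lambda \setminus \Delta) \neq \emptyset$, pick any molecule $S' \in [\vec{p}] \cap (\Lambda \setminus \Delta)$. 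Atom conservation gives $\|\vec{d}_{S'}\|_1 \leq \|\vec{d}_S\|_1$. By minimality of $S$ in $M$, equality must hold, and moreover $\vec{p} = \{1 S'\}$ (a single isomer with no leftover atoms and no other molecules), so $S' \in M$.
\end{enumerate}

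The main obstacle is ruling out the second case so that we are forced into the first. I would do this by showing that if case~(2) holds for every $S \in M$, then starting from $\{1S\}$ every reaction we can fire leads only to configurations of the form $\{1 T\}$ with $T \in M$ (by induction on the length of the reaction sequence, since each such configuration again falls under case~(2)). Consequently every configuration reachable from $\{1S\}$ consists of exactly one molecule from $M$ and zero atoms, so $\vec{d}_S$ — which lies in $\N^\Delta$ and has $\|\vec{d}_S\|_1 \geq 2$ atoms — is never reachable. This contradicts reachable atomicity and forces case~(1), completing the proof. The delicate point is justifying that no intermediate configuration can ever contain an atom or a strictly smaller molecule once we assume case~(2) uniformly on $M$, which follows because every step preserves the invariant ``single molecule from $M$'' as argued above.
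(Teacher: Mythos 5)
Your proposal is correct and takes essentially the same route as the paper: both proofs rest on atom conservation forcing any splitting reaction out of $\{1S\}$ to yield strictly smaller molecules, on isomerization steps preserving $\|\vec{d}_S\|_1$, and on Lemma~\ref{uniquereachable} to conclude that a one-step all-atom product must equal $\vec{d}_S$. The only difference is packaging: the paper derives the contradiction by showing every molecular species would have one of strictly smaller decomposition size (impossible in a finite $\Lambda$), whereas you fix a molecule of minimal $\|\vec{d}_S\|_1$ and show it would be trapped among singleton isomers from $M$, so $\vec{d}_S$ could never be reached.
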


\begin{proof}
    The claim is saying that if a network is reachably atomic, then there exists a molecular species that can be decomposed into its atomic decomposition in \emph{one} single reaction. Proof is done by assuming otherwise and chasing the decomposition sequence to find an infinite descending chain of species ordered by the size of their decomposition vectors, contradicting the finiteness of species set. 
    \opt{sub,subn}{For details, see Section \ref{AppAA} in the Appendix.}
    \opt{normal}{\DetailedProofLemmaShowingExistenceOfOneStepDecomposibleElement}
\end{proof}

\newcommand{\ReachableAtomicIsInP}{$\ReachablyAtomic$ $\in$ $\mathsf{P}$.}

\begin{theorem}\label{VisAtomP}
\ReachableAtomicIsInP
\end{theorem}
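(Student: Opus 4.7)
The plan is a three-phase bottom-up polynomial-time algorithm. The key structural observation is that in any reachably-atomic network $\calC$, no atom $A\in\Delta^*$ can occur as the sole reactant of a unimolecular reaction: atom-conservation applied to $(\{1A\},\vec{p})$ forces $\|\vec{p}\|_1=1$ and $\vec{p}=\ve_A$, contradicting the convention $\vec{r}\neq\vec{p}$. This pins $\Delta^*$ down syntactically and reduces the problem to identifying the decomposition of each molecule. Phase~1 sets $\Delta := \{S\in\Lambda : R \text{ contains no reaction of the form }(\{1S\},\vec{p})\}$ and $\vec{d}_A := \ve_A$ for $A\in\Delta$. Phase~2 maintains a set $K$ of already-decomposed molecules (initially empty) and repeatedly picks $S\in(\Lambda\setminus\Delta)\setminus K$ together with a reaction $(\{1S\},\vec{p})\in R$ satisfying $S\notin[\vec{p}]$ and $[\vec{p}]\subseteq\Delta\cup K$, then sets $\vec{d}_S := \sum_{X\in[\vec{p}]}\vec{p}(X)\vec{d}_X$ and adds $S$ to $K$, halting when no such $(S,\vec{p})$ exists. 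Phase~3 accepts iff $K=\Lambda\setminus\Delta$, every $S\in K$ has $\|\vec{d}_S\|_1\geq 2$, and every $(\vec{r},\vec{p})\in R$ conserves atoms: $\sum_S\vec{r}(S)\vec{d}_S=\sum_S\vec{p}(S)\vec{d}_S$.

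Soundness is by construction: if the Phase~3 checks pass, the resulting $(\Delta,\matr{D})$ witnesses subset atomicity, and each $S\in K$ reaches $\vec{d}_S$ by recursively expanding the defining reaction chosen in Phase~2. For completeness, suppose $\calC$ is reachably atomic with true atom set $\Delta^*$. Phase~1 yields $\Delta=\Delta^*$ by the structural observation, and a straightforward induction on Phase~2 steps shows $K\subseteq\Lambda\setminus\Delta^*$ and $\vec{d}_S=\vec{d}_S^*$ throughout, since atom-conservation applied to the chosen defining reaction determines $\vec{d}_S$ uniquely from the already-known decompositions on its right-hand side. The crux is that Phase~2 cannot halt with $U:=(\Lambda\setminus\Delta^*)\setminus K$ nonempty: pick $S\in U$ minimizing $\|\vec{d}_S^*\|_1$; every reaction $(\{1S\},\vec{p})\in R$ must have $[\vec{p}]\cap U\neq\emptyset$ (else Phase~2 would have fired), while atom-conservation combined with the minimality of $\|\vec{d}_S^*\|_1$ in $U$ forces $\vec{p}=\ve_Y$ for some isomer $Y\in U$ of $S$. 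Hence every unimolecular reaction from a minimal-norm $S\in U$ is an isomerization into $U$, so the orbit of $\{1S\}$ stays within singleton configurations $\{1Z\}$ with $Z\in U$; but this orbit must reach $\vec{d}_S^*\in\N^{\Delta^*}$ with $\|\vec{d}_S^*\|_1\geq 2$, a contradiction.

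Complexity is straightforward: Phase~1 is $O(|R|)$; each Phase~2 round scans $|R|$ reactions and does arithmetic on decomposition vectors whose binary size stays polynomial in $|\lag\Lambda,R\rag|$ (each $\|\vec{d}_S\|_1$ is bounded by a product of stoichiometric coefficients over a substitution chain of length at most $|\Lambda|$, hence $O(|\Lambda|\cdot|\lag R\rag|)$ bits); at most $|\Lambda|$ rounds occur; Phase~3 is dominated by $|R|$ vector-equality checks. The main obstacle is the completeness argument — in particular, ruling out isomerization cycles inside $U$ by exploiting that the $\{1S\}$-orbit under unimolecular reactions never leaves singleton configurations, rather than relying on the descending-chain argument used for Lemma~\ref{viswithdirect}.
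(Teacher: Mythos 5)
Your algorithm and its analysis follow essentially the same route as the paper's proof: identify the atom set syntactically as the species that never occur as the sole reactant of a reaction, compute decompositions bottom-up by repeatedly expanding a species whose one-step products lie entirely in the already-decomposed set, and finally verify atom conservation against the stoichiometry. Your completeness argument (take a $\|\vec{d}^*_S\|_1$-minimal species $S$ in the stuck set $U$, show every reaction with reactant $\{1S\}$ must be an isomerization into $U$, and contradict $\{1S\}\reach\vec{d}^*_S$ because the orbit then never leaves singleton configurations) is a clean restatement of the descending-chain/isomerization-chasing argument the paper uses for Lemma~\ref{viswithdirect} and then re-applies to its set $M'$; it buys nothing essentially new but is arguably tidier about isomerization cycles. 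Your bit-size bound on the $\vec{d}_S$ under repeated substitution, and the appeal to uniqueness of $\Delta$ and of the decompositions (Lemma~\ref{uniquereachable}) in the correctness induction, are fine. One small unstated step: $\Delta=\Delta^*$ needs both inclusions, and the inclusion $\Lambda\setminus\Delta^*\subseteq\Lambda\setminus\Delta$ uses that reachability of $\vec{d}^*_S$ from $\{1S\}$ forces some reaction with reactant exactly $\{1S\}$; you only argue the other direction explicitly, though the missing half is the same one-line observation the paper also states tersely.

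One concrete discrepancy: your Phase~3 omits the paper's final check that every atom occurs in the decomposition of at least one \emph{molecular} species (and the degenerate checks $\Delta\neq\emptyset$, $\Lambda\setminus\Delta\neq\emptyset$). The paper's decider explicitly rejects in this situation (its ``Case 6''): for $\Lambda=\{A,B,C,D\}$ and $R=\{A\to B,\ B\to 3C,\ A+B\to 6C\}$ your algorithm places $D$ in $\Delta$ and accepts, whereas the paper rejects because $D$ appears in no $\vec{d}_S$ with $S$ a molecule. Whether this is a genuine error depends on how one reads Condition~(2) of Definition~\ref{primitiveatomic}: read literally it is automatic in the subset-atomic case since $\vec{d}_A=\ve_A$, so your algorithm matches the letter of the definition; but the paper's stated intent (no ``redundant'' atoms) and its own algorithm enforce the stronger requirement. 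Adding the check costs only polynomial time, so either include it or state explicitly which reading of the definition you adopt.
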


\newcommand{\ProofReachableAtomicIsInP}{describe this polynomial time algorithm in more details, argue for its correctness, and then exhibit the pseudo-code.

 Our algorithm will compose of the following steps:
 
 \begin{enumerate}
 
     \item Enumerate all reactions in $R$: for each 
     reaction starting with $\{1S\}$ for some $S\in\Lambda$, put $S$ in the set $M$ of molecules. After the enumeration, define $\Delta = \Lambda\setminus M$. If $\Delta = \emptyset$ 
     or $M=\emptyset$, reject. 
 
     \item Find if there exists some molecular species $S\in M$ such that $S$ is decomposible into some $\vec{d}_S$ with $[\vec{d}_S]\subseteq\Delta$ by execution of a single reaction. If any of such $\vec{d}_S$ is of size $0$ or $1$, reject. Keep track of each decomposition vector $\vec{d}_S$;
     
     Make a subset $M'$ of $M$ s.t. $M'$ contains all molecular species which CANNOT be decomposed into $\vec{d}_S$ with $[\vec{d}_S]\subseteq \Delta$ by execution of one reaction.  If $M'=M$, then reject.
     
     \item while $M'$ is not empty, iterate and try to find an $S\in M'$ that satisfies this condition: $\exists (\vec{r},\vec{p})\in R$ s.t. $\vec{r}=\{1S\}$ and $[\vec{p}]\subseteq (M\setminus M')\cup \Delta$. 
     Note that all elements $S''$ in $M\setminus M'$ satsifies $\{1S''\}\Rightarrow^* \vec{d}_{S''}$, hence if $S$ can be decomposed into complexes consisting solely of elements in $(M\setminus M')\cup\Delta$ via execution of one reaction, then $S$ itself satisfies $\{1S\}\Rightarrow^* \vec{d}_S$ as well. Keep track of $\vec{d}_S$ and exclude such $S$ from $M'$.
     
     If in some iteration we cannot find any such $S\in M'$, then 
     reject; else, the iteration will finally halt excluding all such $S$'s and making $M'$ empty.
     
     \item We have obtained $\vec{d}_S$ for each $S\in\Lambda$ (note that $\vec{d}_A=\vec{e}_A$ for all $A\in\Delta$) s.t. $[\vec{d}_{S}]\subseteq \Delta$ and $\{1S\}\Rightarrow^* \vec{d}_S$. By Lemma \ref{uniquereachable}, such set of decomposition is unique. Now, check if this decomposition conserves atoms by composing stoichiometric matrix $\matr{A}$ and decomposition matrix $\matr{D}$ and examine if $\matr{A}\cdot\matr{D}=\matr{0}$, and finally check if each atom $A$ appears at least once in some $\vec{d}_S$ for some $S\in M$.
 \end{enumerate}

 We first prove that if reachably atomic then the algorithm will halt in \textsc{accept}: 
 
 if $\calC$ is reachably atomic, then we claim that the set $\{S\in \Lambda\mid \exists (\vec{r},\vec{p})\in R\textrm{ s.t. }\{1S\}=\vec{r}\}$ is exactly the set of molecules $M$, with its complement $\Delta=\Lambda\setminus M$. To see this, recall that we prescribed there being no ``$\vec{r}\rightarrow\vec{r}$'' reactions in $R$, so all reactions $(\vec{r},\vec{p})\in R\mid \|\vec{r}\|=1$ is either an isomerization reaction ($\|\vec{p}\|_1=1,\vec{p}\neq\vec{r}$) or disassociation reaction ($\|\vec{p}\|_1\geq 2$). Both types of reactions can only happen when $S\in [\vec{r}]$ is a molecule; it follows that $\{S\in \Lambda\mid \exists (\vec{r},\vec{p})\in R\textrm{ s.t. }\{1S\}=\vec{r}\}\subseteq M$. Conversely, when $S\in M$, reachably atomicity gives $S\in\{S\in\Lambda\mid \exists (\vec{r},\vec{p})\in R\ \textrm{ s.t. }\{1S\}=\vec{r}\}$.\footnote{ We point out that the set of atoms $M\neq \{S\in\Lambda\mid \exists (\vec{r},\vec{p})\in R\ \textrm{ s.t. }(\{1S\}=\vec{r})\wedge(\|\vec{p}\|\geq 2)\}$, so we have to test the $\|\vec{d}_S\|\geq 2$ condition in later steps. This is because it might be the case that the only reaction $(\vec{r},\vec{p})$ with $\vec{r}=\{1S\}$  turns out to be an isomerization reaction. A counter example would be: \begin{eqnarray*}A&\rightarrow&B\\B&\rightarrow& 2C\end{eqnarray*}
 
 By our definition $M = \{S\in\Lambda\mid \exists (\vec{r},\vec{p})\in R\ \textrm{ s.t. }\{1S\}=\vec{r}\}$, we shall correctly identify $M=\{A,B\}$, yet the added condition $\|\vec{p}\| \geq 2$ would make $M=\{B\}$, a mis-identification.}

 Neither $M$ nor $\Delta$ would be empty, for $(\Delta=\emptyset)\Rightarrow (\calC$ is not reachably atomic) and $(M=\emptyset)\Rightarrow (R=\emptyset)$. Hence the algorithm passes Step $1$, correctly identifying the partition $(M,\Delta)$ of $\Lambda$.
 
 By Lemma \ref{viswithdirect}, reachable-atomicity implies that the algorithm will find at least one molecular species $S$ that directly decomposes to its atomic decomposition $\vec{d}_S$ and grantedly, $\|\vec{d}_S\|\geq 2$, so Step $(2)$ will be passed.
 
 Further, applying the same argument in Lemma \ref{viswithdirect} on the set $M'$, the while loop must shrink the cardinality of $M'$ by at least $1$ per iteration, and finally exit by making $M'$ empty, \footnote{That is, if ($\forall S\in M'$)($\forall(\vec{r},\vec{p})\in R$) ($\vec{r}=\{1S\}\Rightarrow [\vec{p}]\cap M'\neq\emptyset$), then for each species $S$ in $M'$ there will be $S'\in M'$ s.t. $\|S\|_1>\|S'\|_1$, contradicting the finiteness of $M'$.} passing Step $(3)$;
 
 Finally, the decomposition must preserve atoms for all reactions, and all atoms must appear in the decomposition of at least one molecule, which make both tests in 
 Step $(4)$ passed. 
 
 It remains to show that if $\calC$ is not reachably atomic then our algorithm will halt in \textsc{reject}.  
 We claim that: if $\calC$ is not reachably atomic, then exactly one of the following will be true:

 \begin{enumerate}
 
     \item There is no valid separation of $\Lambda$ into $M$ and $\Delta$. That is, either $\{S\in \Lambda\mid \exists (\vec{r},\vec{p})\in R\textrm{ s.t. }\{1S\}=\vec{r}\}=\emptyset$ (no species is the single reactant of an isomerization or disassociation reaction, so no species $S$ is decomposible from $\{1S\}$), or $\{S\in \Lambda\mid \exists (\vec{r},\vec{p})\in R\textrm{ s.t. }\{1S\}=\vec{r}\}=\Lambda$ (every species is the single reactant of some isomerization or disassociation reaction, which contradicts the definition of subset atomicity for atoms should be neither isomerizable nor decomposible).
      An example where $\{S\in\Lambda\mid \exists (\vec{r},\vec{p})\in R\ \textrm{ s.t. }\{1S\}=\vec{r}\}=\emptyset$ would be $(\Lambda = \{A,B,C\}, R=\{2A+3B\rightarrow 4C\})$, while $(\Lambda' = \{A,B\}, R'=\{A\rightarrow B, B\rightarrow A\})$ would be an instance where $\{S\in\Lambda\mid \exists (\vec{r},\vec{p})\in R\ \textrm{ s.t. }\{1S\}=\vec{r}\}=\Lambda$.
     
     Observe such a valid separation $(M,\Delta)$ of $\Lambda$, if existing, is unique for a certain $\calC=(\Lambda,R)$, since $S\in M$ if and only if $\exists (\vec{r},\vec{p})\in R$ s.t. $\vec{r}=\{1S\}$, and this property is uniquely decided by $R$.
 
     \item There exists the unique valid separation $(M,\Delta)$ of $\Lambda$, but there exists no molecular species directly decomposible into its atomic decomposition via execution of one single reaction. That is, $(\forall S\in M)(\forall (\vec{r},\vec{p})\in R)(\vec{r}=\{1S\}\Rightarrow [\vec{p}]\cap M\neq\emptyset)$. 
An example of this is $(\Lambda = \{A,B,C\}, R=\{A\rightarrow B+5C, B\rightarrow A+5C\})$. We would successfully identify $M=\{A,B\}, \Delta = \{C\}$, but for all reactions we $(\vec{r},\vec{p})\in R$ have  $[\vec{p}]\cap M\neq \emptyset$.

\item There exists the unique valid separation $(M,\Delta)$ of $\Lambda$ and $\{S\in M\mid(\exists (\vec{r}_S,\vec{p}_S)\in R)((\vec{r}_S=\{1S\})\wedge ([\vec{p}_S]\subseteq \Delta))\}\neq\emptyset$, but for some $S\in \{S\in M\mid(\exists (\vec{r}_S,\vec{p}_S)\in R)((\vec{r}=\{1S\})\wedge ([\vec{p}_S]\subseteq \Delta))\}$, $\|\vec{p}\|\leq 1$. That is, we have some reaction $S_1\rightarrow A_1$ with $S_1\in M$ and $A_1\in\Delta$, or $S_1\rightarrow \emptyset$. In this case, either a molecule decomposes to a single atom, or it vanishes. 

Typical Examples are: $\calC_1 = (\{A,B,C\},\{A\rightarrow B, B\rightarrow C\})$, $\calC_2 = (\{A,B,C\},\{A\rightarrow 2C, B\rightarrow \emptyset\})$. In both cases one would identify $M_{\calC_1}=M_{\calC_2} = \{A,B\}$; for both networks, $\{S\in M\mid(\exists (\vec{r}_S,\vec{p}_S)\in R)((\vec{r}=\{1S\})\wedge ([\vec{p}_S]\subseteq \Delta))\}=\{A,B\}$. But In $\calC_1$, $B$ decomposes to a single atom $C$; in $\calC_2$, $B$ vanishes. This disqualifies both sets from being reachably atomic by placing them in the third case.

     \item There exists the unique valid separation $(M,\Delta)$ of $\Lambda$, and $\{S\in M\mid(\exists (\vec{r}_S,\vec{p}_S)\in R)((\vec{r}_S=\{1S\})\wedge ([\vec{p}_S]\subseteq \Delta))\}\neq\emptyset$; further, each $S\in \{S\in M\mid(\exists (\vec{r}_S,\vec{p}_S)\in R)((\vec{r}_S=\{1S\})\wedge ([\vec{p}_S]\subseteq \Delta))\}$ satisfies $\|\vec{p}\|_1\geq 2$. However, there exists some indecomposible molecular species.  
      That is, $\exists$ $S'\in M$ s.t. $\forall \vec{c}\in\N^\Lambda$ where $\{1S'\}\reach\vec{c} $, $[\vec{c}]\cap M\neq \emptyset$. 
      
      An example for this case is $\calC = (\{A,B,C,D,E\},\{A\rightarrow B,  B\rightarrow C, D\rightarrow 3E\})$. One may identify $M = \{A,B,D\}$ and find $\{D\}= \{S\in M\mid (\exists (\vec{r},\vec{p})\in R)((\vec{r} = \{1S\})\wedge([\vec{p}]\subseteq \Delta))\}$. Further, the reaction $D\rightarrow 3E$ where $\vec{r}=\{1D\}$ satisfie $\|\vec{p}\|=3$. This network does not belong to any of the first few cases but it does belong to Case $4$, for $\forall \vec{c}$ where $\{1A\}\Rightarrow^* \vec{c}$, $[\vec{c}]\subseteq \{B,C\}\subseteq M$. 
     
     \item There is a unique valid decomposition $(M,\Delta)$ of $\Lambda$ and $(\forall S\in M)$ $(\exists \vec{c}_S$ with $[\vec{c}_S]\subseteq\Delta)$ $(\{1S\}\reach\vec{c}_S)\wedge(\|\vec{c}_S\|_1\geq 2)$, but the decomposition does not preserve atoms for some reaction. That is, with $\matr{A}$ the stoichiometric matrix and $\matr{D}$ the decomposition matrix (row vectors being the $\vec{c}_S$'s restricted to $\Delta$), $\matr{A}\cdot\matr{D}\neq \matr{0}$. Note that for reachably atomic networks, atomic decomposition vectors (or equivalently, matrix) should be unique. 
    
    One example of this would be $(\{A,B,C,D\},\{ A\rightarrow B, B\rightarrow 3C, A+B\rightarrow 5C+D\})$. Here we have $M=\{A,B\}$ and $\{3C\}\underbrace{=}_{B\rightarrow 3C}\vec{d}_B\underbrace{=}_{A\rightarrow B}\vec{d}_A\underbrace{=}_{A+B\Rightarrow 5C+D, B\rightarrow 3C}\{2C+D\}$, contradicting the preservation of atoms. Note that this happens to be another example where a network is mass conserving (Just set $\vec{m}(A)=\vec{m}(B)=3\vec{m}(C)=3\vec{m}(D)= 3$) but not subset atomic (and hence not reachably atomic). \item There is a unique valid decomposition $(M,\Delta)$ of $\Lambda$ and $(\forall S\in M)$ $(\exists \vec{c}_S$ with $[\vec{c}_S]\subseteq\Delta)$ $(\{1S\}\reach\vec{c}_S)\wedge(\|\vec{c}_S\|_1\geq 2)$, and the decomposition preserves atoms ($\matr{A}\cdot\matr{D}=0)$, but some atoms are ''redundant'': $\exists A\in\Delta$ s.t. $\forall S\in M$, $A\notin[\vec{d}_S]=[\vec{c}_S]$.
    
    One may ``fix'' the last example into this case: $(\{A,B,C,D\},\{ A\rightarrow B, B\rightarrow 3C, A+B\rightarrow 6C\})$. Here we have $M=\{A,B\}$ and $\{3C\}\underbrace{=}_{B\rightarrow 3C}\vec{d}_B\underbrace{=}_{A\rightarrow B}\vec{d}_A\underbrace{=}_{A+B\Rightarrow 6C, B\rightarrow 3C}\{3C\}$, so every condition for reachably atomic is satisfied, except that $(\forall S)D\not\in[\vec{d}_S]$.
 \end{enumerate}

All six cases exclude each other, so \emph{at most} one case could hold; on the other hand, the negation of the disjunction of all six cases says that there exists a non-empty proper subset of $\Lambda$ and a decomposition matrix $\matr{D}$, such that all three conditions of primitive atomicity holds with respect to $\Lambda$ via $\matr{D}$, and $\{1S\}\Rightarrow ^*\vec{d}_S$ for all $S$. This is exactly the definition of reachably atomicity. So taking contraposition, non-reachable-atomicity implies \emph{at least} one of the six cases hold.

Instances satisfying Case $(1)$ will be rejected in Step $(1)$, while Cases $(2)$ and $(3)$ will get rejected in Step $(2)$. In case $(4)$, the loop for finding decomposition vectors must terminate before $M'$ is emptied, so it will get rejected by Step $(3)$; Cases $(5)$ and $(6)$ triggers rejection in Step $(4)$.

We exhibit the following pseudocode for the decider:

 \begin{algorithm}[H]
 
\DontPrintSemicolon
\emph{Initialize global set $M,M',\Delta, D=\emptyset$\ //$D$: $\{$decomposition vectors$\}$.}\;
\For{$(\vec{r},\vec{p})\in R$}{
    \If{$(\exists S\in \Lambda)\vec{r}=\{1S\}$}{
        $M\leftarrow M\cup \{S\}$
        }
    }
$\Delta\leftarrow\Lambda\setminus M$\;

$M'\leftarrow M$\;

\If{$M=\emptyset$ or $\Delta = \emptyset$}{
    $\textsc{Reject}$
}

$D\leftarrow D\cup \{\vec{e}_{A}\}_{A\in\Delta}$\;

\For{$(\vec{r},\vec{p})\in R$ where $(\exists S\in M)\vec{r}=\{1S\}$}{
    \If{$[\vec{p}]\subseteq \Delta$}{
        \If{$\|\vec{p}\|_1\leq 1$}{\textsc{reject}
        }
    }
    $D\leftarrow D\cup \{\lag\vec{d}_{S} = \vec{p}\rag\}$\;
    $M'\leftarrow M'\setminus\{S\}$\;
}

\If{$M'=M$}{
    \textsc{reject}
}
\While{$M'\neq\emptyset$}{
    \If{$(\forall S\in M')$  $(\forall (\vec{r},\vec{p})\in R\mid \vec{r}=\{1S\})$  $([\vec{p}]\cap M'\neq\emptyset)$
}{
        \textsc{reject}
    }
    \Else{\
        \For{($S\in M'\mid (\exists(\vec{r},\vec{p})\in R\mid \vec{r}=\{1S\}\textrm{ and }[\vec{p}]\cap M'=\emptyset))$}{
            $D\leftarrow D\cup \{\lag\vec{d}_{S} = \sum_{S'\in[\vec{p}]}\vec{d}_{S'}\rag\}$\;
        
            $M'\leftarrow M'\setminus\{S\}$ \;
        }
    }      
}
   \emph{Compose $\matr{A}$ (stochiometric matrix) and $\matr{D}$ (decomposition matrix)}\;    
   
   \If{$\matr{A}\cdot \matr{D}\neq \matr{0}$}{
        \textsc{reject}
    }
   \If{$(\exists A\in\Delta)(\forall S\in M)A\not\in[\vec{d}_S]$}{
        \textsc{reject}
    }
   \textsc{accept}\;

\caption{Decider for $\ReachablyAtomic$}\label{VisAtomPolyTime}
\end{algorithm}

 Let us briefly mention that this is a polynomial time algorithm. The first $\texttt{for}$-loop takes $O(|R|)$ time; the second $\texttt{for}$-loop takes at most $O(|R|)$ iterations, and each iteration takes $O(|\Lambda|^3)$ time; as for the $\texttt{while}$ loop, note that it either shrinks the size of $M'$ by $1$ per iteration, or $\textsc{rejects}$. Hence the \texttt{while} loop takes at most $O(|\Lambda|)$ to exit. The $\texttt{if}$-statement inside the $\texttt{while}$-loop takes $O(|\Lambda|\cdot |R|\cdot |\Lambda|^2)$ to evaluate. Lastly, composing and multiplying $\matr{A}\cdot \matr{D}$ takes $O(|R||\Lambda|\cdot |\Lambda|^2)$ time, and verifying each $A\in\Delta$ is ``used'' by the decomposition of some molecule is $O(|\Lambda|\cdot|\Lambda|\cdot|\Lambda|)$. The times complexity is therefore dominated by the $\texttt{while}$, which is $O(|R||\Lambda|^4)$. No input, output or intermediate encoding takes more than polynomial space to record, so $\ReachablyAtomic\in \textsf{P}$, as desired.
 }

\begin{proof}
    We need to exhibit a polynomial time algorithm 
    that decides whether there exists a separation of $\Lambda$ into two non-empty, disjoint sets $M$ (molecules) and $\Delta$ (atoms), with elements in $M$ decomposable  via sequences of reactions into combination of elements in $\Delta$. 
    
    To achieve this goal, we set $M=\{S\in\Lambda\mid (\exists(\vec{r},\vec{p})\in R)\vec{r}=\{1S\}\}$, 
    the subset of species which are the single reactant of some reaction; 
    apparently $M$ is non-empty for reachably-atomic networks, by Lemma
    \ref{viswithdirect}. Then recursively, we check if there exist elements in $M$ that can be decomposed into combination of atoms 
    via a reaction sequence of length $i=1,2,\cdots$, and reject if  we succeed to do so at $i=k$ but fails at $i=k+1$ while not all elements in $M$ have been examined. 
    When this process terminates (note that $M$ is finite) 
    finding (candidate) atomic decomposition for all molecules, we 
    verify if the necessary conservation laws hold. 
    \opt{sub,subn}{Details of the proof are included in Section \ref{AppAA} of the Appendix.}
    \opt{normal}{\ProofReachableAtomicIsInP}
\end{proof}

\subsection{$\ReachableReach$ is $\Pspace$-$\Complete$}

We shall first introduce 
the definition of configuration reachability graphs, followed by a result proved in 
\cite{mayr2014framework} (see also Subsection \ref{previous111}), 
based on which 
we prove 
$\ReachableReach$ (see Definition \ref{RRReach}), a problem motivated by restricting relevant problems such as ``exact reachability'' \cite{leroux2011vector}, is $\Pspace$-$\Complete$.

\begin{defn}[Configuration Reachability Graph] An \emph{$\vec{i}$-initiated Configuration Reachability Graph} $G_{\calC,\vec{i}}$ of the chemical reaction network $\calC=(\Lambda, R)$ is a directed graph $(V,E)$, where:\begin{enumerate}
    \item each $v_{\vec{c}}\in V$ \emph{(}$\vec{c}\in\N^\Lambda$\emph{)} is labeled by a reachable configuration 
    $\vec{c}$
    \ of $\calC$ ;
    \item $v_{\vec{i}}\in V$\emph{(}$\vec{i}\in\N^\Lambda$\emph{)} is the vertex labeled by the initializing configuration $\vec{i}$;
    \item the ordered pair $(v_{\vec{c}_1},v_{\vec{c}_2})\in E$ if and only if $\vec{c}_1\Rightarrow^{1}\vec{c}_2$.
\end{enumerate}  
\end{defn}

\begin{rem}\normalfont{}
For the sake of simplicity, we use $G_{\mathcal{C},\vec{i}}$ as shorthand for $G_{\mathcal{C},v_{\vec{i}}}$.
\end{rem}

\newcommand{\EgConfigGraph}{
\begin{example}\label{egConfigGraph}\normalfont{}
Consider $C=(\Lambda, R)$ where $\Lambda = \{S_1,S_2,S_3,A_1\}$ $($we use the order exhibited above for 
$\N^{\Lambda})$, and $R$, in its explicit form, consists of
\begin{eqnarray}
 S_1&\rightarrow & 4A_1\label{dec1}\\
S_2&\rightarrow& 9A_2\label{dec2}\\
2S_1 + S_2&\rightarrow& S_3\label{bigger}
\end{eqnarray}

\quad Now consider two intialization vectors, $\vec{i} = (0,2,1,0)^T$ $($that is, $\{2S_2, 1S_3\})$ and $\vec{i}^{'}=(2,1,3,0)^T($i.e., $\{2S_1,1S_2,3S_3\})$. For $\vec{i}$, the only possible reaction is $(\ref{dec2})$ and the only possible reaction path is two consecutive executions of $(\ref{dec2})$. Hence $G_{C,\vec{i}}=(V,E)$ where $V=\{v_{\vec{i}},v_{(0,1,1,9)^T},v_{(0,0,1,18)^T}\})$ and $E=\{(v_{\vec{i}},v_{(0,1,1,9)^T}),(v_{(0,1,1,9)^T}, v_{(0,0,1,18)^T}\}$.

\quad The case for $\vec{i}^{'}$ is more complicated. Potential execution paths include: $(\ref{bigger})$; $(\ref{dec1})\rightarrow(\ref{dec1})\rightarrow(\ref{dec2})$; $(\ref{dec1})\rightarrow(\ref{dec2})\rightarrow(\ref{dec1})$; $(\ref{dec2})\rightarrow(\ref{dec1})\rightarrow(\ref{dec1})$. We may construct the Configuration Reachability Graph $G_{C,\vec{i}^{'}}$ following these paths. Figures of $G_{C,\vec{i}}$ and $G_{C,\vec{i}^{'}}$ are shown below: 
\begin{figure}[H]\label{crg2}
\includegraphics[height=0.3
\textheight,width=
0.75\textwidth]{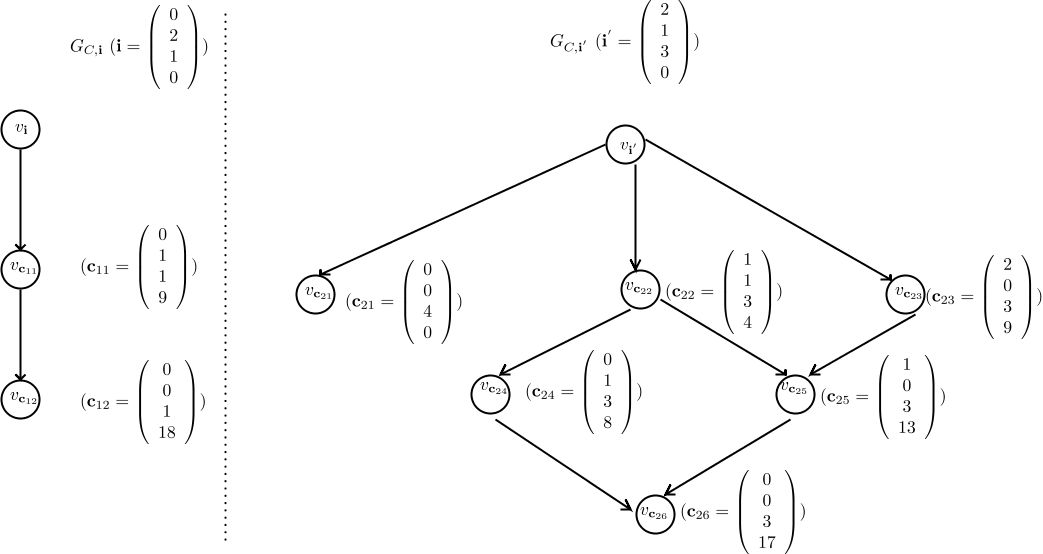}\caption{Configuration Reachability Graphs $G_{C,\mathbf{i}},G_{C,\mathbf{i}^{'}}$}
\end{figure}
\end{example}
}

 For the same $\calC$, Configuration Reachability Graphs can be far from isomorphic due to parameterization by different initialization vectors.
  \opt{sub,subn}{We have included an example (Example \ref{egConfigGraph}) in Section \ref{Auxiliary}.}

    \opt{normal}{\EgConfigGraph}
    
We will soon prove the conclusion on the complexity of the reachability problem for reachably atomic networks. But first, we point out that the following is 
a 
straightforward translation of a finding in \cite{mayr2014framework}, giving the complexity class of reachability problems for mass-conserving chemical reaction networks.

\begin{obs}[A result proved in \cite{mayr2014framework}]\label{MCComplete}
    For all mass conserving chemical reaction networks $\calC$ and initial configuration $\mathbf{i}$ of $\calC$,  $|\lag G_{\calC,\mathbf{i}}\rag|\in O(2^{poly(|\lag\calC, \vec{i}\rag|)})$. That is, the binary size of the encoding of the configuration reachability graph $G_{\calC,\mathbf{i}}$ is at most exponential to the binary size of the encoding of the pair $(\calC, \vec{i})$. 
    
    Furthermore, reachability problem for mass conserving networks is $\Pspace$-$\Complete$. That is, it is $\Pspace$-$\Complete$ to decide if an instance is in the following language:
    \begin{equation*}
        \{\lag \Lambda, R, \vec{c}_1,\vec{c}_2\mid (\Lambda, R) \underline{\textrm{ is mass conserving}}; \ \vec{c}_1,\vec{c}_2\in \N^\Lambda; \vec{c}_1\Rightarrow^*\vec{c}_2\rag\}
    \end{equation*}
\end{obs}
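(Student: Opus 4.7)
The plan is to leverage the syntactic equivalence between chemical reaction networks and Petri nets so that mass-conserving CRNs correspond exactly to the conservative Petri nets analyzed by Mayr and Weihmann~\cite{mayr2014framework}. First, I would invoke Proposition~\ref{primitivemcequiv} (or directly Chubanov's algorithm) to produce a mass distribution vector $\vec{m} \in \N_{>0}^{\Lambda}$ whose binary encoding has size polynomial in $|\lag \calC \rag|$; the same scaling-by-$\mathrm{lcm}$ argument used in the proof of Proposition~\ref{primitivemcequiv} guarantees this blow-up is polynomial. This step converts the ``there exists a real positive mass vector'' guarantee into concrete per-coordinate bounds.

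Next, I would establish the size bound on $G_{\calC,\vec{i}}$. For every configuration $\vec{c}$ reachable from $\vec{i}$, each reaction preserves $\vec{m}\cdot(\cdot)$, so $\vec{m}\cdot \vec{c} = \vec{m}\cdot \vec{i}$. Since every $\vec{m}(S)\geq 1$, this forces $\vec{c}(S)\leq \vec{m}\cdot \vec{i}$ for all $S \in \Lambda$, giving each coordinate a binary-size bound polynomial in $|\lag \calC,\vec{i}\rag|$. Therefore the number of reachable configurations, i.e.\ vertices of $G_{\calC,\vec{i}}$, is at most $(\vec{m}\cdot\vec{i}+1)^{|\Lambda|} = 2^{\mathrm{poly}(|\lag\calC,\vec{i}\rag|)}$, and each vertex has at most $|R|$ outgoing edges, each specified in polynomial space. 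This gives the claimed $O(2^{\mathrm{poly}(|\lag\calC,\vec{i}\rag|)})$ bound.

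For PSPACE-membership of reachability, I would combine this bound with Savitch's theorem~\cite{savitch1970relationships}. A nondeterministic procedure stores only the current configuration (polynomial size by the bound above) and a step counter of polynomially many bits (since the step limit is the vertex count, at most exponential), guessing at each step which reaction to fire and checking whether $\vec{c}_2$ is reached. This is in $\mathsf{NPSPACE} = \mathsf{PSPACE}$. For PSPACE-hardness, I would note that conservative Petri nets and mass-conserving CRNs are the same syntactic object under the standard species/place and reaction/transition correspondence, with the mass vector playing the role of the conservation law; the PSPACE-hardness of reachability in the former established in~\cite{mayr2014framework} transfers directly.

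The step I expect to require the most care is not the high-level reduction, which is essentially a translation of terminology, but rather the bookkeeping needed to certify that the mass distribution vector used for the coordinate bound has binary size polynomial in $|\lag \calC,\vec{i}\rag|$; this relies on the Chubanov plus $\mathrm{lcm}$-scaling route worked out in Proposition~\ref{primitivemcequiv}. Once that is in place, both the graph-size bound and the complexity result follow from the standard Petri-net analysis without additional novelty.
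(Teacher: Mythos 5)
Your proposal is correct, but it is worth noting that the paper itself gives no proof of this statement at all: it is stated as an Observation explicitly attributed to Mayr and Weihmann~\cite{mayr2014framework}, described in the text as a ``straightforward translation'' of their result on conservative Petri nets, and then used as a black box in the proof of Proposition~\ref{ReachableReach}. What you have done is reconstruct the argument behind the citation, and the reconstruction is sound: extracting an integer mass vector $\vec{m}\in\N_{>0}^{\Lambda}$ of polynomial binary size (via Chubanov plus lcm-scaling as in Proposition~\ref{primitivemcequiv}, though standard bounds on vertex solutions of the LP $\matr{A}\vec{m}=\vec{0}$, $\vec{m}\geq\vec{1}$ would serve equally well), deducing $\vec{c}(S)\leq\vec{m}\cdot\vec{i}$ for every reachable $\vec{c}$, hence at most $(\vec{m}\cdot\vec{i}+1)^{|\Lambda|}=2^{\mathrm{poly}(|\lag\calC,\vec{i}\rag|)}$ vertices each of polynomial description size, and then getting membership from a nondeterministic search over configurations with a polynomial-bit step counter together with $\mathsf{NPSPACE}=\Pspace$ (the paper's later use of Savitch on the exponential-size configuration graph is the same idea in different clothing). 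Two small points of care: to decide the language as literally stated you must also check the side condition ``$(\Lambda,R)$ is mass conserving,'' which is polynomial-time by Proposition~\ref{primitivemcequiv} and so harmless; and for $\Pspace$-hardness you, like the paper, ultimately defer to~\cite{mayr2014framework} via the syntactic CRN/Petri-net correspondence rather than giving a reduction, which is acceptable here since the statement is an imported result rather than a theorem the paper claims to prove (the paper only supplies its own hardness reduction later, for the reachably atomic restriction in Proposition~\ref{ReachableReach}).
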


Built on Observation \ref{MCComplete}, we now exhibit the proof that the decision problem ``Given a Reachably Atomic network, is $\vec{c}_2$ reachable from $\vec{c}_1$'' is $\Pspace$-$\mathsf{Complete}$. 

\begin{defn}[$\ReachableReach$] \label{RRReach}
    We define the language 
    \begin{equation*}
        \textsc{Reachable-Reach}=\{(\Lambda, R, \vec{c}_1,\vec{c}_2)\mid (\Lambda,R) \underline{\textrm{ is reachably atomic}};\ \vec{c}_1,\vec{c}_2\in\N^{\Lambda};\vec{c}_1\reach\vec{c}_2\}
    \end{equation*}
\end{defn}



\newcommand{\StatementOfTheoremReachableReachIsPspaceComplete}{
$\ReachableReach$ is $\Pspace$-$\Complete$. 
}

\begin{prop}\label{ReachableReach}
\StatementOfTheoremReachableReachIsPspaceComplete
\end{prop}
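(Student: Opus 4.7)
The plan is to establish the two directions separately: membership in $\Pspace$, and $\Pspace$-hardness.

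For the upper bound, I would observe that any reachably atomic network is subset atomic, hence primitive atomic, hence mass conserving by Proposition~\ref{primitivemcequiv}. A decision procedure for $\ReachableReach$ first runs the polynomial-time reachable-atomicity algorithm of Theorem~\ref{VisAtomP} on $\lag\Lambda,R\rag$, rejecting if the answer is negative. Upon verifying reachable atomicity, it invokes the $\Pspace$ reachability algorithm for mass-conserving networks guaranteed by Observation~\ref{MCComplete} on $\lag\Lambda,R,\vec{c}_1,\vec{c}_2\rag$. Composing a polynomial-time check with a $\Pspace$ decision yields a $\Pspace$ overall procedure.

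For the lower bound, I would reduce the mass-conserving reachability problem (which Observation~\ref{MCComplete} asserts is $\Pspace$-hard) to $\ReachableReach$. Given an instance $\lag\Lambda,R,\vec{c}_1,\vec{c}_2\rag$ of mass-conserving reachability, Proposition~\ref{primitivemcequiv} yields in polynomial time a positive integer mass vector $\vec{m}\in\N_{>0}^{\Lambda}$; by uniformly rescaling we may further assume $\vec{m}(S)\geq 2$ for every $S\in\Lambda$. I then construct $\calC'=(\Lambda',R')$, where $\Lambda'=\Lambda\cup\{A\}$ for a fresh species $A$, and $R'=R\cup\{\,S\to \vec{m}(S)\cdot A\mid S\in\Lambda\,\}$. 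Setting $\Delta=\{A\}$ and $\vec{d}_S=\vec{m}(S)\cdot\vec{e}_A$ for $S\in\Lambda$, Definition~\ref{visat} is satisfied: the old reactions preserve the $A$-count because they preserve mass, the new one-step dissociations trivially preserve atoms and directly witness $\{1S\}\Rightarrow^{*}\vec{d}_S$, the rescaling forces $\|\vec{d}_S\|\geq 2$, and the unique atom $A$ appears in every molecular decomposition. The reduction outputs $\lag\Lambda',R',\vec{c}_1,\vec{c}_2\rag$, where $\vec{c}_1,\vec{c}_2$ are extended to $\N^{\Lambda'}$ with zero on the $A$-coordinate.

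The correctness claim is that $\vec{c}_1\Rightarrow^{*}_{\calC}\vec{c}_2$ if and only if $\vec{c}_1\Rightarrow^{*}_{\calC'}\vec{c}_2$. The $(\Rightarrow)$ direction is immediate since $R\subseteq R'$. The main obstacle, and the crux of the hardness reduction, is the $(\Leftarrow)$ direction: one must argue that the newly added decomposition reactions cannot create spurious reachability. Here the key observation is that $A$ never appears as a reactant in $R'$, so once an $A$-atom is produced it persists forever; since $\vec{c}_2(A)=0$, any $\calC'$-execution ending at $\vec{c}_2$ must use none of the new reactions, and thus is a valid $\calC$-execution. Together with the polynomial bit-size bound on $\vec{m}$ obtained via Chubanov's algorithm in the proof of Proposition~\ref{primitivemcequiv}, this shows that the reduction runs in polynomial time and preserves the answer, completing the $\Pspace$-hardness and hence $\Pspace$-completeness of $\ReachableReach$.
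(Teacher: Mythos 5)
Your proof is correct, and the upper bound is exactly the paper's argument (reachably atomic $\Rightarrow$ mass conserving, then the $\Pspace$ reachability algorithm from Observation~\ref{MCComplete}); but your hardness direction takes a genuinely different route. The paper proves $\Pspace$-hardness from scratch by simulating a polynomial-space Turing machine: species for states, head positions and tape cells, one reaction per transition, plus $S\to 2A$ for every non-atom species, with a target configuration containing no $A$ so that the decomposition reactions cannot be used on an accepting path. You instead reduce from the already-cited $\Pspace$-hard reachability problem for mass-conserving networks, using Proposition~\ref{primitivemcequiv} (Chubanov) to extract a polynomial-bit-size integer mass vector $\vec{m}$, and then ``atomize'' the given network by adjoining a fresh atom $A$ and the dissociations $S\to\vec{m}(S)\cdot A$; your key observation that $A$ is never a reactant, so its count is monotone and the $A$-free target forbids any use of the new reactions, is the same trick the paper uses implicitly to keep its TM simulation faithful. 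Your route is more modular and shorter, since it reuses the black-box hardness of conservative Petri net reachability together with the constructive content of Proposition~\ref{primitivemcequiv}; the paper's route is self-contained and, because its coefficients are $O(1)$, additionally yields the \emph{strong} $\Pspace$-hardness noted in the remark following the proposition, which your reduction does not (your stoichiometric coefficients $\vec{m}(S)$ are only polynomially bounded in bit-size, not in magnitude). Two small points to tidy up: the source language of Observation~\ref{MCComplete} has the mass-conservation condition built into membership, so your reduction must also be defined on inputs whose network is not mass conserving --- just run the polynomial-time mass-conservation test first and output a fixed non-member of $\ReachableReach$ in that case; and you should note explicitly that scaling $\vec{m}$ by $2$ preserves $\matr{A}\cdot\vec{m}=\vec{0}$, so the rescaled vector is still a valid mass/decomposition assignment with $\|\vec{d}_S\|\geq 2$, as Definition~\ref{subatom} requires.
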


\newcommand{\DetailedProofOfReRePspaceComplete}{
Let $\lag\Lambda, R,\vec{c}_1,\vec{c}_2\rag$ be an instance, and let $Z:= |\lag\Lambda, R,\vec{c}_1,\vec{c}_2 \rag|$. In Theorem \ref{VisAtomP} we proved that $\ReachablyAtomic\in P$ so we may run the polynomial decider on $\lag\Lambda,R\rag$ and $\textsc{reject}$ if $\lag\Lambda,R\rag\not\in\ReachablyAtomic$.

If the $\ReachablyAtomic$ decider halts in $\textsc{accept}$, we would obtain $\Delta\subseteq \Lambda$ with respect to which $(\Lambda,R)$ is reachably atomic, as well as the set $\{\vec{d}_S\}_{S\in\Lambda}$ of decomposition vectors. Further, we would have confirmed that $\mathcal{C}=(\Lambda,R)$ is mass-conserving, for this is implied by reachably atomicity. 
Recall from \cite{mayr2014framework} that the number of vertices in a configuration reachability graph $G_{\mathcal{C},\vec{c}_1}$ for mass-conserving network $\mathcal{C}$ is at most exponential to the binary size of the input
. Now, let $n=|V_{G_{\mathcal{C},\vec{c}_1}}|$, then by Savitch's Theorem\cite{papadimitriou2003computational}, $$\ReachableReach\in \textbf{\textrm{SPACE}}((\log n)^2)=\Space((\log((O(2^{\textrm{poly}(Z)})))^2) = \Space(O(\textrm{poly}(L))^2)$$

It follows that $\ReachableReach\in\Pspace$.

As for the $\Pspace$ hardness, we shall prove by simulating a polynomial-space Turing Machine. That is, consider the language $$L:=\{\lag M,x,0^{|x|^c}\rag\mid M\textrm{ is an }O(|x|^c)-\textrm{space, clocked Turing Machine, }x\in\{0,1\}^*: M(x)\rightarrow 1\}$$

Just to clarify the notation, ``$M(x)\rightarrow 1$'' means $M$ on the input $x$ runs for $O(|x|^c)$ time and $\textsc{accepts}$. 
We shall construct an $\ReachableReach$ instance $\lag \Lambda, R,\vec{c}_1,\vec{c}_2\rag$ by a polynomial time reduction from an instance $\lag M,x,0^{|x|^c}\rag$, and show that $\lag M,x,0^{|x|^c}\rag\in L$

if and only if $\lag \Lambda, R,\vec{c}_1,\vec{c}_2\rag\in \ReachableReach$. 

Without loss of generality, assume the initial configuration of $M$ is $q_1\in Q_M=\{q_1,q_2,\cdots, q_{t-2},q_A,q_R\}$ (where $t:=|Q_M|$, $Q_A$ is the accept state and $Q_R$ is the reject state), and that the TM blank the tape cells and return the tape head to the 
leftmost position before halting. Let $p$ denote the maximum number of tape cells that $M$ may use on input $x$ (Note that $p\in O(|x|^c)$).
Define the following set of species:

\newcommand{\blank}{{\llcorner\negthinspace\lrcorner}}

        $$\Lambda = \{A, \underbrace{Q_1,\ldots,Q_{t-2},Q_A,Q_R}_{\textrm{machine states}}, \underbrace{P_1,\ldots,P_p}_{\textrm{head positions}}, \underbrace{T_1^0, T_1^1, \ldots,  T_p^0, T_p^1}_{\textrm{tape contents}} \}$$

    and configurations:
    \begin{eqnarray*}
        \vec{c}_1 &=& \{
        P_1,Q_1,T_{1}^{x[1]}, \ldots, T_{|x|}^{x[|x|]}, T_{|x|+1}^\blank, \ldots T_p^{\blank}\}\\
        \vec{c}_2 &=& \{1Q_A, T_1^{\blank},\ldots,T_{p}^\blank,P_1\}
\end{eqnarray*}

Recalling that the transition function $\delta_M: Q_M\setminus\{Q_A,Q_R\}\times \Gamma\rightarrow Q_M\times \Gamma\times \{-1,+1\}$, we construct the set $R$ of reactions in the following way: 

\begin{algorithm}[H]
\DontPrintSemicolon
\For{\emph{($\forall q_i\in Q_M$)($\forall b\in\{0,1,\blank\}$)($\forall k\in\{1,2,\cdots,p\}$)}}{
\If{$\delta(q_i,b)\rightarrow(q_j,b',m)$}{Add Reaction ${Q_i+T^{b}_{k}+P_k\rightarrow Q_j+T^{b'}_{k}+P_{k+m}}$\ //$m\in\{\pm 1\}$ : tape head moving direction.}}
\For{$S\in\Lambda\setminus\{A\}$}{Add Reaction ${S\rightarrow 2A}$}
 \caption{Construction of $R$}\label{SimPSpace}
\end{algorithm}

Observe that $(\Lambda, R)$ is a reachably atomic network with respect to $\Delta=\{A\}\subseteq\Lambda$, for any molecular species can be decomposed to $\{2A\}$ explicitly via Lines $11$-$12$, $A$ appears in the decomposition of all molecular species, and all reactions preserve the number of atoms. 

Further,
\begin{eqnarray*}
\lag M,X,0^{|x|^c}\rag\in L&\Leftrightarrow& M(x)\rightarrow 1\\
&\Leftrightarrow& \exists \textrm{computation path } (q_1,(x[1],x[2],\cdots,x[|x|],\underbrace{\blank,\cdots,\blank}_{p-|x|}))\Rightarrow^*(q_A,(\underbrace{\blank,\blank,\cdots,\blank}_{p}))\\
&\Leftrightarrow& \vec{c}_1\Rightarrow^*\vec{c}_2\\
&\Leftrightarrow&\lag\Lambda,R,\vec{c}_1,\vec{c}_2\rag\in\ReachableReach
\end{eqnarray*}

Finally, $|\Lambda| = 1+t+3p$; $|R| \in O(3pq+|\Lambda|)$, 
$\|\vec{c}\|_1 =\|\vec{c}\|_2 = p+2$. All coefficients of reactions are constant Hence this reduction is polynomial in $Z$ both timewise and spacewise.
}

\begin{proof}
$\ReachableReach\in\Pspace$ is a direct 
application of Observation \ref{MCComplete} -- note that all reachably-atomic chemical reaction networks are primitive atomic, and hence mass conserving (Proposition \ref{primitivemcequiv}). Hardness is shown by simulating 
polynomial space Turing Machines via reactions.  
    \opt{sub,subn}{For details, see Section \ref{AppAA}.}
    \opt{normal}{\DetailedProofOfReRePspaceComplete}
\end{proof}

\begin{rem}\normalfont{}
The fact that the coefficients of all reactions involved in the proof of Proposition \ref{ReachableReach} are constant also implies that $\ReachableReach$ is $\Pspace$-hard (and hence $\Complete$) in the strong sense. \opt{sub,subn}{Another remark on the irreversibility of reactions may be found in Section \ref{Auxiliary}.}
\end{rem}

\newcommand{\cantreversable}{\begin{rem}\normalfont{}
    The argument 
    for Proposition \ref{ReachableReach} fails 
    if we require all reactions to be reversible. 
    In that case, one may decompose each molecular species to $2A$ and create any new `input species'' (species representing the tape content), so it is possible that $\vec{c}_1\reach \vec{c}_2$ (indicating {\tt accept}) yet $M(x) = 0$.  
\end{rem}}
\opt{normal}{\cantreversable}

\opt{sub,subn}{
    We also found connections between our definitions of 
    ``-atomic'' and the concept of ``core composition'', addressed by Gnacadja\cite{gnacadja2011reachability} and detailed in Section \ref{acc}. Some interesting results are:
    \begin{enumerate}
        \item \textbf{Lemma \ref{EConservative}} states that a network is subset atomic if and only if it admits a ``near-core composition'' with certain restrictions;
        \item \textbf{Lemma \ref{ReachableConstructive}} in the same section says reachable-atomicity implies admitting a core composition;
       \item \textbf{Theorem \ref{rvsaequiv}} gives the equivalence between ``reversibly-reachable atomic'' and ``explicitly-reversibly constructive with no isomeric elementary species''.
   \end{enumerate}   
    
}

\newcommand{\SectionAtomicityWithCoreCompositionAsAppendix}{
 
 As mentioned in Subsection \ref{previous111}, there are some interesting relationship between some property of network defined in \cite{gnacadja2011reachability} and ours, which we shall look into in this section.
 
 We begin by introducing some definitions in \cite{gnacadja2011reachability}. We disclaim that all the following definitions and notations (but not remarks) before Subsection (\ref{subsection61}) are from \cite{gnacadja2011reachability}, possibly with slight modification of notationsand/or interpretations:

 \begin{defn}
 A \emph{species composition map}, or simply a \emph{composition} of chemical reaction network $\calC$ is a map $\E:\Lambda \rightarrow \N^n \setminus \{0^n\}$, where $n\in\N_{>0}$.
\end{defn}

\begin{rem}\normalfont{}
We donnot confuse $\E$ with $\vec{d}$ because they are completely different mappings. In particular, $\E$ could map different species to the same $\vec{e}_i$, which means $\E\restriction\E^{-1}(\vec{e}_i)$
 could be non-injective, while we donnot allow this for atomic composition $\vec{d}$. This point is further illustrated in the following Lemmas, such as Lemma \ref{EConservative}.  
 \end{rem}
 
 \begin{defn}\quad
 
 \begin{enumerate}
 
     \item A species $S\in \Lambda$ is $\E$-elementary if $\|\E(S)\|_1=1$;
     \item A species $S\in \Lambda$ is $\E$-composite if $\|\E(S)\|_1\geq 2$;
     \item $S,S'\in\Lambda$ are $\E$- isomeric if $\E(S)=\E(S')$; equivalently, we say $S,S'$ belong to the same $\E$-isomeric class.
 \end{enumerate}
 \end{defn}
 
 \begin{defn}
The \emph{extended composition} $\tme$ of $\E$ is defined as in Equation (\ref{EEEE}).  
 \end{defn}

 We denote $\Theta = $span($\{(\vec{p}-\vec{r})\}_{(\vec{r},\vec{p})\in R}$) $\subseteq\R^\Lambda$. With slight abuse of notation, we sometimes also write $\Theta = $span($R$) with $R$ the set of reaction vectors.

 \begin{defn}\label{nknk}
 a composition $\E$ is \emph{near-core} if:
 
 \begin{enumerate}
     \item $\ker(\tme)\supseteq \Theta$, which is equivalent to saying $\calC$ is $\mathcal{E}$-\emph{conservative}; and 
     \item $\vec{e}_1,\cdots,\vec{e}_n\in \range(\mathcal{E})$.
 \end{enumerate}
 
 \end{defn}

 \begin{defn}
 A composition $\E$ is \emph{core} if $\E$ is near-core and further, $\ker(\tme)=\Theta$.
 \end{defn}
 
 \begin{rem}\normalfont{}
Intuitively, a core composition $\E$ is the composition whose linear extension $\tme$ has the \textbf{smallest} kernel containing the reaction vector space $\Theta=span(R)$ as subspace. That ensures the uniqueness (up to isomorphism) 
of $\tme$
, and avoids including vectors that are not reachable by reactions into 
the kernel. Theorem $2.11$, $2.12$ of \cite{gnacadja2011reachability} has a 
detailed and more formal discussion on this matter.
\end{rem}

 \begin{defn}
 A reaction network $\calC$ is \emph{constructive} if it admits a core composition.
 \end{defn}

 \begin{defn}\quad 
 
 \begin{enumerate}
     \item A species $Y$ is \emph{explicitly constructible (resp. explicitly destructible)} if there are
isomerization reactions $Y_0 \rightarrow\cdots\rightarrow Y_l$ (resp. $Y_l \rightarrow\cdots\rightarrow Y_0$), where $l\in\N$,
such that $Y_0$ is the target of a binding reaction (resp. the source of a dissociation
reaction) and $Y_l = Y$. 

Binding reactions are $(\vec{r},\vec{p})\in R$ s.t. $\|\vec{r}\|_1\geq 2$ and $\|\vec{p}\|_1=1$, and dissasociation reactions have similar definition with $\vec{r}$ and $\vec{p}$ swapped. Isomerizations are $(\vec{r},\vec{p})\in R$ s.t. $|\vec{r}|_1=|\vec{p}|_1 = 1$: note that the reactant and product of an isomerization reaction are isomers.
 
 \item A species $X$ is explicitly constructive (resp. explicitly destructive) if there is a
binding reaction $Q \rightarrow Y$ (resp. a dissociation reaction $Y \rightarrow Q$ such that $X\in [Q]$.
\end{enumerate}
 \end{defn}

\begin{rem}\normalfont{}
Intuitively, a species is explicit constructibile if it is ``eventually'' a product of a binding reaction (up to having some isomerization reactions in between),
while explicit constructivity means a species directly participate in a binding reaction as reactant. 
\end{rem}

 And finally, 
 
 \begin{defn}\label{ere}
 A chemical reaction network $\calC$ is \emph{explicitly-reversibly constructive} if:
 \begin{enumerate}
     \item $\calC$ is constructive;
     \item Each composite species is both explicitly constructible and explicitly destructible; and
     \item Each elementary species is both explicitly constructive and explicitly destructive.
 \end{enumerate}
 \end{defn}

 \subsection{Atomic chemical reaction networks with core or near core compositions}\label{subsection61}

 We would like to begin this section by showing an equivalence relationship between our definition of subset atomicity and \cite{gnacadja2011reachability}'s definition of networks admitting near-core compositions with certain restrictions. 

 \begin{lem}\label{EConservative}
 A chemical reaction network $\mathcal{C}=(\Lambda, R)$ is subset atomic if and only if $\exists n\in\N_{>0},\mathcal{E}:\Lambda\rightarrow\N^{n}\setminus\{0\}^n$, s.t.
 \begin{enumerate}  
     \item $\E$ is a near-core composition of $\calC$;
     \item \label{522}$\E\restriction \E^{-1}(\{\vec{e}_i\}_{i=1}^{n})$ is one to one, and
     \item\label{523} $(\forall i\in [1,n])$ $(\exists S\in \Lambda\setminus \bigcup_{i=1}^{n}\E^{-1}(\vec{e}_i)) (\E(S))_i> 0$. 
 \end{enumerate}  
 \end{lem}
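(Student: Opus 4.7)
The plan is to establish the equivalence by constructing, in each direction, a canonical dictionary between a subset-$\Delta$-atomic decomposition $\{\vec{d}_S\}_{S\in\Lambda}$ and a composition map $\E:\Lambda\to\N^n\setminus\{\vec{0}\}$ satisfying the three listed conditions, with $n=|\Delta|$ in both directions.

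For the forward direction, I would enumerate $\Delta=\{A_1,\ldots,A_n\}$ and define $\E(S):=(\vec{d}_S(A_1),\ldots,\vec{d}_S(A_n))^T$. The near-core property then splits into two routine checks: $\ker(\tme)\supseteq\Theta$ is the linear-algebraic reformulation of Condition~(\ref{primitive1}) of Definition~\ref{primitiveatomic} (each reaction preserves atom counts), and $\vec{e}_i=\E(A_i)\in\range(\E)$ holds by construction. Restricted injectivity follows from Condition~(2) of Definition~\ref{subatom}: $\E(S)=\vec{e}_i$ forces $\|\vec{d}_S\|_1=1$, hence $S\in\Delta$, and then the identification $\vec{d}_S=\vec{e}_S$ pins down $S=A_i$.

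For the reverse direction, given such an $\E$, I would take $\Delta:=\bigcup_{i=1}^n\E^{-1}(\vec{e}_i)\subseteq\Lambda$. Condition~2 guarantees $\E$ restricted to $\Delta$ is a bijection onto $\{\vec{e}_1,\ldots,\vec{e}_n\}$, so $|\Delta|=n$ and we may label $\Delta=\{A_1,\ldots,A_n\}$ with $\E(A_i)=\vec{e}_i$; setting $\vec{d}_S:=\E(S)$ in the induced coordinates, primitive atomic Condition~(\ref{primitive1}) is $\ker(\tme)\supseteq\Theta$, Condition~(\ref{primitive3}) follows from condition~3 of the lemma, and Definition~\ref{subatom} Conditions~(1) and~(2) fall out from $\vec{d}_{A_i}=\vec{e}_i$ and from $\E(S)\notin\{\vec{e}_1,\ldots,\vec{e}_n\}$ for $S\notin\Delta$, respectively.

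The main obstacle is reconciling condition~3 with subset atomicity in the forward direction: subset atomic only requires each atom to appear in \emph{some} species' decomposition, whereas condition~3 demands a \emph{non-atom} witness. I plan to handle this by observing that any ``orphan'' atom $A$ (one appearing in no non-atom decomposition) must act as a pure catalyst in every reaction it participates in, and can therefore be reclassified as a non-atom isomeric to some existing molecule by redefining $\vec{d}_A$ to equal that molecule's decomposition; iterating this reduction shrinks $\Delta$ to an atom set for which condition~3 holds automatically, at which point the construction above goes through without modification.
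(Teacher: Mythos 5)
Your proposal is correct and, in both directions, uses the same dictionary as the paper's proof: forward, $\E(S):=\vec{d}_S$ with $n=|\Delta|$, matching near-coreness against atom preservation ($\ker(\tme)\supseteq\Theta$) and restricted injectivity against Condition~(1) of Definition~\ref{subatom}; backward, $\Delta:=\bigcup_{i=1}^{n}\E^{-1}(\vec{e}_i)$ and $\vec{d}_S:=\E(S)$, with $\|\E(S)\|_1\geq 2$ for non-elementary $S$ giving Condition~(2) of Definition~\ref{subatom} and Condition~\ref{523} giving Condition~(2) of Definition~\ref{primitiveatomic}. Where you genuinely diverge is precisely the step you flag as the main obstacle: the paper's proof simply asserts that Condition~\ref{523} is ``inherited'' from Condition~(2) of Definition~\ref{primitiveatomic}, but, as you observe, under subset atomicity that condition is already witnessed by the atom itself (since $\vec{d}_A(A)=1$), so the required \emph{non-atom} witness does not literally follow; the paper's Remark~\ref{Whyjustwhy} gestures at the catalyst issue without supplying an argument. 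Your repair is sound: if atom $A$ appears in no molecular decomposition, then the $A$-atom count of any configuration equals the species count of $A$, so atom preservation forces $\vec{r}(A)=\vec{p}(A)$ in every reaction, i.e.\ $A$ is a pure catalyst; re-declaring $\vec{d}_A$ to be the decomposition of an existing molecule (whose support avoids $A$, because $A$ is orphaned) leaves every conservation identity intact and satisfies all clauses of Definition~\ref{subatom}, and the iteration terminates since no non-orphan atom ever becomes orphaned and each molecule's support pins down at least one atom that survives. The only caveat is the degenerate situation with no molecular species at all (e.g.\ $\Lambda=\Delta$ with $R=\emptyset$), where no reclassification target exists --- but there the right-hand side of the lemma itself fails, so neither your argument nor the paper's covers it, and it is best regarded as implicitly excluded. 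In short: same route as the paper, but you close a step the paper leaves unjustified, which is a strengthening rather than a gap.
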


 \begin{rem}\normalfont{}
 Adopting the definition that $\mathscr{X}_i:=\{S\in\Lambda \mid \E(S)=\vec{e}_i\}$, 
 condition $\ref{522}$ is saying that $p_i:=|\mathscr{X}_i|=1$ for each $i$. It intuitively translates to ``no isomerization is allowed for $\E$-elementary species''. Note also the similarity between Condition \ref{523} above and Condition \ref{primitive3} of Definition \ref{primitiveatomic}. They will translate to each other by construction in the proof below.
 \end{rem}

 \begin{proof} We note that atomic decomposition in a subset atomic network $\calC$ describes a similar phenomenon of a near-core composition of $\calC$, and we prove the equivalence by translating between the definitions.
 \begin{description}
 
     \item[$\implies$:]
     
         Suppose $\calC$ is subset atomic with respect to $\Delta$ via decomposition matrix $\matr{D}$. Let $n:=|\Delta|$.
         Consider
      
         \begin{eqnarray}
         \mathcal{E}: \Lambda &\rightarrow& \N^{n}\setminus \{0^n\}:\label{EEE}\\
         S &\mapsto& \vec{d}_S,\ \forall S\in \Lambda\setminus A\nonumber\\
         A &\mapsto& \vec{d}_A=\vec{e}_{A},\ \forall A\in \Lambda
        \nonumber 
        \end{eqnarray}
         
         We shall argue that $\mathcal{E}$ has the desired property. By construction, $\vec{e}_{1} = \vec{e}_{A_1},\cdots \vec{e}_{n} = \vec{e}_{A_n}\in \mathcal{E}(\Lambda)$, and $\E$ restricted to the preimage of $\{\vec{e}_i\}_{i=1}^{n}$ is one-to-one; subset atomicity inherits Condition (\ref{primitive3}) of Definition (\ref{primitiveatomic}) (Primitive Atomic), which implies that $$(\forall i\in [1,n])(\exists S\in \Lambda\setminus \bigcup_{i=1}^{n}\E^{-1}(\vec{e}_i)=\Lambda\setminus\Delta) (\E(\vec{S}))_i=\vec{d}_S(A_i)> 0$$
         It remains to show that  $\calC$ is $\E$-conservative, which, by \cite{gnacadja2011reachability},
         \ is equivalent to $\ker(\tme)\supseteq\Theta$ where $\Theta$ is the span of the reaction vectors and 
         $\tme$ is the linear extension $\tilde{\mathcal{E}}$ of $\mathcal{E}$:
         
         \begin{eqnarray}
             \tilde{\mathcal{E}}: \R^\Lambda &\rightarrow& \R^{n}:\label{EEEE}\\
             \vec{c} &\mapsto& (\sum_{S\in\Lambda}\vec{c}(S)\cdot (\E(S))_1,\cdots, \sum_{S\in\Lambda}\vec{c}(S)\cdot (\E(S))_n)^T,\ \vec{c}\in\R^\Lambda\nonumber
         \end{eqnarray}
     
         Observe that $\mathcal{\tilde{E}}(\cdot)$ operates on 
         $\vec{c}\in\R^\Lambda$ as left mulitplication by $\matr{D}^T$, the transpose of the decompsition matrix.
         By definition of subset atomicity, any reaction preserves the count/concentration of each atom, so $\matr{D}^T\cdot (\vec{p}-\vec{r}) = 0^n$ ($\forall (\vec{r},\vec{p})\in R$). Hence $a_1(\vec{p}_1-\vec{r}_1) + \cdots + a_k(\vec{p}_k-\vec{r}_k)\in\ker{\tme}$ ($k=|R|$) for any $a_1(\vec{p}_1-\vec{r}_1) + \cdots + a_k(\vec{p}_k-\vec{r}_k)\in\Theta$, as desired.
 
     \item[$\impliedby$:]
     
         Suppose we have a function $\E:\Lambda\rightarrow \N^n\setminus\{0\}^n$ satisfying the three described properties. Then define 
     
         \begin{equation}  \label{FindDelta}
        \Delta = \{\E^{-1}(\vec{e}_i)\}_{i=1}^{n}
        \end{equation}
        
        We argue that $\calC$ is subset atomic with respect to $\Delta$. Indeed, define the decomposition vectors:
        
        \begin{eqnarray}\label{54432}
         \vec{d}: \Lambda &\rightarrow& \N^n\setminus\{0\}^n,\\
         S&\mapsto& \E(S),\ \forall S\in\Lambda\nonumber
        \end{eqnarray}
        
        Since $\vec{d}$ coincides with $\mathcal{E}$ everywhere and since $\calC$ is $\mathcal{E}$-conservative, for each $S\in\Lambda$, $(\vec{r},\vec{p})\in R$ and $A_i\in\Delta$, we have

        \begin{eqnarray}
        \sum\limits_{S \in \Lambda} (\vec{p}(S)-\vec{r}(S)) \cdot \vec{d}_{S}(A_i) &=&\sum\limits_{S \in \Lambda} (\vec{p}(S)-\vec{r}(S)) \cdot (\E(S))_i\nonumber\\
        &=& (\tme(\vec{p}-\vec{r}))_i\nonumber\\
        &\underbrace{=}_{\E\textrm{ conservative}\Rightarrow \vec{p}-\vec{r}\in\ker{\tme}}&0,
        \end{eqnarray} 
        which gives the atom-preservation condition of subset atomic. 

        By construction, $(\forall i)A_i\in\Delta\subseteq \Lambda$,  $\vec{d}_{A_i} = \vec{e}_i$
        ; To see that $\|\vec{d}(S)\|_1\geq 2$ for all $S\in\Lambda\setminus\Delta$, recall that $\E\restriction \E^{-1}(\{\vec{e}_i\}_{i=1}^{n})$ is one to one, which means $\forall S\in \Lambda\setminus\Delta$, $\E(S)\not\in\{\vec{e}_i\}_{i=1}^n$. Given that $\range(\E) = \N^{n}\setminus\{0\}^n$, this means $\vec{d}(S)$ is some non-trivial linear combination of $\vec{e}_i$'s, which gives $\|\vec{d}(S)\|_1\geq 2$ as desired;
        
        Lastly, $(\forall i\in [1,n])$ $(\exists S\in \Lambda\setminus \bigcup_{i=1}^{n}\E^{-1}(\vec{e}_i)) (\E(S))_i> 0$ translates to $(\forall A\in\Delta)(\exists S\in\Lambda\setminus\Delta)A\in [\vec{d}_S]$ by definition of $\vec{d}$.\qed
 \end{description}
 \end{proof}
 
 In order to further describe the relationship between atomic networks and core compositions, we make the following definitions first.

\begin{defn}[Associated Composition]
Given a subset atomic chemical reaction network $\mathcal{C}=(\Lambda, R)$ with respect to $\Delta$ via decomposition matrix $\matr{D}$
, the \emph{associated composition} of $\matr{D}$ is the function $\mathcal{E}$ constructed in (\ref{EEE}). 
$\E$'s unique linear extension $\tme$, constructed in (\ref{EEEE})
, is defined as the \emph{extended associated composition} of $\matr{D}$.
\end{defn}

\begin{rem}\normalfont{}
Note that $\tme$, the linear extension of $\E$, is defined in (\ref{EEEE}) independent of atomic decompostions.
\end{rem}

Next we prove that reachably atomic networks admits a core composition. But first, we exhibit some auxiliary definitions.

\begin{defn}
Given a subset atomic chemical reaction network $(\Lambda, R)$ with respect to $\Delta$ via $\matr{D}$, a \emph{single-molecule decomposition vector} $\vec{d}^{'}_{S_i}\in \N^{\Lambda}\setminus \{0^n\}$ is defined as $\vec{d}_{S_i} - \vec{e}_{S_i}$. The set of single-molecule decomposition vector is denoted as $U := \{\vec{d}^{'}_{S_i}\}_{S_i\in \Lambda\setminus\Delta}$.
\end{defn}

Recall that $\vec{d}_{S_i}$ is the decomposition vector of $S_i$ whose first $|\Lambda|-|\Delta\cap\Lambda|$ coordinates are $0$ and last $|\Delta|$ coordinates correspond to the count of each atom in the molecule. for subset atomic networks, $\vec{d}_{S_i}\in \N^{\Lambda}\setminus \{0^n\}$, so $\vec{d}^{'}_{S_i}$ is well-defined by replacing $0$ with $-1$ on the $i$-th molecular position.

We explore the relationship between atomicity and core compositions by inspecting into the relationship between $\ker(\tme)$ and $\Theta$, the span of reaction vectors. This is in turn done by inspecting the relation between $\ker{\tme}$ and the space spanned by $U$.

Next, let $\Upsilon$ denote the vector space spanned by $U$, as a subspace of $\R^n$.
We now give the first approach to the implication ``reachably atomicity $\Rightarrow$ Core-Composition Admission''. To do this, we will prove that for subset atomic networks with $\tme$ defined as previously defined in (\ref{EEEE},\ref{EEE}), $\ker{\tme} = \Upsilon$; for reachably atomic networks (which are by definition also subset atomic), $\Upsilon \subseteq \Theta$. The two relations combined would give $\ker{\tme} \subseteq \Theta$, which is exactly the missing bit from near-core to core compositions. The proofs will be carried out from Lemma \ref{8basis} through Lemma \ref{ReachableConstructive}.

\begin{lem}\label{8basis}
   Vectors in $U$ are linearly independent. Since they also span $\Upsilon$, this means $U$ is a basis for $\Upsilon$.
\end{lem}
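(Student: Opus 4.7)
The plan is to exploit the fact that each vector $\vec{d}'_{S_i} = \vec{d}_{S_i} - \vec{e}_{S_i}$ is distinguished from the others by having a unique nonzero entry on the ``molecular'' coordinate $S_i$. Since $\calC$ is subset atomic with respect to $\Delta$, the decomposition $\vec{d}_{S_i}$ is supported entirely on the atom coordinates $\Delta$, so $\vec{d}_{S_i}(S_j) = 0$ for every $S_j \in \Lambda \setminus \Delta$. Subtracting $\vec{e}_{S_i}$ therefore places a $-1$ in the $S_i$-coordinate and leaves every other molecular coordinate at $0$.

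Concretely, I would suppose a vanishing linear combination
\[
\sum_{S_i \in \Lambda \setminus \Delta} \alpha_i\, \vec{d}'_{S_i} \;=\; \vec{0}
\]
and, for each fixed $S_j \in \Lambda \setminus \Delta$, read off the $S_j$-coordinate of both sides. On the right it is $0$; on the left, the observation above shows that only the term with index $i = j$ contributes, giving $-\alpha_j = 0$. Hence $\alpha_j = 0$ for every $S_j \in \Lambda \setminus \Delta$, establishing linear independence.

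Since $U$ spans $\Upsilon$ by definition of $\Upsilon$, linear independence promotes $U$ to a basis of $\Upsilon$, and in particular $\dim \Upsilon = |\Lambda \setminus \Delta|$.

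There is no real obstacle here: the whole argument is a one-line coordinate projection onto the molecular block of $\R^\Lambda$, enabled by the subset atomic hypothesis that an atom's decomposition is its own unit vector while a molecule's decomposition lives purely among atoms. The only thing worth stating explicitly, to keep the indexing unambiguous, is the decomposition of the ambient space $\R^\Lambda = \R^{\Lambda \setminus \Delta} \oplus \R^\Delta$ and the fact that under this splitting $\vec{d}'_{S_i} = -\vec{e}_{S_i} + \vec{d}_{S_i}$ with the first summand in the molecular block and the second in the atom block; the molecular block alone already witnesses independence.
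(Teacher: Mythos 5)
Your argument is correct and is essentially the paper's own proof: both rest on the observation that $\vec{d}'_{S_i}$ has the unique nonzero entry $-1$ in the molecular coordinate $S_i$ while all other vectors of $U$ vanish there, so a vanishing linear combination forces every coefficient to zero. Your write-up just makes the coordinate-projection step explicit, which the paper states in one line.
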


\begin{proof}
Observe that the $-1$ on the $i$-th position ($\forall 1\leq i\leq  |\Lambda|-|\Delta|$) cannot be obtained by linear combination of other vectors in $U$, the $i$-th position of which are all $0$'s.

This also shows that  $\dim(\Upsilon)=|U|=|\Lambda-\Delta| = |\Lambda|-|\Delta|$.
\end{proof}

\begin{lem}[Kernel-Span Equivalence]\label{kerspan}
For subset atomic networks with $\tme$ defined as in (\ref{EEEE},\ref{EEE}), $\ker{(\tme)}= \Upsilon$.
\end{lem}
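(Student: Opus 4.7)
The plan is to prove the two inclusions via a direct containment $\Upsilon \subseteq \ker(\tme)$ and then a dimension count to upgrade this to equality.

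First, I would show $\Upsilon \subseteq \ker(\tme)$ by verifying each generator $\vec{d}'_{S_i} = \vec{d}_{S_i} - \vec{e}_{S_i}$ lies in $\ker(\tme)$. Since $\tme$ acts on $\R^\Lambda$ by left multiplication by $\matr{D}^T$, I would compute $\tme(\vec{d}'_{S_i})$ coordinate-wise using $\E(S_i) = \vec{d}_{S_i}$ (viewed in $\N^\Delta$) and $\E(A_j) = \vec{e}_{A_j}$ for each atom $A_j \in \Delta$. The contribution from the $-1$ coefficient on $S_i$ is $-\vec{d}_{S_i}$, while the contribution from the atom coordinates is $\sum_{A_j \in \Delta} \vec{d}_{S_i}(A_j)\,\vec{e}_{A_j} = \vec{d}_{S_i}$, so they cancel and $\tme(\vec{d}'_{S_i}) = \vec{0}$. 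By linearity, the entire span $\Upsilon$ lies in $\ker(\tme)$.

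Next, I would compute $\dim(\ker(\tme))$ via the rank-nullity theorem applied to $\tme:\R^\Lambda \to \R^n$ with $n = |\Delta|$. The key observation is that for each atom $A_j \in \Delta \subseteq \Lambda$, we have $\tme(\vec{e}_{A_j}) = \E(A_j) = \vec{e}_j$, so the image of $\tme$ contains all $n$ standard basis vectors of $\R^n$. Hence $\rank(\tme) = n = |\Delta|$, and consequently $\dim(\ker(\tme)) = |\Lambda| - |\Delta|$.

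Finally, combining this with Lemma \ref{8basis}, which gives $\dim(\Upsilon) = |\Lambda| - |\Delta|$, the inclusion $\Upsilon \subseteq \ker(\tme)$ between subspaces of equal finite dimension forces equality, yielding $\ker(\tme) = \Upsilon$. The argument is essentially mechanical; I do not anticipate a genuine obstacle, but the only place requiring care is making sure the two different ambient spaces for $\vec{d}_{S_i}$ (namely $\N^\Delta$ when used inside $\tme$ versus $\N^\Lambda$ when used to build $\vec{d}'_{S_i}$) are reconciled consistently via the canonical inclusion $\N^\Delta \hookrightarrow \N^\Lambda$ that places zeros on the molecular coordinates.
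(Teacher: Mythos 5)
Your proposal is correct, and the forward inclusion $\Upsilon \subseteq \ker(\tme)$ is verified exactly as in the paper (coordinate-wise cancellation of the $-\vec{e}_{S_i}$ contribution against the atom contributions of $\vec{d}_{S_i}$). Where you diverge is the reverse inclusion: the paper proves $\ker(\tme) \subseteq \Upsilon$ by an explicit computation, taking an arbitrary $\vec{v} \in \ker(\tme)$ with molecular coordinates $a_1,\dots,a_{|\Lambda|-|\Delta|}$ and verifying entry by entry that $\vec{v} = -\sum_i a_i \vec{d}'_{S_i}$, whereas you replace this with a dimension count: $\tme(\vec{e}_{A_j}) = \vec{e}_j$ for every atom $A_j \in \Delta \subseteq \Lambda$ (valid because the associated composition sends each atom to its own unit vector), so $\rank(\tme) = |\Delta|$, rank–nullity gives $\dim\ker(\tme) = |\Lambda| - |\Delta|$, and Lemma~\ref{8basis} gives $\dim\Upsilon = |\Lambda|-|\Delta|$, forcing equality. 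Both arguments are sound and use only tools already in the paper. Your route is shorter and makes visible exactly where the subset-atomic hypothesis enters (surjectivity of $\tme$ comes from atoms being species with $\vec{d}_A = \vec{e}_A$); the paper's route is constructive, exhibiting for each kernel vector its explicit expansion in the basis $U$, and it yields the dimension identity $\dim\ker(\tme) = |\Lambda|-|\Delta|$ as a corollary rather than consuming it as an input. Your closing caveat about reconciling $\vec{d}_{S_i} \in \N^{\Delta}$ with its image in $\N^{\Lambda}$ under the canonical inclusion is exactly the right point to be careful about, and matches the paper's (somewhat informal) convention that $\vec{d}_{S_i}$ has zeros on the molecular coordinates.
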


\begin{proof} By verification of definitions.
\begin{enumerate}
    \item $\Upsilon\subseteq \ker{(\tme)}$:
    
    Take $\vec{u} = a_1\vec{d}^{'}_{S_1} +\cdots +a_{|\Lambda|-|\Delta|}\vec{d}^{'}_{S_{|\Lambda|-|\Delta|}}\in \Upsilon$. 
    Then $\forall i\in[1,n]$, we have
    \begin{eqnarray*}
     (\tme(\vec{u}))_i&=&\sum_{S_j\in\Lambda}\vec{u}(S_j)\cdot (\E(S_j))_i\\
     &\underbrace{=}_{(\ref{EEE})}& \sum_{S_j\in\Lambda\setminus\Delta}\vec{u}(S_j)\cdot \vec{d}_{S_j}(A_i) + \sum_{A_k\in\Delta}\vec{u}(A_k)\cdot \vec{d}_{A_k}(A_i)\\
     &=& \sum_{S_j\in \Lambda\setminus\Delta} (-a_j)\cdot\vec{d}_{S_j}(A_i)  + \vec{u}(A_i)\cdot 1\\
     &=& \sum_{S_j\in \Lambda\setminus\Delta} (-a_j)\cdot\vec{d}_{S_j}(A_i)  + \sum_{S_j\in\Lambda\setminus\Delta} a_j\cdot\vec{d}_{S_j}(A_i)\\
     &=&0
    \end{eqnarray*}
    
    \item $\Upsilon\supseteq \ker{(\tme)}$:
    
    Take $\vec{v}\in\ker{(\tme)}$, and let $a_{1},\cdots,a_{|\Lambda|-|\Delta|}$ denote the first $|\Lambda|-|\Delta|$ coordiantes of $\vec{v}$. We claim that 
    
    \begin{equation} \label{vdp}\vec{v} = -\sum_{i=1}^{|\Lambda|-|\Delta|}a_i\vec{d}^{'}_{S_i}
    \end{equation}

    Indeed, for each $S_j\in \Lambda\setminus\Delta$, $$\vec{v}(S_j) = -a_j\cdot (-1) = \sum_{i\neq j}(-a_i)\cdot\underbrace{\vec{d}^{'}_{S_i}(S_j)}_{0} + (-a_j)\cdot\underbrace{\vec{d}^{'}_{S_j}(S_j)}_{-1} = -\sum_{i=1}^{|\Lambda|-|\Delta|}a_i\vec{d}^{'}_{S_i}(S_j)$$
   It remains to verify that $(\ref{vdp})$ holds the last $|\Delta|=n$ positions. 
    Let $b_1,\cdots, b_n$ denote the last $n$ positions of $\vec{v}$, $c_1,\cdots, c_{n}$ denote the last $n$ positions of $-\sum_{i=1}^{|\Lambda|-|\Delta|}a_i\vec{d}^{'}_{S_i}$. Because $\vec{v}\in\ker{(\tme)}$, we know that for each $i\in[1,n]$,
    
    \begin{eqnarray*}
     (\tme(\vec{v}))_i&=&\sum_{S_j\in\Lambda}\vec{v}(S_j)\cdot (\E(S_j))_i\\
     &\underbrace{=}_{(\ref{EEE})}& \sum_{S_j\in\Lambda\setminus\Delta}\vec{v}(S_j)\cdot \vec{d}_{S_j}(A_i) + \sum_{A_k\in\Delta}\vec{v}(A_k)\cdot \vec{d}_{A_k}(A_i)\\
     &=& \sum_{S_j\in \Lambda\setminus\Delta} (a_j)\cdot\vec{d}_{S_j}(A_i)  + \vec{v}(A_i)\cdot 1\\
          &=& \sum_{S_j\in \Lambda\setminus\Delta} (a_j)\cdot\vec{d}_{S_j}(A_i)  + b_i\cdot 1\\
     &=& 0,
    \end{eqnarray*}
    which gives that
    \begin{eqnarray*}
    b_i &=& -\sum_{j\in 1}^{|\Lambda|-|\Delta|} a_j\cdot \vec{d}_{S_j}(A_i)\\
    &=& -\sum_{j\in 1}^{|\Lambda|-|\Delta|} a_j\cdot \vec{d}^{'}_{S_j}(A_i)\\
    &=& c_i
    \end{eqnarray*}
    as desired, completing the proof.\qed
\end{enumerate}
\end{proof}

\begin{rem}\normalfont{}
The lemmas above directly imply that $\dim(\ker(\tme)) = \dim(\Upsilon)= |\Lambda|-|\Delta|$.
\end{rem}

The Remark above resonates Gilles' Theorem $3.3$, which gives a 
equation for general cases where $\E\restriction\E^{-1}(\{\vec{e}_i\}_{i=1}^{n})$ is not necessarily one-to-one. Since the general cases are not directly related to our discussion on network atomicity,\footnote{For the fact that any species cannot have two atomic decompositions in a single decomposition matrix. Note that molecular species may admit different atomic decompositions, but they belong to different decomposition matrices (that is, atomic decompositions of other species have to change accordingly).} we refrain from further discussion thereon. 

\begin{lem}\label{ReachableConstructive}
If a 
chemical reaction network $\calC$ is reachably atomic, then $\calC$ admits a core composition. That is, $\calC$ is constructive.
\end{lem}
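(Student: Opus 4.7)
The plan is to show that the associated composition $\E$ constructed in (\ref{EEE}) from the (unique, by Lemma \ref{uniquereachable}) atomic decomposition of a reachably atomic network is in fact a core composition, upgrading it from the near-core status already guaranteed by Lemma \ref{EConservative}. Since every reachably atomic network is subset atomic, Lemma \ref{EConservative} immediately gives that $\E$ is near-core, i.e., $\Theta \subseteq \ker(\tme)$ and $\vec{e}_1,\ldots,\vec{e}_n \in \range(\E)$. Thus the entire burden is to establish the reverse containment $\ker(\tme) \subseteq \Theta$.

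To get that containment, I would invoke Lemma \ref{kerspan}, which states $\ker(\tme) = \Upsilon$, where $\Upsilon = \mathrm{span}(U)$ and $U = \{\vec{d}'_{S}\}_{S \in \Lambda \setminus \Delta}$ with $\vec{d}'_S = \vec{d}_S - \vec{e}_S$. By Lemma \ref{8basis}, $U$ is a basis of $\Upsilon$, so it suffices to show $\vec{d}'_S \in \Theta$ for every molecular species $S \in \Lambda \setminus \Delta$. This is exactly where reachable atomicity is used: by Definition \ref{visat}, condition~\eqref{vis222}, there is a finite reaction sequence $\{1S\} \Rightarrow^* \vec{d}_S$. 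Summing the reaction vectors $(\vec{p}-\vec{r})$ of the reactions used (with multiplicities equal to how many times each reaction is applied) yields precisely $\vec{d}_S - \vec{e}_S = \vec{d}'_S$, so $\vec{d}'_S \in \Theta$ as an integer combination of reaction vectors.

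Putting it together: each basis vector of $\Upsilon$ lies in $\Theta$, hence $\Upsilon \subseteq \Theta$, and combining with Lemma \ref{kerspan} and the near-core inclusion gives the chain
\[
  \Theta \;\subseteq\; \ker(\tme) \;=\; \Upsilon \;\subseteq\; \Theta,
\]
so $\ker(\tme) = \Theta$. Together with $\vec{e}_1,\ldots,\vec{e}_n \in \range(\E)$, this verifies Definition \ref{nknk}'s second clause plus the strengthened kernel condition, establishing that $\E$ is a core composition and therefore that $\calC$ is constructive.

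The only step that requires genuine care is the translation of the reachability $\{1S\} \Rightarrow^* \vec{d}_S$ into a statement about $\Theta$; everything else is a direct appeal to already-proved lemmas. The subtlety is merely to note that the empty reaction sequence is fine when $S \in \Delta$ (but those species are not in $U$ anyway), and that for $S \in \Lambda \setminus \Delta$ the reachability is guaranteed to use at least one reaction, so the telescoping sum of reaction vectors is legitimately an element of $\Theta$. No step requires a bound on the length of the reaction sequence, only its finiteness.
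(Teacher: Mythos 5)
Your proposal is correct and follows essentially the same route as the paper's proof: invoke Lemma \ref{EConservative} for the near-core property, Lemma \ref{kerspan} and Lemma \ref{8basis} to reduce the missing inclusion $\ker(\tme)\subseteq\Theta$ to showing each $\vec{d}'_S\in\Theta$, and then use the reachability $\{1S\}\Rightarrow^*\vec{d}_S$ to write $\vec{d}'_S=\vec{d}_S-\vec{e}_S$ as a sum of reaction vectors. The paper's argument is identical, so there is nothing to add.
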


\begin{proof}
By Lemma \ref{EConservative}, since $\calC$ is reachably atomic and hence subset atomic, it admits a near-core composition. By Lemma \ref{kerspan} and Definition \ref{nknk}, $\ker(\tme)=\Upsilon\supseteq\Theta$, so it suffices to prove $\Upsilon\subseteq \Theta$ when $\calC$ is reachably atomic.

Recall that by Lemma \ref{8basis}, $U$ is a basis for $\Upsilon$. Hence we only need to  argue that each basis vector $\vec{d}'_{S_i}\in U$ is a linear combination of reaction vectors. Indeed, $\forall S_i\in\Lambda\setminus\Delta$, $\vec{e}_{S_i}\reach\vec{d}_{S_i}$, so there exists $\vec{p}_1-\vec{r}_1,\cdots,\vec{p}_k-\vec{r}_k\in R$ for some $k$ with $\vec{r}_1=\vec{e}_{S_i}$ s.t. $\vec{e}_{S_i}+\sum_{k}(\vec{p}_k-\vec{r}_k) = \vec{d}_{S_i}$. But this means $\vec{d}'_{S_i} = \vec{d}_{S_i}-\vec{e}_{S_i} = \sum_{k}(\vec{p}_k-\vec{r}_k)\in \Theta$.

Having proved $\ker{\tme} = \Upsilon = \Theta$, we conclude that $\calC$ admits a core composition, as desired.
\end{proof}

\begin{rem}\normalfont{} 
    We exhibit an alternative approach for Lemma \ref{ReachableConstructive} by directly applying Theorem $4.2$ of \cite{gnacadja2011reachability}.

    Theorem $4.2$ of \cite{gnacadja2011reachability} states that if $\E: \Lambda\rightarrow \N^n\setminus\{0^n\}$ satisfies the following, then $\E$ is a core composition:
    \begin{enumerate}
    \item $\E$ is near-core;
    \item all $\E$-elementary species of the same $\E$-isomeric class are stoichiometrically-isomeric ($X,Y\in\mathscr{X}_i\Rightarrow Y-X\in \Theta=\textrm{span}(R)$); here, $\E$-isomeric classes are defined as species having the same $\E(\cdot)$-value. Restricted to $\E$-elementary species, the $\E$-isomeric classes are $\mathscr{X}_i:=\{S\in\Lambda \mid \E(S)=\vec{e}_i\}$ ($i\in [1,n]$).
    \item\label{300} For every $\E$-composite $Y$ with $\E(Y)=\alpha\in \Z_{\geq 0}^n\setminus\{0^n\}$, there exist one ``representative'' elementary species from each elementary isomeric class such that $Y$ and the $\alpha$-linear combination of these elementary species are stoichiometrically compatible. That is, $\exists W_1\in\mathscr{X}_1,\cdots,W_n\in\mathscr{X}_n$, s.t. $Y-\sum_{i=1}^{n}\alpha_iW_i\in \Theta$. 
    \end{enumerate}

    Note that for a reachably atomic network $\calC$ and its associated composition $\E$, all elementary $\E$-isomeric classes are singleton, so $(\forall i)(\forall A_i\in\mathscr{X}_i)(A_i-A_i=0^n\in \Theta)$\footnote{Note that this doesn't contradict the assumption that $(\vec{r},\vec{r})\not\in R$, since $0^n = 1\cdot (\vec{p}-\vec{r})+(-1)\cdot (\vec{p}-\vec{r})$ can be obtained by linear combination of reactions $(\vec{r},\vec{p})\in R$ where $\vec{r}\neq\vec{p}$.}; Also, for each $Y\in\Lambda\setminus\Delta$, $Y-\sum_i\underbrace{\vec{d}_{Y}(A_i)}_{\alpha_i}A_i = (-1)\cdot \vec{d}'_Y\in \Theta$, as $\vec{d}'_Y = \vec{d}_Y-\vec{e}_Y=\sum_j(\vec{p}_j-\vec{r}_j)\in\Theta$ is guaranteed by the reachably atomicity. The reachably atomicity implies subset atomicity, which again implies $\E$ is near-core. Therefore $\calC$ adopts a core composition, as desired.
\end{rem}

 Although in the alternative proof exhibited above we talked about the concept of ``span'' as linear combinations with real coefficients, in fact our model of reachably atomic networks is not necessarily equipped with the property that each reaction is reversible. This fact does not break the proof, but it justifies our decision to keep both approaches. Further, the above observation indicates the possibility that networks with the reversible property may itself guarantee some more interesting structures. In fact, let us study the relationship between reversibly-reachably atomic networks and explicitly-reversibly constructive networks in the following section.

  \subsection{Reversibly-Reachably Atomicity and Explicitly-Reversibly Constructiveness}
  In this section we show an equivalence between our definition of \emph{reversibly-reachably atomic} and \cite{gnacadja2011reachability}'s definition of \emph{explicitly-reversibly constructive} with an additional restriction.
  
  We first note that the following subclass of reachably atomic networks has such a property that each molecular species can be explicitly constructed from its atomic makeup via reactions.

\begin{defn}[Reversibly-Reachably Atomic]\label{rvsa} A chemical reaction network $\calC=(\Lambda, R)$ is \emph{reversibly-reachably saturated atomic} if:
    \begin{enumerate}
    \item It is reachably atomic with respect to some $\Delta\subseteq \Lambda$ via the decomposition matrix $\matr{D}$;
    \item \label{1bb}$\forall S_j\in\Lambda\setminus\Delta$, $\vec{d}_{S_j}\Rightarrow^*\{1S_j\}$.
    \end{enumerate}
\end{defn}

By definition, reversibly-reachably atomic networks $\subsetneq$ reachably atomic networks $\subsetneq$ subset atomic networks $\subsetneq$ primitive atomic networks; or equivalently, for a given $\calC$, primitive atomicity $\Rightarrow$ subset atomicity $\Rightarrow$ reachably atomicity $\Rightarrow$ reversibly-reachably atomicity, while the reversed arrows do not necessarily hold.

Correspondingly, we define the following language of encodings of networks with the reversibly-reachably atomic property:

\begin{defn}
    \begin{eqnarray*}
    \RevReachablyAtomic &=& \{\lag\Lambda,R\rag\mid (\exists\Delta\subseteq \Lambda)((\Lambda, R)\textrm{ is reversibly-reachably-} \\&&\textrm{atomic with respect to }\Delta)\}
    \end{eqnarray*}
\end{defn}

\begin{cor}\label{420} $\textsc{Reversibly-reachably atomic}\in\textsf{P}$.
\end{cor}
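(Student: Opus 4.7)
The plan is to use Theorem \ref{VisAtomP} as a subroutine: first invoke the polynomial-time $\ReachablyAtomic$ decider on $(\Lambda, R)$, rejecting if it rejects. When it accepts, Lemma \ref{uniquereachable} furnishes the unique atom set $\Delta\subseteq\Lambda$ and decomposition vectors $\{\vec{d}_S\}_{S\in\Lambda}$ as by-products of that decider, so what remains is to verify Condition (\ref{1bb}) of Definition \ref{rvsa}---that $\vec{d}_S\Rightarrow^*\{1S\}$ for every molecular species $S$---in polynomial time.

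For that verification I would run a bottom-up assembly procedure dual to the top-down decomposition procedure inside the proof of Theorem \ref{VisAtomP}. Maintain $M^{\mathrm{rev}}\subseteq\Lambda\setminus\Delta$, initially empty, of molecules already certified to satisfy $\vec{d}_S\Rightarrow^*\{1S\}$. At each round, search for any $S\in(\Lambda\setminus\Delta)\setminus M^{\mathrm{rev}}$ admitting a reaction $(\vec{r},\vec{p})\in R$ with $\vec{p}=\{1S\}$ and $[\vec{r}]\subseteq\Delta\cup M^{\mathrm{rev}}$; if found, insert $S$ into $M^{\mathrm{rev}}$ and continue, else halt and accept iff $M^{\mathrm{rev}}=\Lambda\setminus\Delta$. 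Each round either enlarges $M^{\mathrm{rev}}$ by one element or terminates, and the inner scan costs $O(|\Lambda|\cdot|R|)$, so the overall procedure is polynomial. Soundness is a straightforward induction on insertion order: when $S$ enters $M^{\mathrm{rev}}$ via some $(\vec{r},\{1S\})$, each molecular $X\in[\vec{r}]$ already enjoys $\vec{d}_X\Rightarrow^*\{1X\}$, and atom preservation yields $\sum_X\vec{r}(X)\vec{d}_X=\vec{d}_S$, so starting from $\vec{d}_S$ one partitions the atom budget among the instances in $\vec{r}$, assembles each of them in turn on its share, and then fires $(\vec{r},\{1S\})$ to land on $\{1S\}$.

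The heart of the argument is completeness, which I view as the reverse-direction analog of Lemma \ref{viswithdirect}: whenever the network is reversibly-reachably atomic and $M^{\mathrm{rev}}\subsetneq\Lambda\setminus\Delta$, a candidate $S$ for the loop condition must exist. I plan to prove this by a minimality argument. Choose $T\in(\Lambda\setminus\Delta)\setminus M^{\mathrm{rev}}$ minimising $\|\vec{d}_T\|_1$ and fix a reaction sequence witnessing $\vec{d}_T\Rightarrow^*\{1T\}$. Let $(\vec{r}_i,\vec{p}_i)$ be the first reaction in that sequence that produces any species in $(\Lambda\setminus\Delta)\setminus M^{\mathrm{rev}}$; such a reaction exists because $T$ itself lies there. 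All earlier configurations have support in $\Delta\cup M^{\mathrm{rev}}$, hence $[\vec{r}_i]\subseteq\Delta\cup M^{\mathrm{rev}}$. Atom preservation combined with the fact that the total atom content stays pinned at $\vec{d}_T$ yields $\sum_X\vec{p}_i(X)\vec{d}_X=\sum_X\vec{r}_i(X)\vec{d}_X\leq\vec{d}_T$; choosing $S\in[\vec{p}_i]\cap((\Lambda\setminus\Delta)\setminus M^{\mathrm{rev}})$ gives $\vec{d}_S\leq\vec{d}_T$, and the minimality of $\|\vec{d}_T\|_1$ collapses this to $\vec{d}_S=\vec{d}_T$. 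The atom budget of $\vec{p}_i$ is then saturated at $\vec{d}_S$, and since $\vec{d}_{(\cdot)}\neq\vec{0}$ for every species by subset atomicity, $\vec{p}_i$ can contain neither a second copy of $S$ nor any other species, forcing $\vec{p}_i=\{1S\}$. The main obstacle I expect is precisely this final pinning step: a priori the first new-molecule-producing reaction could spit out several new molecules or residual atoms, and it is exactly the minimality of $\|\vec{d}_T\|_1$ together with the nonzero-decomposition clause of subset atomicity that rules this out.
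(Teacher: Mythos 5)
Your proposal is correct and follows essentially the same route as the paper: invoke the polynomial-time $\ReachablyAtomic$ decider of Theorem~\ref{VisAtomP}, then run the dual bottom-up loop that repeatedly certifies a molecule $S$ admitting a reaction $(\vec{r},\{1S\})$ whose reactants lie in $\Delta$ plus the already-certified set (the paper tracks the complementary set $M''$, which is the same algorithm). Your explicit minimality/atom-budget argument for completeness is just a worked-out version of the reversed Lemma~\ref{viswithdirect} argument that the paper dismisses as ``analogous,'' so there is no substantive difference.
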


\begin{proof}
    The proof 
    highly resembles the previous proof of Theorem \ref{VisAtomP}. It takes polynomial time to decide if an instance $\lag\Lambda,R\rag\in\ReachablyAtomic$, as shown above. We extend the $\ReachablyAtomic$ decider to decide whether $\calC$, 
    having been confirmed to be reachably atomic, further satisfies $\vec{d}_S\Rightarrow^*\{1S\}$ for each $S\in \Lambda\setminus\Delta = M$:
    
    Construct $M''=M$. While $M''$ is not empty, iterate and try to find an $S''\in M''$ that satisfies this condition: $\exists (\vec{r},\vec{p})\in R$ s.t. $\mathbf{\vec{p}=\{1S''\}}$ and $\mathbf{[\vec{r}]\subseteq (M\setminus M')\cup \Delta}$. Note that all elements $S''$ in $M\setminus M''$ satsifies $\{1S''\}\Leftarrow^* \vec{d}_{S''}$, hence if $\{S''\}$ is the product of a reaction whose reactants 
     consist solely of elements in $(M\setminus M'')\cup\Delta$ 
     , then $S''$ itself satisfies $\{1S''\}\Leftarrow^* \vec{d}_{S''}$ as well. Keep track of $\vec{d}_{S''}$ and exclude such $S''$ from $M''$.
     
     If in some iteration we cannot find such $S''\in M''$, then \textsc{reject}; else, the iteration will finally halt excluding all such $S''$'s and making $M''$ empty, in which case we \textsc{accept}.
     
     Proof of correctness works analogously as the proof above, with the $\Rightarrow^*$ reversed to $\Leftarrow^*$ and disassociation reactions changed into association reactions for consideration. 
     
     We omit the pesudocode for the algorithm described above, as it highly resembles Lines $25$-$35$ of Algorithm \ref{VisAtomPolyTime} 
     (with $\vec{r}$ and $\vec{p}$ reversed) and can be reconstructed from the verbal description above. Lastly, the complexity is dominated by the reachably atomic deciding process.

\end{proof}

\begin{theorem}\label{rvsaequiv}
For a chemical reaction network $\calC$, the following are equivalent:
    \begin{enumerate}
        \item $\calC$ is reversibly-reachably  atomic; 
        \item\label{200} $\calC$ 
        is explicitly-reversibly  constructive, with 
        $p_i=1\ (\forall i\in[1,n])$ where $p_i:=|\mathscr{X}_i| = |\E^{-1}{(e_i)}|$.  \end{enumerate}   
\end{theorem}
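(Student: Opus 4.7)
The plan is to prove the equivalence by establishing each direction separately, translating between the reachability-based formulation of reversibly-reachably atomicity and the explicit-construction language of explicitly-reversibly constructiveness.

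For the forward direction $(1) \Rightarrow (2)$, I would first invoke Lemma~\ref{ReachableConstructive} to obtain that $\calC$ admits a core composition, yielding constructiveness. Lemma~\ref{uniquereachable} forces $\Delta$ and each $\vec{d}_S$ to be unique, and since distinct atoms $A \neq B$ satisfy $\vec{d}_A = \vec{e}_A \neq \vec{e}_B = \vec{d}_B$, we obtain $p_i = 1$ for all $i$. To show explicit constructibility of a composite $S$, I would analyze the final reaction of the sequence $\vec{d}_S \reach \{1S\}$ guaranteed by Definition~\ref{rvsa}(\ref{1bb}): since its product is $\{1S\}$ with $\|\vec{d}_S\|_1\geq 2$, this reaction must be either a binding reaction (giving explicit constructibility of $S$ directly) or an isomerization $S' \to S$ with $\vec{d}_{S'} = \vec{d}_S$, in which case a finite-descent argument on isomer predecessors (bounded by $|\Lambda|$) eventually encounters a binding reaction whose target begins an isomerization chain ending at $S$. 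Explicit destructibility of $S$ is symmetric from $\{1S\} \reach \vec{d}_S$. For atoms: since subset atomicity precludes any isomerization between atoms (distinct unit vector decompositions), the first reaction in $\vec{d}_S \reach \{1S\}$ consuming a given atom $A$ cannot be an isomerization and must therefore be a binding reaction, making $A$ explicitly constructive; dually, explicit destructivity is witnessed by the first reaction producing $A$ in $\{1S\} \reach \vec{d}_S$.

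For the backward direction $(2) \Rightarrow (1)$, I would use that any core composition is in particular near-core, combined with the $p_i = 1$ hypothesis, to verify the three conditions of Lemma~\ref{EConservative} (observing that condition~\ref{523} follows from explicit constructivity of atoms), thereby obtaining subset atomicity with $\Delta = \bigcup_i \E^{-1}(\vec{e}_i)$. To establish the two reachability conditions in Definition~\ref{rvsa}, I would induct on $\|\vec{d}_S\|_1$ to prove both $\{1S\} \reach \vec{d}_S$ and $\vec{d}_S \reach \{1S\}$ simultaneously for every composite $S$. For the inductive step of the first, explicit destructibility supplies an isomerization chain $S = Y_l \to \cdots \to Y_0$ (each $\vec{d}_{Y_i} = \vec{d}_S$ by atom conservation) followed by a dissociation $Y_0 \to Q$ with $\|Q\|_1 \geq 2$; whence each composite $T \in [Q]$ has $\|\vec{d}_T\|_1 < \|\vec{d}_S\|_1$, so the inductive hypothesis gives $\{1T\} \reach \vec{d}_T$, and composing yields $\{1S\} \reach \vec{d}_S$. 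The direction $\vec{d}_S \reach \{1S\}$ is dual, using explicit constructibility of $S$ to furnish a binding reaction $Q' \to Y_0'$ followed by an isomerization chain $Y_0' \to \cdots \to Y_l' = S$, with the reactants of $Q'$ reachable from $\vec{d}_S$ by the inductive hypothesis applied to each composite in $[Q']$.

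The main obstacle I anticipate is the induction in the backward direction: the measure must be chosen so that one isomerization chain plus one (dis)association step strictly reduces the problem. The quantity $\|\vec{d}_S\|_1$ works because isomerization preserves $\vec{d}$ (so the chain does not affect the measure) and dissociation splits the atomic content among at least two product species (so every composite product has strictly smaller norm). A secondary subtlety is the forward direction's finite-descent through isomers: one must argue the chain of isomer predecessors terminates at a binding reaction rather than cycling, which follows from finiteness of $\Lambda$ together with uniqueness of $\vec{d}_S$ ensuring that the minimal element of a would-be cycle must itself admit a binding antecedent.
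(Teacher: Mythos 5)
Your backward direction and your treatment of composite species in the forward direction are sound and essentially follow the paper's route (Lemma~\ref{ReachableConstructive} and Lemma~\ref{EConservative} for constructiveness, $p_i=1$, and subset atomicity, then analysis of the sequences $\{1S\}\reach\vec{d}_S$ and $\vec{d}_S\reach\{1S\}$); your backward direction in fact organizes the paper's minimal-counterexample descent as an explicit strong induction on $\|\vec{d}_S\|_1$, which is the same idea presented more transparently.

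There is, however, a genuine gap in the forward direction where you handle the elementary species. You claim that the first reaction of $\vec{d}_S\reach\{1S\}$ consuming an atom $A$ ``cannot be an isomerization and must therefore be a binding reaction.'' That dichotomy is false: a binding reaction must have a single product ($\|\vec{p}\|_1=1$), and a reaction consuming $A$ may well have two or more reactants and two or more products (an exchange reaction such as $A+M\to M'+B$), which is neither an isomerization nor a binding reaction; such reactions are perfectly compatible with reversibly-reachable atomicity, so nothing prevents the first $A$-consuming step of your sequence from being of this type. The dual claim (that the first reaction producing $A$ in $\{1S\}\reach\vec{d}_S$ is a dissociation) fails for the same reason, so explicit constructivity and destructivity of atoms, i.e.\ condition (3) of Definition~\ref{ere}, are not established. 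To close the gap you need the minimality/descent idea that the paper uses (and that you yourself use in the backward direction): choose a molecule $S$ with $A\in[\vec{d}_S]$ minimizing $\|\vec{d}_S\|_1$, walk backwards along $\vec{d}_S\reach\{1S\}$ to the last configuration that is not a single molecule; the reaction executed there is a binding reaction whose reactant complex equals that configuration, whose atom content $\vec{d}_S$ is spread over at least two species, so any molecular member containing $A$ would be a strictly smaller molecule containing $A$, contradicting minimality; hence the free atom $A$ itself appears among the reactants of that binding reaction. A symmetric walk along $\{1S\}\reach\vec{d}_S$ gives explicit destructivity. Incidentally, your finite-descent through ``isomer predecessors'' for composite species becomes immediate if you trace backwards along the given finite reaction sequence rather than through arbitrary reactions of $R$, which removes the cycling worry you raise.
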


Note that condition (\ref{200}) translates to ``no $\mathcal{E}$-elementary species is $\mathcal{E}$-isomeric''.

 \begin{proof} 
    Proof is done by similar techniques used for Lemma \ref{viswithdirect}: that is, assuming otherwise, then there will be an infinite descending chain of species ordered by number of atoms in their respective decomposition. This contradicts the fact that the set of species is finite.
    \begin{description}
     \item[$1 \implies 2$:]
         By Definition $\ref{rvsa}$, $\calC$ is reachably atomic with respect to some $\Delta\subseteq \Lambda$. Let $|\Delta|=n$, then $\calC$ is constructive by lemma (\ref{ReachableConstructive}). By Lemma (\ref{EConservative}), since a reachably atomic network with $|\Delta|=n$ is subset atomic, $p_i=1$ for each $i\in [1,n]$.
         
         the weakly reversibility of $\{1S_i\}\Rightarrow^* \vec{d}_{S_i}$ ensures that each composite species is both explicitly constructible and explicitly destructible, because it is ensured that each molecular species directly or indirectly participates in at least one disassociation (resp. binding) reaction as reactant (resp. product);

          We also claim that each elementary species is both explicitly constructive 
          and explicitly destructive:
          
          suppose for the sake of contradiction $A_i\in\Delta$ is not explictly destructive. Then for any reactions where $A_i$ participates as product, the reactant has to contain at least $2$ species. In particular, since $A_i\in [\vec{d}_{S_i}]$ for some $S_i\in\Lambda\setminus\Delta$, if we consider the  
          sequence $\{1S_i\}\Rightarrow^*\vec{d}_{S_i}$, there must be a reaction in this sequence written as 
          
          \begin{equation}\label{YQ}\sum_{j=1}^k a_jY_j \rightarrow  \sum_{j=1}^s b_sQ_s
          \end{equation}
          with the multisets $\{a_jY_j\}_{j=1}^{k}\neq \{b_jQ_j\}_{j=1}^{s}$ and  $A_i\in\{Q_j\}_{j=1}^{s}$, s.t. 
          
          \begin{equation}\label{smallerYQ}\exists S_j\in(\Lambda\setminus\Delta)\cap \{Y_j\}_{j=1}^{k}\ with\ A_i\in[\vec{d}_{S_j}].\end{equation}

          This is because $A_i$ cannot be directly obtained from $\{1S_i\}$, and hence must be obtained from some intermediate molecular species. Apparently $||\vec{d}_{S_j}||_1 < \|\vec{d}_{S_i}\|_1 $ by conservativity of atoms. But then consider the decompsoition series $\{1S_i\}\Rightarrow^*\vec{d}_{S_j}$ and apply the same argument, we obtain $S'_{j}\in \Lambda\setminus\Delta$ s.t. $\|\vec{d}_{S'_{j}}\|_1<||\vec{d}_{S_j}||_1 < \|\vec{d}_{S_i}\|_1$. By the finiteness of $\Lambda\setminus\Delta$, this repeated process terminates with a ``smallest'' molecule containing $A_i$; that is, there exists some $S_m\in\Lambda\setminus\Delta$ where $A_i\in[\vec{d}_{S_m}]$ and $\forall n\neq m$, if $A_i\in[\vec{d}_{S_n}]$, then $\|\vec{d}_{S_m}\|_1<\|\vec{d}_{S_n}\|_1$. 
          
          But now we once again apply the argument (involving equations (\ref{YQ},\ref{smallerYQ}) above, getting some $S'_{m}\in \Lambda\setminus\Delta$ with $A_i\in [\vec{d}_{S'_m}]$ s.t. $\|\vec{d}_{S'_m}\|_1<\|\vec{d}_{S_m}\|_1$, a contradiction.
          
          Symmetric argument with ``products'' and ``reactants'' swapped, given the reachability of $\vec{d}_{S_i}\Rightarrow^*\{1S_i\}$ for each $S_i\in\Lambda\setminus\Delta$, proves that all $A_i\in\Delta$ also have to be explicitly constructive.
      
     \item[$2 \implies 1$:]
     
         Because $\calC$ admits a core composition, in particular it admits a near-core composition. The one-to-one condition allows us to define the set of atoms $\Delta$ and decomposition $\vec{d}$ as (\ref{FindDelta}) and $(\ref{54432})$ in the proof of Lemma (\ref{EConservative}).
         
         We first argue that condition $(\ref{primitive3})$ of Definition $\ref{primitiveatomic}$ holds. Suppose not, then $\exists A_i\in\Delta$ s.t. $\forall S_j\in\Lambda\setminus\Delta$, $A_i\not\in[\vec{d}_{S_j}]$. But then consider the binding reaction $Q\rightarrow Y$ where $|Y|=1$ and $A_i\in [Q]$; such a reaction has to exist because of the explicit constructivity of $A_i$. $Y$ has to be a molecule containing $A_i$ in $[\vec{d}_{Y}]$, a contradiction.
         
         By definition $\ref{54432}$, the above implies that 
         $(\forall i\in [1,n])$ $(\exists S\in \Lambda\setminus\Delta = \Lambda\setminus \bigcup_{i=1}^{n}\E^{-1}(\vec{e}_i)) \vec{d}_S(A_i)=(\E(S))_i> 0$.
         Together with the condition $p_i=1\ (\forall i\in[1,n])$ and the near-core property, we use the equivalence in Lemma (\ref{EConservative}) to conclude that $\calC$ is subset atomic.

         Next, we argue that $(\ref{200})\Rightarrow \forall S_j\in\Lambda\setminus\Delta$, $\vec{d}_{S_j}\Rightarrow^* \{1S_j\}$ and $\{1S_j\}\Rightarrow^*\vec{d}_{S_j}$. Note that the latter reachability would also imply the reachable-atomicity of $\calC$, given that $\calC$ is subset atomic and that atomic decompisition is natually unique for reachably atomic networks, by Lemma \ref{uniquereachable}.
         
         Consider an arbitrary $S_i\in\Lambda\setminus\Delta$. For the sake of contradiction, assume $\{1S_i\}\not\Rightarrow^*\vec{d}_{S_i}$. Since $S_i$ is explicitly destructible, some reaction sequence starting with $S_i$ has to eventually split into a complex $Q_i$ with $|Q_i|\geq 2$. On the other hand, since $\calC$ is subset atomic, any reaction sequence starting with $\{1S_i\}$ either reaches $\vec{d}_{S_i}$ or reaches some configuration $\vec{c}$ with $[\vec{c}]\cap (\Lambda\setminus\Delta)\neq\emptyset$. Since we assumed that $1S_i\not\Rightarrow^*\vec{d}_{S_i}$, it must be the second case, and in particular, there exists $S_j\in [Q_i]\cap (\Lambda\setminus\Delta)$. Apparently $\|\vec{d}_{S_j}\|_1 < \|\vec{d}_{S_i}\|_1$ by conservativity of number of atoms.
         
         We apply the same argument to $S_j$ and obtain $S'_j$ s.t. $\|\vec{d}_{S'_j}\|_1<\|\vec{d}_{S_j}\|_1 < \|\vec{d}_{S_i}\|_1$. Repeat this process, and by finiteness of $\Lambda\setminus\Delta$, we'll find some $S_m\in\Lambda\setminus\Delta$ satisfying $\forall m'\neq m$, $||\vec{d}_{S_{m'}}||_1>||\vec{d}_{S_m}||_1$. Application of the same argument to $S_m$ yields the contradiction as to the size of decomposition vector.
         
         The same ``infinite descending chain'' argument applies to the other direction, with all the arrows reversed and the explicit constructibility property applied. This proves the reachable-atomicity as well as condition (\ref{1bb}) in Definition $\ref{rvsa}$.\qed
 \end{description}
 \end{proof}

\begin{cor}
The problem ``Given a chemical reaction network, is it explicitly reversibly constructible with no isomeric elementary species'' 
 as well as the problem ``Given a chemical reaction network $\calC$, is $\calC$ reversibly-reachably atomic'' are both polynomial time decidable.
\end{cor}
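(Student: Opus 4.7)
The plan is to observe that this corollary is essentially a direct consequence of two results already established in the excerpt: Theorem \ref{rvsaequiv}, which asserts the logical equivalence of ``reversibly-reachably atomic'' and ``explicitly-reversibly constructive with no isomeric elementary species,'' together with Corollary \ref{420}, which states that $\RevReachablyAtomic \in \textsf{P}$. So the two decision problems in the corollary accept exactly the same set of encodings of chemical reaction networks, and therefore any polynomial-time decider for one is automatically a polynomial-time decider for the other.

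Concretely, I would first invoke Theorem \ref{rvsaequiv} to conclude that for every chemical reaction network $\calC$, $\calC$ is reversibly-reachably atomic if and only if $\calC$ is explicitly-reversibly constructive with $p_i = 1$ for all $i \in [1,n]$ (no isomeric elementary species). This gives set-theoretic equality of the two languages in question. I would then cite Corollary \ref{420} verbatim to get a polynomial-time algorithm deciding membership in $\RevReachablyAtomic$; by the equivalence just noted, the same algorithm decides the ``explicitly-reversibly constructive with no isomeric elementary species'' language within the same polynomial time bound.

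There is no real obstacle here since all the work has been done: Corollary \ref{420} provides the algorithm (itself a short extension of Algorithm \ref{VisAtomPolyTime} that adds a symmetric ``bottom-up'' loop checking $\vec{d}_S \Rightarrow^* \{1S\}$ by reversing reactant/product roles), and Theorem \ref{rvsaequiv} provides the semantic bridge. The only thing worth being careful about is to state explicitly that the equivalence from Theorem \ref{rvsaequiv} is a biconditional on \emph{instances} (not merely on classes), so the polynomial-time reduction between the two decision problems is in fact the identity on inputs, which keeps the complexity bound unchanged. Thus both problems lie in $\textsf{P}$, completing the proof.
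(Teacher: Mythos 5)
Your proposal is correct and matches the paper's proof exactly: the paper also derives this corollary as immediate from Theorem \ref{rvsaequiv} (the instance-level equivalence of the two properties) together with Corollary \ref{420} (polynomial-time decidability of $\RevReachablyAtomic$). Your added remark that the equivalence is a biconditional on instances, making the reduction the identity map, is a harmless elaboration of the same argument.
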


\begin{proof}
Immediate from Corollary \ref{420} and Theorem \ref{rvsaequiv}.
\end{proof}
}

\opt{normal}{\section{Atomicity with Core Composition}\label{acc}\SectionAtomicityWithCoreCompositionAsAppendix}

\section{Open Problems}\label{openopen}

\begin{conj}
\SubsetAtomic\ $\in\NP$-$\Complete$.
\end{conj}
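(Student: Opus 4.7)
Since $\SubsetAtomic \in \NP$ by Corollary~\ref{cor:SFANPNP}, it suffices to show $\NP$-hardness. The plan is to reduce from $\MonotoneOneThree$ by strengthening the construction of Proposition~\ref{PartitionSFA}.

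I would first verify that the bare construction of Proposition~\ref{PartitionSFA} does \emph{not} by itself reduce $\MonotoneOneThree$ to $\SubsetAtomic$. Without the atom set being fixed, ``low-dimensional'' choices such as $\Delta = \{T\}$ or $\Delta = \{T, P\}$ can satisfy the atom-conservation constraints even on unsatisfiable instances. For example, with $\Delta = \{T\}$ every species has a scalar decomposition $k_S \cdot \vec{e}_T$ and the reaction constraints collapse to a small linear system in the $k_S$'s, which typically admits a positive integer solution with $k_T = 1$ and each molecular $k_S \geq 2$ regardless of satisfiability. Worse, even the two-atom choice $\Delta = \{T, P\}$ admits a valid decomposition (one may check that $\vec{d}_F = 4\vec{e}_T$, $\vec{d}_{S_i} = 3\vec{e}_T + \vec{e}_P$ uniformly in $i$, and $\vec{d}_Q = c\vec{e}_T$ for any $c \geq 2$ satisfies every reaction of the construction) on arbitrary instances.

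To fix this I would augment the construction with forcing gadgets. The easy half is a \emph{molecule-forcing gadget}: for each intended molecule $S \in \{S_i, X_i\}$ add a fresh species $S^{\prime}$ and the reaction $S \to S^{\prime}$. This enforces $\vec{d}_S = \vec{d}_{S^{\prime}}$, so if $S \in \Delta$ then $\vec{d}_{S^{\prime}} = \vec{e}_S$, giving $\|\vec{d}_{S^{\prime}}\|_1 = 1 < 2$ and violating the molecule condition on $S^{\prime}$. Thus each $S_i$ and $X_i$ must be a molecule, forcing $\Delta \subseteq \{T, F, P, Q\}$.

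The harder half is to rule out the remaining proper nonempty subsets $\Gamma \subsetneq \{T, F, P, Q\}$. A naive ``force $A \in \Delta$'' reaction of the form $A \to A^{\prime}$ instead makes $A$ a molecule, the wrong direction. My plan is to add, for each problematic $\Gamma$, a small witness subnetwork whose atomic-balance equations, combined with the original reactions, are integer-infeasible under $\Delta = \Gamma$ yet remain consistent with $\Delta = \{T, F, P, Q\}$ equipped with the 1-in-3 decomposition of Proposition~\ref{PartitionSFA}. A case analysis over the fourteen proper nonempty subsets of $\{T, F, P, Q\}$ would then complete the reduction. The main obstacle is the simultaneity of these witness gadgets: any reaction tight enough to eliminate some $\Gamma$ tends to also overconstrain the intended decomposition, while looser reactions continue to admit spurious atom sets of the form above. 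If the direct strengthening proves intractable, a fallback is to reduce from a different $\NP$-hard problem whose existential structure more closely matches the ``choose $\Delta$'' flavor of $\SubsetAtomic$---for example, an integer feasibility problem with prescribed support, or Exact Cover by 3-Sets---so that the two nested existential choices in $\SubsetAtomic$ (of $\Delta$ and of $\matr{D}$) correspond to distinct combinatorial choices in the source problem.
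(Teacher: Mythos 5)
The statement you are addressing is not proved in the paper at all: it appears only as a conjecture in Section~\ref{openopen}, and the authors explicitly state that the lower bound for \SubsetAtomic\ remains open (the paper proves only membership in $\NP$ and strong $\NP$-completeness of \SubsetFixedAtomic). So there is no paper proof to compare against; the question is whether your proposal itself constitutes a proof, and it does not. The parts you actually carry out are correct: $\NP$ membership follows from Corollary~\ref{cor:SFANPNP}; your observation that the bare construction of Proposition~\ref{PartitionSFA} fails as a reduction to \SubsetAtomic\ is right (your $\Delta=\{T,P\}$ assignment $\vec{d}_F=4\vec{e}_T$, $\vec{d}_{S_i}=3\vec{e}_T+\vec{e}_P$, $\vec{d}_Q=c\,\vec{e}_T$, and consequently $\vec{d}_{X_i}=(3+c)\vec{e}_T$, balances all three reaction families on every instance, satisfiable or not, which matches the indeterminacy the authors themselves point out after the conjecture); and the isomerization gadget $S\to S'$ with $S'$ fresh does force both $S$ and $S'$ out of any admissible $\Delta$, since an atom's decomposition is a unit vector while a non-atom's must have size at least $2$.

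The gap is the entire hardness argument. After the molecule-forcing gadgets you still must show that on unsatisfiable instances no nonempty $\Gamma\subsetneq\{T,F,P,Q\}$ admits a valid decomposition, or else add gadgets eliminating each such $\Gamma$ while preserving the intended decomposition for $\Delta=\{T,F,P,Q\}$ on satisfiable instances --- and this is precisely the step you leave as a ``plan'' with an acknowledged obstacle: any constraint tight enough to exclude a spurious $\Gamma$ threatens to overconstrain the intended witness. No candidate gadget is exhibited, no case analysis over the fourteen subsets is performed, and your own $\{T,P\}$ example shows the danger is concrete rather than hypothetical; the fallback of reducing from a different $\NP$-hard problem is likewise only named, not executed. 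As it stands, the proposal establishes $\SubsetAtomic\in\NP$ and a sensible line of attack on hardness, but not $\NP$-hardness, so the conjecture remains open.
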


One may note that there are two sources of indeterminancy in the problem $\SubsetAtomic$: the choice of $\Delta$ and 
$\matr{D}$. For example, the network constructed in the proof of $\NP$-hardness of $\SubsetFixedAtomic$ would remain subset atomic if we define $\Delta = \{T,F\}$, and let $\vec{d}_P=\vec{d}_Q = \{kT , sF\}$ for any $k,s\geq 2$.   

There is a formal sense in which chemical reaction networks have been shown to be able to compute functions $f:\N^k\to\N$~\cite{CheDotSolNaCo} and predicates $\N^k\to\{0,1\}$~\cite{angluin2006passivelymobile}.
A function/predicate can be computed ``deterministically'' 
(i.e., regardless of the order in which reactions occur) 
$\iff$ it is semilinear (see~\cite{ginsburg1966} for a definition).

\begin{prob}\label{powerpower}
What semilinear functions/predicates can atomic chemical reaction networks compute \emph{deterministically}, and how efficiently?
What general functions/predicates can atomic chemical reaction networks compute \emph{with high probability}, and how efficiently?
\end{prob}

\newcommand{\allsemilinear}{In fact, in their proof of the 
lemma that any semilinear function $f:\N^k\rightarrow \N$ can be stably computed by a chemical reaction network, Chen, Doty and Soloveichik \cite{CheDotSolNaCo} designed a chemical reaction network which can be made primitive-atomic by a slight modification: 
%
\begin{eqnarray}
T_i + \widehat{Y}_{i,j}^P&\rightarrow& T_i + Y_{i,j}^P + Y_j\label{704}\\
F_i + Y_{i,j}^P &\rightarrow &F_i + M_{i,j}\\
Y_j + M_{i,j} &\rightarrow& \widehat{Y}_{i,j}^P\\
T_i + \widehat{Y}_{i,j}^C&\rightarrow& T_i + Y_{i,j}^C\\
F_i  + Y_{i,j}^C
&\rightarrow& F_i+ \widehat{Y}_{i,j}^C\\
Y_{i,j}^{P}+Y_{i,j}^{C}&\rightarrow& K_j\label{709}\\
K_j + Y_j&\rightarrow& W_j\label{badboy}\end{eqnarray}
%

The modified chemical reaction network $(\ref{704})\sim (\ref{709})$ still stably computes the same function $f$, as the waste product $W_j$'s do not participate in any other reactions. This network is 
mass-conserving, via the mass distribution function
\begin{eqnarray}
    \vec{m}: \{T_i,F_i,Y_j,K_j,W_j,Y_{i,j}^P,\widehat{Y}_{i,j}^P, Y_{i,j}^C, \widehat{Y}_{i,j}^C, M_{i,j}\}_{i,j}&\rightarrow& \R\\
    T_i,F_i,Y_j,Y_{i,j}^P, M_{i,j},Y_{i,j}^C,\widehat{Y}_{i,j}^C &\mapsto& 2,\\
    \widehat{Y}_{i,j}^P, K_j&\mapsto& 4,\\
    W_j &\mapsto& 6
\end{eqnarray}

And by setting $\Delta = \{A\}$ and $\vec{d}_{S}(A) = m(S)$ $(\forall S\in \{T_i,F_i,Y_j,K_j,W_j,Y_{i,j}^P,\widehat{Y}_{i,j}^P, Y_{i,j}^C, \widehat{Y}_{i,j}^C, $ $M_{i,j}\}_{i,j} = \Lambda$), we find that the network above is also primitive atomic. This shows that any semilinear function can be stably computed by a primitive-atomic chemical reaction network.

Redefining $\Lambda \leftarrow \Lambda\cup \Delta$, we obtain a subset-atomic network that stably computes $f$, which implies that the computation power of subset-atomic chemical reaction networks are no weaker than primitive-atomic chemical reaction networks.
} 
\begin{rem}\label{PASAWorksForAllSemilinear}\normalfont{} A partial answer for Problem \ref{powerpower} based on results in \cite{CheDotSolNaCo} says that primitive atomic networks and subset atomic networks can stably compute any seminilear functions \opt{sub,subn}{(For the proof of this, see Section \ref{AppAA})}, but it is not obvious  how to modify the subset-atomic network into reachably-atomic with the stably-computation property maintained, or whether it is even possible to do so. 



\opt{normal}{\allsemilinear}
\end{rem}

\paragraph{Acknowledgements}{}
The authors are thankful to Manoj Gopalkrishnan, Gilles Gnacadja, Javier Esparza, Sergei Chubanov, Matthew Cook, and anonymous reviewers for their insights and useful discussion.

\opt{sub,subn}{
    \newpage
}

\bibliography{tam,MPP}
\newpage{}

\opt{sub,subn}{
    \section{Appendix}
        \appendix
        \section{More Discussion on Related Works}
        \label{relatedbutnotsame}
        \Egrelatedbutnotsameone
        
        \Egrelatedbutnotsametwo
        \section{Proof Details of Theorems}\label{AppAA}

    \newcommand{\StatementofUniquenessOfDecompositionInReachAtomicCRNJustTheStatement}{If $(\Lambda, R)$ is reachably atomic, then the choice of $\Delta$ with respect to which $(\Lambda, R)$ is reachably atomic is unique. Moreover, $(\forall S\in\Lambda) \vec{d}_S$ is unique, meaning $$(\forall S\in \Lambda)(\forall \vec{c}\in\N^\Lambda) (((\{1S\}\Rightarrow^*\vec{c})\wedge (\vec{c}\neq\vec{d}_{S}))\Rightarrow [\vec{c}]\cap (\Lambda\setminus\Delta)\neq\emptyset)$$
    }

        \paragraph{Lemma \ref{uniquereachable}: \StatementofUniquenessOfDecompositionInReachAtomicCRNJustTheStatement}\label{statementofUniquenessOfDecompositionInReachAtomicCRNJustTheStatement}
        
        \begin{proof}
        \DetailedProofOfUniquenessOfDecompositionInReachablyAtomic
        \end{proof}
        
        \paragraph{Proposition \ref{primitivemcequiv}: \StatementofMCPAEquivalence}\label{statementofMCPAEquivalence}
        
        \begin{proof}
        \DetailedProofOfMCPAEquivalence
        \end{proof}
        
        \RemarkOnMCPAEquivalence
        
        \paragraph{Proposition \ref{SFANP}: \StatementOfLemmaSSFixedAReducibleToIP}\label{statementOfLemmaSSFixedAReducibleToIP}

        \newcommand{\NotationOfProofSSFixedAReducibleToIP}{
Before proving the lemma, we make the following notations for the sake of simplicity.
 
 \begin{nt}\label{not1}\normalfont{}
 Throughout the proof of Proposition \ref{SFANP}, Corollary \ref{cor:SFANPNP} and Corollary \ref{cor:SFANPNP}, we assume $|\Lambda| = m$, $|R| = k$, $|\Delta| = n$. 
 We label the elements of $\Lambda$ as $S_1,S_2,\cdots, S_m$, the elements of $R$ as $(\vec{r}_1,\vec{p}_1),(\vec{r}_2,\vec{p}_2),\cdots,(\vec{r}_k,\vec{p}_k)$.
 Without loss of generality, assume the elements of $\Delta$ are
  $S_{m-n+1},\cdots,S_{n}$, 
 which for ease of notation we refer to as $A_1,A_2,\ldots,A_n$.
 Thus the elements of $\Lambda\setminus\Delta$ are $S_1,S_2,\cdots, S_{m-n}$. 
 For each $t\in\{1,\ldots,k\}$, let $\vv_t = \vp_t - \vr_t$ denote the $t$'th \emph{reaction vector}, so that $\vv_t(S) \in \Z$ denotes the amount by which the count of species $S\in\Lambda$ changes if the $t$'th reaction executes once.
 
 For the construction of the desired linear system, let $x_{ij}$ denote $\vec{d}_{S_{i}}(A_{j})$ for $S_i\in\Lambda$, $A_{j}\in \Delta$, $\vec{d}$ being a candidate decomposition vector. 
 \end{nt}

}

        \NotationOfProofSSFixedAReducibleToIP

Now we come back to the proof of Proposition \ref{SFANP}.
        \begin{proof}
        \DetailedProofSSFixedAReducibleToIP
        \end{proof}

        \paragraph{Corollary \ref{cor:SFANPNP}: \SubsetAtomic\ $\in$ $\NP$}\label{corollarySSAInNPStatementOnly}
        
        \begin{proof}
        \DetailedProofOFSAInNP
        \end{proof}
  
  \newcommand{\StatementOnlyMonotoneThreeOnetoSSFA}{
$\MonotoneOneThree$ is polynomial-time many-one reducible to $\SubsetFixedAtomic$}

        \paragraph{Proposition \ref{PartitionSFA}: \StatementOnlyMonotoneThreeOnetoSSFA}\label{statementOnlyMonotoneThreeOnetoSSFA}
        
        \begin{proof}
        \DetailedProofMonotoneOneThreePolyRedSSFA
        \end{proof}
        
        \paragraph{Remark \ref{SSFARemainNPCWhenBiMolecular}: Details}\label{sSFARemainNPCWhenBiMolecular}
        
        \LongCommentSSFARemainsNPCWhenBiMolecular
        
        \paragraph{Lemma \ref{viswithdirect}: \StatementLemmaShowingExistenceOfOneStepDecomposibleElement}\label{statementLemmaShowingExistenceOfOneStepDecomposibleElement}
        
        \begin{proof}
        \DetailedProofLemmaShowingExistenceOfOneStepDecomposibleElement
        \end{proof}
        
        \paragraph{Theorem \ref{VisAtomP}: \ReachableAtomicIsInP}\label{reachableAtomicIsInP}
        \begin{proof}
        We shall \ProofReachableAtomicIsInP
        \end{proof}
        
        \paragraph{Proposition \ref{ReachableReach}: \StatementOfTheoremReachableReachIsPspaceComplete}\label{reachableReach}
        
        \begin{proof}
        \DetailedProofOfReRePspaceComplete
        \end{proof}
        
        \paragraph{Remark \ref{PASAWorksForAllSemilinear}}
        
        \begin{proof}
        \allsemilinear
        \end{proof}

        \section{Atomicity with Core Composition}\label{acc}
        
        \SectionAtomicityWithCoreCompositionAsAppendix
        
        \section{Auxiliary Examples and Remarks}\label{Auxiliary}
        \RemarkWhyConditionTwo
        
        \ExampleOfStoichiometricMatrix
        
        \EgAndRemarkOnDecompositionMatrix
        
        \EgConfigGraph
        
        \cantreversable
    }

\end{document}